\def\dOi{9(3:13)2013}
\theoremstyle{plain}
\theoremstyle{plain}
\theoremstyle{plain}
\theoremstyle{plain}
\theoremstyle{plain}
\newcommand{\undecid}{undecidable}
\newcommand{\decid}{decidable}
\newcommand{\NP}{NP}
\newcommand{\NPc}{NP-compl.}
\newcommand{\sing}{1}
\newcommand{\reg}{{\sc reg}}
\newcommand{\singreg}{\sing{$\reaches$}\reg}
\newcommand{\singsing}{\sing{$\reaches$}\sing}
\newcommand{\regsing}{\reg{$\reaches$}\sing}
\newcommand{\regreg}{\reg{$\reaches$}\reg}
\newcommand{\stateless}{stateless}
\newcommand{\weak}{weak}
\newcommand{\strnormed}{strong\-ly nor\-med}
\newcommand{\strnormedness}{strong nor\-med\-ness}
\newcommand{\normed}{nor\-med}
\newcommand{\normedness}{nor\-med\-ness}
\newcommand{\unnormed}{un\-nor\-med}
\newcommand{\reaches}{\leadsto}
\newcommand{\aut}[1]{{\mathcal #1}}
\newcommand{\MPDA}{MPDA}
\newcommand{\MPDAs}{MPDAs}
\newcommand{\s}[1]{\text{#1}}
\newcommand{\al}[1]{\mathsf{#1}}
\renewcommand{\c}[1]{\mathit{#1}}
\newcommand{\confset}{\al{S}}
\newcommand{\confless}[3]{\langle #1, #2, #3 \rangle}
\newcommand{\config}[4]{\langle #1, #2, #3, #4 \rangle}
\newcommand{\trans}[1][]{\xrightarrow{#1}}     % \stackrel{#1}{\longrightarrow}}
\newcommand{\eps}{\varepsilon}
\newcommand{\Nat}{\mathbb{N}}
\newcommand{\size}{\mbox{weight}}
\renewcommand{\a}{\alpha}
\renewcommand{\b}{\beta}
\newcommand{\g}{\gamma}
\newcommand{\smallsection}[1]{\paragraph{\bf #1}}
\newcounter{examplecounter}
\newenvironment{proofof}[1]{\bigskip \noindent{\bf Proof of #1.}}{\hfill \qed}
\newcommand{\restate}[2]{\noindent {\bf #1.}\ #2 \smallskip}
\begin{document}

\title{Reachability problem for weak multi-pushdown automata}

\author[W.~Czerwi{\'n}ski]{Wojciech Czerwi{\'n}ski\rsuper a}
\address{{\lsuper a}Universit\"at Bayreuth}
\email{wczerwin@mimuw.edu.pl}

\author[P.~Hofman]{Piotr Hofman\rsuper b}
\address{{\lsuper{b,c}}Institute of Informatics, University of Warsaw}
\email{\{ph209519,sl\}@mimuw.edu.pl}

\author[S.~Lasota]{S{\l}awomir Lasota\rsuper c}
%\address{Institute of Informatics, University of Warsaw}
%\email{sl@mimuw.edu.pl}

\thanks{{\lsuper a}The first author acknowledges a partial support by the Polish MNiSW grant N N206 568640.}
\thanks{{\lsuper{b,c}}The other authors acknowledge a partial support by the Polish MNiSW grant N N206 567840.}

\keywords{multi-pushdown automata, reachability, regular sets, pushdown automata}
\subjclass{F.4.2, F.1.1., F.3.1, F.4.3}

\ACMCCS{[{\bf Theory of computation}]: Models of computation; Formal
  languages and automata theory--- Formalisms---Rewrite systems;
  Semantics and reasoning---Program reasoning---Program verification}

\begin{abstract}
This paper is about reachability analysis in a restricted subclass of
multi-pushdown automata.  We assume that the control states of an
automaton are partially ordered, and all transitions of an automaton
go downwards with respect to the order.  We prove decidability of the
reachability problem, and computability of the backward reachability
set.  As the main contribution, we identify relevant subclasses where
the reachability problem becomes NP-complete.  This matches the
complexity of the same problem for communication-free vector addition
systems, a special case of stateless multi-pushdown automata.
\end{abstract}

\maketitle

\section{Introduction}

This paper is about reachability analysis of \emph{multi-pushdown systems}, i.e.,
systems with a global control state and multiple stacks.
The motivation for our work is twofold.
On one side, a practical motivation coming from context-bounded analysis of recursive concurrent 
programs~\cite{QR05,LR09,ABQ11}.
On the other side, a theoretical motivation coming from partially-commutative context-free grammars, 
developed recently in~\cite{CFL09,CFL11,CL12}.\vfill

\smallsection{Context bounded analysis.}
Multi-pushdown systems  % , i.e., systems with a global state space and multiple stacks, are
may be used as an abstract model of concurrent programs with recursive procedures.
As multi-pushdown systems are a Turing-complete model of computation, they are only applicable for
verification under further tractable restrictions.
One remarkably successful restriction is imposing a bound on the number of context switches; 
between consecutive context switches,
the system may only perform operations on one stack (local operations).
In~\cite{QR05}, the context-bounded reachability has been shown decidable, 
by reduction to reachability of ordinary pushdown systems~\cite{BEM97}.
This line of research, with applications in formal verification, has been continued successfully, e.g., 
in~\cite{BESS06,LR09,ABQ11}.

\smallsection{Weak control states.}
As our starting point we observe that if the number of context switches is bounded,
one may safely assume that the control state space is \emph{weak}, in the sense that there is a partial order 
on control states such that transitions go only downwards with respect to the order.
Indeed, the local state space of every thread may be eliminated using a stack, and the global control
state essentially enumerates context switches.
Roughly speaking, the model investigated in this paper extends the above one with respect to operations allowed
between two context switches, namely, we do not restrict these operations to one stack only. 
Thus, if $k$ is the number of stacks, we assume that transitions of a system are of the following form: 
\begin{align}  \label{eq:trans-intro}
q, \ \s{X} \ \trans[a] \ q', \ \a_1, \ldots, \a_k,
\end{align}
to mean that in control state $q$, symbol $\s{X}$ is popped from one of the stacks, and 
sequences of symbols $\a_1$, \ldots, $\a_k$, respectively, are pushed on stacks.
Wlog. one may assume that the symbols of different stacks are different.

\smallsection{Partially-commutative context-free grammars.}
A special case of the model investigated in this paper is \emph{stateless} multi-pushdown systems.
This is still a quite expressible model that  subsumes, among the others, context-free graphs
(so called Basic Process Algebra~\cite{handbook}) and communication-free Petri nets
(so called Basic Parallel Processes~\cite{handbook}, or commutative context-free graphs).
In the stateless case, transitions~\eqref{eq:trans-intro} may be understood as productions of a grammar,
with the nonterminal symbols on the right-hand side (stack symbols) subject to a commutativity law.
More precisely, for any two symbols $\s{X}$ and $\s{Y}$ from different stacks, we impose
the commutativity law
$$\s{X} \s{Y} \ = \ \s{Y} \s{X} .$$
One easily observes that this is a special case of \emph{independence relation} over nonterminal symbols,
as defined in trace theory~\cite{bookoftraces}\footnote{Note however that the independence is imposed on nonterminal
symbols, and not on input letters, as usually in trace theory.}.
In multi-pushdown systems, the \emph{dependency relation} (the complement of independence relation)
is always transitive.
A general theory of context-free grammars modulo dependency relation that is not necessarily transitive, has been studied 
recently in~\cite{CL12}; complexity of bisimulation equivalence checking has been investigated
in~\cite{CFL09,CFL11}. The present paper complements these results by focusing on reachability analysis.

\smallsection{Contributions.}
We investigate the reachability problem for \weak\ multi-pushdown automata that asks, given two sets of configurations,
namely the source and target set, whether there is a sequence of transitions leading from some source configuration to
some target one. For finite representation of the source and target set, we restrict to regular sets of configurations only.
Roughly speaking, in this paper we present an almost complete map of decidability and complexity results, with respect
to natural restrictions of the source and target sets, and of the control states of multi-pushdown automata.

The paper contains two main results and a few accompanying ones.
As our first main result, we prove decidability of reachability for \weak\ multi-pushdown automata, in case when target set is a singleton.
Our argument is based on a suitable well order on the set of configurations, that strongly depends
on the assumption that the control states are weak. 

As the second main result, we identify additional restrictions under which the problem is
NP-complete. One such restriction is stateless multi-pushdown systems;
another restriction is that in every control state, the stacks may be emptied without changing the state.
Our result subsumes (and gives a simpler nondeterministic polynomial time algorithm for) the case of
communication-free Petri nets, assuming that the source and target markings of a net are given in unary.
Reachability of communication-free Petri nets is NP-complete as shown in~\cite{E97}, even when the source and target
markings are presented in binary.
NP-completeness of reachability for stateless multi-pushdown automata is similar to NP-completeness of the word problem for partially-commutative context-free grammars~\cite{H83},
where one asks if the given input word is accepted.
The reachability question is more difficult to answer, as no input word is given in advance.
In fact the main technical difficulty is to show the existence of a polynomial witness for reachability.

As further results, we investigate forward and backward reachability sets, and prove that
the backward reachability set of a regular set of configurations is regular and computable, while the forward
reachability set needs not be regular in general.
We also identify the decidability border for reachability of weak multi-pushdown systems.
Roughly speaking, the problem becomes undecidable when the target set of configurations is 
an unrestricted regular set, but is decidable if the target set is a singleton.

Finally, we prove a slightly surprising fact: reachability is decidable for multi-pushdown automata, with unrestricted control states, under the assumption that in every state, the stacks may be emptied without changing state.

The standard techniques useful for analysis of pushdown systems, such as pumping or the automaton-based approach 
of~\cite{BEM97}, do not extend to the multi-pushdown setting.
This is why the proofs of our results are based on new insights. The NP-membership proofs are, roughly speaking,
based on polynomial witnesses obtained by careful elimination of 'irrelevant' transitions.
On the other hand, the decidability results are based on a suitable well order on configurations.

\smallsection{Related research.}  % on multi-pushdown systems.}
%Ordinary pushdown systems admit efficient polynomial-time reachability analysis~\cite{BEM97}.
Multi-pushdown systems are a fundamental model of recursive multi-threaded programs.
This is why different instantiations of the multi-pushdown paradigm have been appearing in the literature recently,
most often in the context of formal verification.
We only mention here a few relevant positions we are aware of, without claiming completeness.
All the papers cited below bring some restricted decidability results for reachability or model checking.

Most often, a model has global control states, with stack operations subject to some restriction.
For instance, the author of~\cite{A10} assumes that the stacks are ordered, and pop operation can only be performed 
on the first nonempty stack.
Another example is the model introduced in~\cite{BMT05} and then further investigated e.g. in~\cite{BESS06,AB09,BE12},
that allows for unbounded creation of new stacks; on the other hand, operations on each stack are local,
thus no communication between threads is allowed.
%In a series of papers initiated by~\cite{QR05}, and then continued e.g. in~\cite{LR08,LR09,ABQ11},
%the principal assumption is that along a run, only a fixed number of context-switches between different
%threads is allowed.

Another possible approach is to replace global state space with some communication mechanism 
between threads.
Some successful results on analysis of multi-threaded programs communicating via locks,
in a restricted way, has been reported in~\cite{IGK05,K11,CMV12}.

In~\cite{LS02} the algorithm for reachability over PA~\cite{handbook} graphs has been provided.
The PA class is a generalization of both BPA and BPP that allows, similarly like multi-pushdown systems, 
both for sequential and concurrent (interleaved) behavior.
Finally, in~\cite{KRS09} the reachability problem has been shown decidable for Process Rewrite Systems~\cite{PRS}
extended with weak control states.

\smallsection{Outline.}
In Section~\ref{sec:prelim} we define the model we work with.
Then in Section~\ref{sec:reachability} we make explicit the variants the reachability problem we investigate in the paper,
and in Section~\ref{sec:summary} we state all our results. To make the presentation more concise and easier to assimilate,
the proofs of all the results outlined in Section~\ref{sec:summary} are moved to remaining sections, namely
Sections~\ref{sec:proofoflemma}--\ref{sec:undecid}.

In Section~\ref{sec:proofoflemma} we prove the fundamental property that the source set of configurations may be assumed to be a singleton without affecting complexity of the reachability problem. This section introduces also some basic terminology used in the
following sections.
The following three sections contain the proofs of the remaining results, grouped with respect to complexity.
In Section~\ref{sec:NP} we prove membership in NP, in Section~\ref{sec:decid} we show decidability, and Section~\ref{sec:undecid}
contains undecidability proofs.
In the last section we briefly discuss the cases where the complexity of the reachabiltiy problem is still open.

A preliminary version of this paper has appeared in~\cite{CHL12}.
This paper is an improvement of~\cite{CHL12}, extended with full proofs omitted there, and with
few new results (for instance Theorem~\ref{thm:decid_mpda} and Theorem~\ref{thm:reach-set}).
The content of this paper is included in the PhD thesis of the first author~\cite{CzPhD}.

%%% Local Variables: 
%%% TeX-master: "main"
%%% End: 

\section{Multi-pushdown automata}
\label{sec:prelim}

A multi-pushdown automaton (\MPDA) is like a single-pushdown one.
In a single step one symbol is popped from one of stacks\footnote{If we allowed for popping from more than one stack at a time,
the model would clearly become Turing-complete, even with one state only.}, and a number of symbols are
pushed on the stacks. 
Assume there are $k$ stacks. A transition of an automaton is thus of the form:
\begin{align} \label{eq:mpdatran}
q, \ \s{X} \ \trans[a] \ q', \ \a_1, \ldots, \a_k,
\end{align}
to mean that when an automaton reads $a$ in state $q$, it pops $\s{X}$ from one of the stacks, pushes the 
sequence of symbols $\a_i$ on the $i$th stack, for $i = 1 \ldots k$, and goes to state $q'$. 
We allow for silent transitions with $a = \eps$.
Observe that wlog. one may assume that stack alphabets are disjoint. 

Formally, the ingredients of an \MPDA\ are: a finite set of states $\al{Q}$,
% together with a distinguished initial state, 
the number of stacks $k$, pairwise-disjoint
finite stack alphabets $\al{S_1} \ldots \al{S_k}$, an input alphabet $\al{A}$, and a finite set
of transition rules:
\begin{equation} \label{eq:transrel}
\trans[] \ \ \subseteq \ \al{Q} \times (\bigcup_{i \leq k} \al{S_i}) \times (\al{A} \cup \{\eps\}) 
\times \al{Q} \times {\al{S_1}}^* \times \ldots \times {\al{S_k}}^*
\end{equation}
written as in~\eqref{eq:mpdatran}.
A configuration of an \MPDA\ is a tuple $\config{q}{\b_1}{\ldots}{\b_k} \in \al{Q} \times \al{S_1}^* \times \ldots \times \al{S_k}^*$.
The transition rules~\eqref{eq:mpdatran} induce the transition relation over all configurations in a standard way:
\[
\frac{q, \ \s{X} \ \trans[a] \ q', \ \a_1, \ldots, \a_k \ \ \ \ \ \ \ \s{X} \in \al{S_i} \ \ \ \ \ \ \ \b_i = \s{X} \b}
{\config{q}{\b_1}{\ldots \b_i \ldots}{\b_k} \trans[a] \config{q'}{\a_1 \b_1}{\ldots \a_i \b \ldots}{\a_k \b_k}}
\]
thus defining the configuration graph of an \MPDA.
For a configuration $\config{q}{\a_1}{\ldots}{\a_k}$, its \emph{size} is defined as 
the sum of lengths of the words $\a_i$, $i \leq k$.
The same applies to a right-hand side of any transition rule $q \ \s{X} \ \trans[a] \ q' \ \a_1 \ldots \a_k$.

An \MPDA\ is \emph{\stateless} if there is just one state (or equivalently no states).
Transition rules of an automaton are then of the form:
\begin{equation} \label{eq:statelessmpdatran}
\s{X} \ \trans[a] \ \a_1, \ldots, \a_k
\end{equation}
and configurations are of the form $\confless{\b_1}{\ldots}{\b_k}$. 

A less severe restriction on control states is the following one.
We say that an automaton is \emph{weak} if there is a partial order $\leq$ on its states 
such that every transition~\eqref{eq:mpdatran} satisfies $q'\leq q$. 
Clearly, every \stateless\ automaton is weak.

\begin{rem}
Note that \stateless\ one-stack automata are essentially context-free grammars in
Grei\-bach normal form. Thus the configuration graphs are precisely context-free graphs,
called also BPA graphs~\cite{PRS,handbook}.
Another special case is many stacks with singleton alphabets. The stacks are thus essentially
counters without zero tests. In this subclass, \stateless\ automata corresponds to communication-free Petri nets~\cite{E97},
called also BPP~\cite{ChrPhd}, or commutative context-free graphs~\cite{CFL11}.
The BPA and BPP classes are members of the Process Rewrite Systems hierarchy of~\cite{PRS} that contains,
among the others, unrestricted pushdown systems and Petri nets.
\end{rem}

\begin{exa}
\label{ex:automaton}
Assuming a distinguished initial state and acceptance by all stacks empty,
\weak\  \MPDA s\ can recognize non-context-free languages. 
For instance, the language 
\begin{align} \label{eq:langex}
\{ a^n b^n c^n : n \geq 0 \}
\end{align}
is recognized by an automaton described below.
The automaton has two states $q_1,q_2$ and two stacks. The alphabets of the stacks are $\{\s{X},\s{B}, \s{D}\}$ and $\{\s{C}\}$, respectively. 
% moreover there are three terminal symbols $\{a,b,c\}.$
The starting configuration is $(q_1, \s{X}\s{D}, \varepsilon)$.
% and the automaton accepts by a configuration $(q2, \varepsilon , \varepsilon).$
Besides the transition rules, we also present the automaton in a diagram, 
using \emph{push} and \emph{pop} operations with natural meaning.

\begin{minipage}{.55\linewidth}
%\begin{center}
                \begin{tikzpicture}    % [every text node part/.style={align=center}]                
	    	   \node (q1) at (0,0){$q_1$};
	    	   \node (q2) at (4, 0) {$q_2$};
	   	 
                    \path[->] (q1) edge [loop above] node [text width=1cm] {$a,\ pop\ \s{X} $ \\ $push\ \s{X}\s{B},\s{C}$} (q1);
		    \path[->] (q1) edge [loop left] node [text width=1cm] {$\varepsilon,$\\$pop\ \s{X} $} (q1);
		    \path[->] (q1) edge [loop below] node [text width=1cm] {$b,\ pop\ \s{B} $} (q1);
		    \path[->] (q2) edge [loop above] node [text width=1cm] {$c,\ pop\ \s{C} $} (q2);
		    \path[->] (q1) edge  node [above, text width=1cm] {$\varepsilon,\ pop\ \s{D} $} (q2);
               \end{tikzpicture}
%        \end{center}
\end{minipage}
\begin{minipage}{.45\linewidth}
\begin{align*}
q_1, \ \s{X}  \ & \trans[a] \  q_1, \ \s{X}\s{B}, \ \s{C}  \\
q_1, \ \s{X}  \ & \trans[\varepsilon] \  q_1, \ \varepsilon, \ \varepsilon  \\
q_1, \ \s{B}  \ & \trans[b]  \  q_1, \ \varepsilon, \ \varepsilon  \\
q_1, \ \s{D}  \ & \trans[\varepsilon]  \  q_2, \ \varepsilon, \ \varepsilon  \\
q_2, \ \s{C}  \ & \trans[c]  \  q_2, \ \varepsilon, \ \varepsilon 
\end{align*}
\end{minipage}
\ \\

\noindent
The automaton is weak and uses $\varepsilon$-transitions, which may be however easily eliminated.
Acceptance by empty stacks may be easily simulated using acceptance by states.
The language~\eqref{eq:langex} is not recognized by a \stateless\ automaton, as shown in~\cite{CL12}.
\end{exa}

\begin{exa}
%\label{ex:automaton}
Non-context-free languages are recognized even by \stateless\ \MPDAs\ with singleton stack alphabets.
The class of languages recognized by this subclass is called \emph{commutative context-free} languages~\cite{H83}, 
see also~\cite{CL12}.
One example is the commutative closure of the language of the previous example: the set of all words
with the same number of occurrences  of $a$, $b$ and $c$.

\end{exa}
\ \\
In the sequel we do not care about initial states nor about acceptance condition, as we will focus
on the configuration graph of an automaton. Furthermore, as we only consider reachability problems,
the labeling of transitions with input alphabet letters 
will be irrelevant, thus we write $\trans$ instead of $\trans[a]$ from now on. 

Using a standard terminology, we say that an \MPDA\ is \emph{\normed}
if for any state $q$ and any configuration $\config{q}{\a_1}{\ldots}{\a_k}$, 
there is a path to the empty configuration
\[
\config{q}{\a_1}{\ldots}{\a_k} \trans \ldots \trans \config{p}{\eps}{\ldots}{\eps}
\]
for some state $p$.
In general, whenever an \MPDA\ is not assumed to be \normed\ we call it \emph{\unnormed} for clarity.
Note that in all examples above the automata were \normed. In fact \normedness\ is not a restriction as far as languages
are considered.
 In the sequel we will however analyze the configuration graphs, and then \normedness\ will play a role.

Further, we say that an \MPDA\ is \emph{\strnormed}
if for any state $q$ and any configuration $\config{q}{\a_1}{\ldots}{\a_k}$, 
there is a path to the empty configuration
\[
\config{q}{\a_1}{\ldots}{\a_k} \trans \ldots \trans \config{q}{\eps}{\ldots}{\eps}
\]
containing only transitions that do not change state.
Intuitively, whatever is the state $q$ we start in, any top-most symbol $\s{X}$ in any stack may ,,disappear''.
For \stateless\ automata, \strnormedness\ is the same as \normedness.

\section{Regular sets and reachability problem}
\label{sec:reachability}

\smallsection{Regular sets}
We will consider various reachability problems in the configuration graph of a given \MPDA.
Therefore, we need a finite way of describing infinite sets of configurations.
A standard approach is to consider \emph{regular} sets. Below we adapt this approach
to the multi-stack scenario we deal with. Namely, instead of languages of words we consider
languages of tuples of words, one word per stack.

Consider the configurations of a \stateless\ \MPDA, $\confset = \al{S}_1^* \times \ldots \times \al{S}_k^*$.
There is a natural monoid structure in $\confset$, with pointwise identity
$\confless{\eps}{\ldots}{\eps}$ and multiplication
\[
\confless{\a_1}{\ldots}{\a_k} \cdot \confless{\b_1}{\ldots}{\b_k} \ = \ 
\confless{\a_1 \, \b_1}{\ldots}{\a_k \, \b_k} .
\]
Call a subset $L \subseteq \confset$ \emph{regular} if there is a finite monoid $M$ and a monoid morphism
$
\g : \confset \to M
$
that \emph{recognizes} $L$, which means that $L = \g^{-1}(N)$ for some subset $N \subseteq M$.
It is well-known (see for instance~\cite{BFL12} and references therein) that a language $L$ is regular iff it is a finite union of languages of the form
\[
L_1 \times \ldots \times L_k,
\]
where $L_i$ is a regular word language over $\al{S}_i$, for $i = 1 \ldots k$.
Thus one may assume, without loss of generality, that the monoid $M$ is a product
of finite monoids 
\[
M = M_1 \times \ldots \times M_k,
\] 
and that
\[
\gamma = \gamma_1 \times \ldots \times \gamma_k \quad
\text{ where} \quad \gamma_i : \al{S}_i^* \to M_i \ \text{ for } i = 1 \ldots k.
\]
Thus we may use an equivalent but more compact representation of regular sets,
based on automata: a regular set $L$ is given by a tuple of (nondeterministic) finite automata $\aut{B}_1 \ldots \aut{B}_k$ 
over alphabets $\al{S}_1 \ldots \al{S}_k$, respectively, together with a set 
\[
F \subseteq Q_1 \times \ldots \times Q_k
\]
of accepting tuples of states, where $Q_i$ denotes the state space of automaton $\aut{B}_i$.

Unless stated otherwise, in the sequel we always use such representations of regular sets
of configurations.  % as described above.
If there is more than one state, we assume a representation for every state.
In particular, 
when saying ''polynomial wrt.~$L$'', for a regular language $L$, we mean polynomial wrt.~the sum of sizes
of automata representing $L$.

\begin{rem}
Clearly, the cardinality of the set $F$ of accepting tuples may be exponential wrt.~the cardinalities of state spaces of
automata $\aut{B}_i$. However, complexities we derive in the sequel will never depend on cardinality of $F$.
\end{rem}
\begin{exa} \label{ex:reglang}
Assume that there are two stacks. An example of properties we can define is: 
,,odd number of elements on the first stack and symbol $\s{A}$ on the top of the second stack,
or an even number of the elements on the first stack and the odd number of elements on the second stack". 
On the other hand, ''all stacks have equal size" is not a regular property according to our definition.
\end{exa}

\begin{rem} \label{rem:reg}
We have deliberately chosen a notion of regularity of languages of \emph{tuples} of words.
Another possible approach could be to consider regular languages of words, over the product alphabet
$(\al{S}_1 \cup \bot) \times \ldots \times (\al{S}_k \cup \bot)$, where the additional symbol $\bot$
is necessary for padding. This would yield a larger class, for instance the last language from
Example~\ref{ex:reglang} would be regular.
The price to pay would be however undecidability of the reachability problems. 
The undecidability will be established in Theorem~\ref{thm:undecid-relaxed}.
\end{rem}

\begin{rem}
The notion of regularity for languages of tuples of words we work with is known in the literature also under the name
\emph{recognizable} languages, see for instance~\cite{BFL12, Berstel79}.
We prefer to stick to name \emph{regular}, in order to place emphasis on the fact that
the sets of configurations are assumed to be represented by tuples of finite automata.
\end{rem}

\smallsection{Reachability}
In this paper we consider the following reachability problem:\\

\begin{quote}
\begin{tabular}{ll}
% & {\sc Reachability problem} \\
%\\
{\sc Input}: \ & an \MPDA\ $\aut{A}$ and two regular sets of configurations $L, K \subseteq \confset$. \\
{\sc Question}: \ & is there a path in the configuration graph from $L$ to $K$?\\
\end{tabular}\\
\end{quote}

\noindent
We will write $L \reaches_{\aut{A}} K$ if a path from $L$ to $K$ exists in the configuration graph of an automaton $\aut{A}$.
The sets $L$ and $K$ we call \emph{source} and \emph{target} sets, respectively.
We will distinguish special cases, when either $L$ or $K$ or both the sets are singletons,
thus obtaining four different variants of reachability altogether. 
For brevity we will use symbol '\emph{\sing}' for a singleton, and symbol '\emph{\reg}'
for a regular set, and speak of \emph{\singreg\ reachability} (when $L$ is a singleton),
\emph{\regreg\ reachability} (the unrestricted case), and likewise for
\emph{\regsing} and \emph{\singsing}.

Before stating the results, we note that all the problems we consider here are \NP-hard:
\begin{lem} \label{lem:nphard}
The \singsing\ reachability is \NP-hard for \strnormed\ \stateless\ \MPDAs, even if all stack alphabets are singletons.
\end{lem}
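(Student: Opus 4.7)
The plan is to reduce from 3-SAT. Given a 3-CNF instance $\phi$ with variables $x_1, \ldots, x_n$ and clauses $C_1, \ldots, C_m$, I would build a \stateless\ \MPDA\ $\aut{A}$ with $n+m$ stacks, each over its own singleton alphabet: one stack $V_i$ per variable and one stack $C_j$ per clause. The intention is that popping a symbol from the $V_i$-stack commits to an assignment of $x_i$, while simultaneously pushing one token on every clause-stack whose clause contains a literal that the chosen value makes true. Concretely, for each $x_i$ there would be two rules of the form $V_i \trans \ldots$, one pushing one symbol on $C_j$ for every $j$ with $x_i \in C_j$ (the ``true'' rule) and one pushing one symbol on $C_j$ for every $j$ with $\neg x_i \in C_j$ (the ``false'' rule); additionally, for each clause $C_j$ there would be a discard rule $C_j \trans \eps$. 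The source is the singleton configuration with $V_i = 1$ and $C_j = 0$ everywhere, and the target is the singleton with $V_i = 0$ and $C_j = 1$ everywhere, so both fit the \singsing\ pattern. The construction is logspace, and since every stack symbol appears on the left-hand side of some rule, $\aut{A}$ is \strnormed.

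The correctness claim is that the target is reachable iff $\phi$ is satisfiable. Forwards, given a satisfying assignment $\alpha$, for each $i$ I would fire the $V_i$-rule associated to $\alpha(x_i)$ and then use $C_j \trans \eps$ to bring each clause-stack down to exactly one symbol; since $\alpha$ satisfies every $C_j$, at least one producer has pushed a token into $C_j$, so this last step is feasible. Backwards, symbols $V_i$ are only popped, never pushed, and the source gives $V_i = 1$, so any witness must fire one $V_i$-rule exactly once per $i$, and the collective choice defines an assignment $\alpha$. Because the only consumer of $C_j$ is the discard rule and the final value of $C_j$ equals $1$, at least one producer transition for $C_j$ must have fired, and every such producer corresponds to a literal of $C_j$ that $\alpha$ makes true; hence $\alpha$ satisfies $\phi$.

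I do not foresee any serious obstacle. The one design choice worth flagging is fixing the target clause-stack value to $1$ rather than $0$: with target $0$ the discard rules needed for \strnormedness\ would let one witness reachability irrespective of $\phi$, whereas with target $1$ at least one producer of each $C_j$ must fire, which is precisely the condition that every clause is satisfied.
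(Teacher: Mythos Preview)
Your reduction is correct. The construction is clean, the strong-normedness check is right (every $V_i$-pop produces only $C_j$'s, and every $C_j$ has a vanishing rule, so any configuration can be emptied), and the two directions of the correctness argument go through as you describe. The observation that the target clause-count must be $1$ rather than $0$ is exactly the point that makes the backward direction work.

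The paper does not give a direct argument here: it simply observes that stateless \MPDAs\ with singleton stack alphabets are the same as communication-free Petri nets, and cites Esparza's \NP-completeness result for reachability in that model. Your proof is thus a genuinely different route---a self-contained reduction from 3-SAT rather than an appeal to the literature. What the paper's approach buys is brevity and the stronger statement from~\cite{E97} that hardness holds even with binary-encoded markings; what yours buys is that the reader sees exactly why the problem is hard without chasing a reference, and the reduction is essentially the natural one (indeed, it is close in spirit to the reduction used for communication-free nets). Either is fine for the purpose of the lemma.
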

The above fact follows immediately from NP-completeness of the reachability problem 
for communication-free Petri nets, see~\cite{E97} for details.

%%% Local Variables: 
%%% TeX-master: "main"
%%% End: 

\section{Summary of results}	
\label{sec:summary}

In this section we merely state all our results without any proofs. The proofs occupy all the following sections.
Our results come in three groups, each dealing with one of the following questions:
\begin{iteMize}{$\bullet$}
\item What is the complexity of the reachability problem in various variants mentioned above? Where the decidability border lies?
\item Is the set of forward/backward reachable configurations regular and computable?
\item What changes if the relaxed notion of regularity is assumed?
\end{iteMize}
The three groups are discussed in details in Sections~\ref{sec:summary-complexity}, \ref{sec:summary-reachableset} 
and~\ref{sec:summary-relaxed}, respectively.

Most of our results apply only to \weak\ \MPDAs, or even only to \stateless\ MPDAs.
There are however few exceptions: decidability of reachability (in Theorem~\ref{thm:decid_mpda}) and
regularity of backward reachability set (in Theorem~\ref{thm:reach-set}), where we consider unrestricted state spaces.
We want to emphasise that the results for unrestricted state spaces are rather accidental,
and are not the most interesting; the core part of the paper is devoted to \weak\ state spaces.

\subsection{Complexity of reachability}
\label{sec:summary-complexity}

In presence of states, the \singsing\ reachability problem is obviously undecidable, because the model
is Turing powerful. Undecidability holds even for \normed\ \MPDAs. The problem becomes
decidable under the assumption of \strnormedness:

\newcommand{\thmdecidmpda}{The \regreg\ reachability is decidable for \strnormed\ \MPDAs.}
\begin{thm}\label{thm:decid_mpda}
\thmdecidmpda
\end{thm}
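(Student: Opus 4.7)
The plan is to reduce the \regreg\ reachability problem to checking emptiness of an intersection of regular sets of configurations. Observe that $L \reaches_{\aut{A}} K$ holds if and only if the backward reachability set
\[
\c{pre}^*_{\aut{A}}(K) \ = \ \{ c : c \reaches_{\aut{A}} c' \text{ for some } c' \in K \}
\]
intersects $L$ nontrivially.

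The next step is to invoke the forthcoming Theorem~\ref{thm:reach-set}, which is stated precisely in the \strnormed\ case over arbitrary state spaces and asserts that $\c{pre}^*_{\aut{A}}(K)$ is a regular set of configurations, effectively constructible from $\aut{A}$ and $K$. Once this is in hand, Theorem~\ref{thm:decid_mpda} follows almost immediately: our tuple-representation regular sets are closed under intersection (take the component-wise product of the representing finite automata, equipped with the product of accepting tuples), and nonemptiness of such a set is decidable by a standard accessibility check in each component automaton. The reduction also preserves effectiveness: the representing automata for $L$ and for $\c{pre}^*_{\aut{A}}(K)$ can both be computed, their product is a finite automaton tuple, and emptiness is checked per coordinate against the set of accepting tuples.

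The real technical weight is therefore carried by Theorem~\ref{thm:reach-set}, and this is where I expect the main obstacle to lie. The natural route is a saturation procedure in the spirit of the automaton-based approach of~\cite{BEM97}: starting from finite automata representing $K$ on each stack, iteratively add transitions witnessing the effect of a single backward \MPDA\ step, jointly with transitions in a side automaton tracking control states. Without weakness there is no a priori bound on the number of distinct control states visited on a backward run, so a naive saturation need not terminate. \Strnormedness\ rescues the argument: it guarantees that any stack content can be emptied without changing the current state, so state-changing transitions are "universally enabled modulo a stack-emptying preamble". This decouples the saturation over control transitions from the saturation over stack-tuple automata, and allows the fixed-point computation to close in finitely many stages, yielding a regular description of $\c{pre}^*_{\aut{A}}(K)$ and hence the decidability claim.
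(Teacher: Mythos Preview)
Your reduction to emptiness of $L \cap \c{pre}^*_{\aut{A}}(K)$ is correct in spirit, but the proof has a genuine gap: you assume that Theorem~\ref{thm:reach-set} delivers an \emph{effective} construction of the backward reachability set. It does not. The paper's proof of Theorem~\ref{thm:reach-set} is non-constructive: it shows that (after restricting to fully active paths) the backward reachability set is upward closed with respect to a point-wise bottom-fixed Higman ordering, and hence determined by finitely many minimal configurations. This yields regularity but no algorithm to compute those minimal elements. The paper states this explicitly (``We do not claim however that a representation of this regular set may be effectively computed in general'') and even lists effective computability of the backward reachability set as an open problem in Section~\ref{sec:conc}. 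Your saturation sketch in the third paragraph would, if it worked, close that open problem; as it stands it is a hope rather than an argument, and the BEM97-style approach is known not to extend straightforwardly to the multi-stack setting.

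The paper's actual proof of Theorem~\ref{thm:decid_mpda} sidesteps the effectiveness issue by running two semi-procedures in parallel. The positive one enumerates finite paths from $L$ to $K$. The negative one enumerates all regular sets $M$ and tests whether $M$ is a \emph{separator}: $K \subseteq M$, $L \cap M = \emptyset$, and $M$ is closed under one-step predecessors. Each of the three conditions is decidable for a given regular $M$ (the predecessor set of a regular set is effectively regular). The role of Theorem~\ref{thm:reach-set} is purely existential: if $L \not\reaches K$, then $\c{pre}^*_{\aut{A}}(K)$ is itself a regular separator, so the enumeration will eventually find one. Thus the paper uses the same underlying fact you identified, but exploits only its non-effective content, which is all that is actually available.
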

The idea behind this result is that the model is similar to lossy FIFO systems, and hence is susceptible to the same proof methods. We exploit a well ordering of configurations, along the lines of~\cite{FS01}.

All our complexity results are outlined in Table~\ref{tab:summary}.
The table distinguishes
% Before stating the remaining results, we summarize all of them in the following table.
variants of the reachability problem for \strnormed/\normed/\unnormed\ \MPDAs, and restrictions on control states:  \stateless/\weak/unrestricted.
For clarity, we do not distinguish \stateless\ \strnormed\ case from \stateless\ \normed\ one, as these two cases 
obviously coincide.

{\small
\begin{table}[thb]
\newcommand{\ods}{\,}
\begin{tabular}{|c||c|c|c|}
% \cline{2-3} \multicolumn{2}{c|}{}                             \ods \strnormed        & \ods \unnormed \\
\hline
{\ods [ \regsing ] \ods } & \multirow{2}{*}{\ods \strnormed \ods}   & \multirow{2}{*}{\ods \ods \normed \ods \ods}   & \multirow{2}{*}{\ods \unnormed \ods} \\ 
\ods \regreg \ods &   & &  \\ 
\hline
\hline
\multirow{2}{*}{\ods \stateless \ods} & \multicolumn{2}{|c|}{\multirow{2}{*}{\NPc\ (Thm.~\ref{thm:np1})}}   & \ods [ \NPc\ (Thm.~\ref{thm:np2}) ] \ods \\ 
                                                         & \multicolumn{2}{|c|}{} & \ods \undecid\  (Thm.~\ref{thm:undecid}) \ods \\ 
\hline
\multirow{2}{*}{\ods \weak \ods}      & \ods \multirow{2}{*}{\NPc\  (Thm.~\ref{thm:np1})} \ods  &
          [ {\decid}\ ] & [ {\decid}\ (Thm.~\ref{thm:decid}) ] \\
                                                        &  & \ods \undecid\  (Thm.~\ref{thm:undecid}) \ods & \undecid\  \\
\hline
\multirow{2}{*}{\ods unrestrictred \ods}    &  \ods \multirow{2}{*}{\decid\ (Thm.~\ref{thm:decid_mpda})}  \ods &   \ods \multirow{2}{*}{\undecid}  &  \ods \multirow{2}{*}{\undecid}  \ods \\
&&& \\
\hline 
\end{tabular}
\caption{Summary of complexity results}
\label{tab:summary}
\end{table}
}
% \bigskip

The remainder of Section~\ref{sec:summary-complexity} is devoted to \weak\ state spaces only.
We start by observing that out of four combinations of the reachability problem,
it is sufficient to consider only two, namely the \regsing\ and \regreg\ cases. Indeed, as far as complexity is
concerned, we observe the following collapse:
\begin{align} \label{eq:collapse}
\text{ \singsing } & \ = \  \text{ \regsing } &
\text{ \singreg } & \ = \  \text{ \regreg }
\end{align}
independently of a restriction on automata.
The first equality follows from our first result, that says that for a given target configuration, there is a source configuration, witnessing  reachability, of polynomial size:
\newcommand{\lemreduceproblems}{
Suppose $\aut{A}$ is a \weak\ \unnormed\ \MPDA. Let $L$ be a regular set of configurations of $\aut{A}$ and let
$\c{t}$ be a configuration of $\aut{A}$. Then
\[
L \reaches_\aut{A} \c{t} \  \implies \ 
\c{s} \reaches_\aut{A} \c{t} \ \text{ for some } \c{s} \in L \text{ of size polynomial wrt.}~\aut{A}, L \text{ and } \c{t} .
\]}
\begin{lem}[Polynomial source configuration] \label{lem:reduceproblems}
\lemreduceproblems
\end{lem}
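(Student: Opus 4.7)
The plan is to start with an arbitrary witness $\pi\colon \c{s}\reaches_\aut{A}\c{t}$ with $\c{s}\in L$, and trim both $\c{s}$ and $\pi$ simultaneously until $\c{s}$ has polynomial size while $L$-membership and reachability of $\c{t}$ are preserved. First I split each source stack $\alpha_i$ of $\c{s}$ as $\alpha_i=\alpha_i^{\text{top}}\alpha_i^{\text{bot}}$, where $\alpha_i^{\text{bot}}$ is the longest suffix that is never popped during $\pi$. Since $\alpha_i^{\text{bot}}$ survives as a suffix of the $i$-th stack of $\c{t}$, we get $|\alpha_i^{\text{bot}}|\le|\c{t}|$ for free, so it remains to bound $|\alpha_i^{\text{top}}|$.

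To each source symbol $X^i_j$ inside $\alpha_i^{\text{top}}$ I associate its \emph{derivation tree} in $\pi$: the root is $X^i_j$, the children of each popped node are the symbols pushed by the transition that popped it, and the leaves are the descendants that are never popped (and therefore contribute to $\c{t}$). I classify every such tree by two independent attributes: \emph{productive} (at least one leaf) vs.\ \emph{canceled} (no leaf), and \emph{state-preserving} (every transition inside the tree preserves the control state) vs.\ \emph{state-changing} (at least one transition inside strictly lowers the control state). Two global counts follow: the productive trees across all stacks number at most $|\c{t}|$, because leaves of distinct trees are distinct and the forest has at most $|\c{t}|$ leaves in total; and the state-changing trees number at most $|Q|-1$, because a weak automaton admits at most $|Q|-1$ strictly state-decreasing transitions along any single run, and each state-changing tree owns at least one such transition exclusively.

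On each stack $i$ I then view $\alpha_i^{\text{top}}$ as a sequence of positions marked \emph{obligatory} (productive or state-changing tree) or \emph{removable} (state-preserving canceled tree). The obligatory positions number at most $|\c{t}|+|Q|-1$, so they split $\alpha_i^{\text{top}}$ into at most $|\c{t}|+|Q|$ runs of removable positions. Inside each run I invoke the pigeonhole principle on the states of the $i$-th language automaton $\aut{B}_i$ reached after reading successive prefixes of $\alpha_i$: if a run has more than $|Q_i|$ positions then two of them $j_1<j_2$ visit the same $\aut{B}_i$-state, and I excise the source symbols $X^i_{j_1+1},\ldots,X^i_{j_2}$ together with their entire trees from $\pi$. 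The shortened stack is still accepted by $\aut{B}_i$ because the run splices cleanly at the shared state, and after exhausting this reduction every run has length at most $|Q_i|$, whence $|\alpha_i^{\text{top}}|=O((|\c{t}|+|Q|)\cdot|Q_i|)$, which is polynomial in $|\aut{A}|$, $|L|$ and $|\c{t}|$.

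The main obstacle, and the delicate part of the argument, is verifying that the surviving transitions still form a legal computation from the shortened $\c{s}'$ to $\c{t}$. Here both classifications earn their keep at once: because each excised tree is canceled, its net effect on every stack other than its root's is zero and on the root's stack it only removes the root; because it is state-preserving, no control-state drop is associated with it either. Together these two properties guarantee that at every point of the trimmed computation the global configuration equals the original configuration of $\pi$ at the corresponding moment, minus precisely the excised source symbols, so every retained transition still finds the expected top-of-stack and the expected control state. This is where the weak assumption is truly indispensable: without state-preservation, deleting a canceled subcomputation could leave the control state ``too high'' for the downstream transitions to fire.
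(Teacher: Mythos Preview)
Your proof is correct and takes essentially the same approach as the paper: your obligatory/removable dichotomy on source symbols coincides exactly with the paper's relevant/irrelevant distinction (a source symbol is relevant iff some descendant lies in $\c{t}$ or is popped by a state-changing transition), and your pumping on runs of removable positions is precisely the paper's pumping on blocks of irrelevant occurrences justified by Lemma~\ref{lem:order}. The only cosmetic differences are that you peel off the never-popped suffix $\alpha_i^{\text{bot}}$ explicitly whereas the paper counts those symbols among the relevant ones, and that the invariant in your last paragraph should read ``minus the symbol occurrences belonging to excised trees present at that moment'' rather than ``minus precisely the excised source symbols''.
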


Indeed, the reduction from \regsing\ to \singsing\ is by nondeterministic guessing of a source configuration of polynomial size.
The second equality~\eqref{eq:collapse} will follow from our results listed below.

The collapse~\eqref{eq:collapse} significantly simplifies the summary of results in Table~\ref{tab:summary}.
Each entry of the table contains the complexity of \regreg\ reachability problem.
Additionally, the complexity of \regsing\ reachability problem is given in cases it is different
from the complexity of \regreg\ reachability.
%Note that the source set is never restricted, and it turns out that restricting the source set
%does not reduce complexity of reachability.
%

%
%Moreover, we do not distinguish \weak\ \normed\ case from the \weak\ \unnormed\ one, as they are easily
%interreducible (cf.~the proof of Theorem~\ref{thm:undecid}).
%

\indent
Now we discuss the results in detail.  % ok \todo{maybe details}.
We first observe an apparent decidability frontier witnessed
by \stateless\ \unnormed\ \MPDAs\ and \weak\ \normed\ \MPDA s:
\newcommand{\thmundecid}{The \singreg\ reachability is undecidable for \stateless\ \unnormed\ \MPDAs\ and for \weak\ \normed\ \MPDA s.}
\begin{thm} \label{thm:undecid}
\thmundecid
\end{thm}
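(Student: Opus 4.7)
The plan is to reduce from the Post Correspondence Problem (PCP), which is undecidable. Given an instance $\mathcal{I} = ((u_1, v_1), \ldots, (u_n, v_n))$ over $\{a,b\}$, I would build, in each case, an \MPDA\ $\aut{A}$ with a fixed source configuration and a regular target set $K$ so that the source reaches $K$ in the configuration graph of $\aut{A}$ if and only if $\mathcal{I}$ has a solution. The common backbone is that $\aut{A}$ uses two distinguished stacks to accumulate, in lock-step with a nondeterministically guessed index sequence $i_1, \ldots, i_m$, the two words $u_{i_1}\cdots u_{i_m}$ and $v_{i_1}\cdots v_{i_m}$; the rest of $\aut{A}$ together with $K$ must then force these two accumulated words to be equal. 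The key obstacle is that equality of stack contents over disjoint alphabets is not a regular property in the sense of Section~\ref{sec:reachability}, so the enforcement must come from the automaton's dynamics rather than directly from $K$.

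For the \stateless\ \unnormed\ case, I would exploit \unnormed ness by introducing \emph{trap} symbols: stack symbols for which no transition rule is ever applicable, so that, once pushed, they stay forever. The verification rules are then designed so that every mismatch in the letter-by-letter comparison of the two accumulated words fires a transition which pushes a trap symbol on some auxiliary stack. The regular target $K$ asks that all stacks be free of trap symbols (and otherwise have specified regular final content), which is a conjunction of per-stack regular conditions and hence regular. Because traps cannot be removed, only faithful, matching verifications can reach $K$.

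For the \weak\ \normed\ case, traps are ruled out by \normed ness, so the enforcement must rely instead on the state chain. I would use a short weak chain $q_0 > q_1 > q_{\mathrm{end}}$: $q_0$ is the accumulation state, $q_1$ is the verification state, and $q_{\mathrm{end}}$ is a terminal state from which every configuration can freely empty all stacks (keeping $\aut{A}$ \normed\ without affecting the reachability structure). The core of the reduction lies in a careful design of the rules and auxiliary stacks used in $q_1$, together with the target $K$, so that one reaches $q_{\mathrm{end}}$ with the target-specified final content iff the two accumulated words match. Since weakness forbids state loops, the unbounded letter-by-letter matching has to be orchestrated by the stack content manipulated within the single state $q_1$, with the per-stack regular conditions in $K$ ruling out spurious runs.

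The main obstacle, in both cases, is guaranteeing faithfulness of the verification — that every path reaching $K$ corresponds to a genuine PCP solution rather than a spurious interleaving exploiting independence of operations on different stacks. Trap symbols serve this purpose in the \stateless\ \unnormed\ reduction, whereas in the \weak\ \normed\ reduction the limited state chain together with the shape of the auxiliary stacks and the regular target carry the same burden. Once faithfulness is in place the constructions are straightforwardly polynomial in $\mathcal{I}$, and the undecidability of PCP yields the claimed undecidability of \singreg\ reachability in both settings.
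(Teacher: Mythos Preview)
Your proposal has a real gap at its core: in a stateless \MPDA\ a transition pops exactly one symbol from one stack and cannot inspect any other stack, so there is no rule that ``sees a mismatch'' between the tops of the two accumulation stacks and reacts by pushing a trap. The burden you place on the verification phase---detect a letter-by-letter disagreement and punish it with a trap---cannot be discharged in this model as you describe it. Nothing forces the verification to alternate between the two stacks, and no rule popping a symbol from one stack has access to what sits on top of the other. Without a concrete mechanism here the stateless unnormed reduction does not go through; and since your weak normed reduction is described only as ``orchestrated by the stack content within $q_1$'' with no specifics, it inherits the same gap rather than resolving it.

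The paper sidesteps this obstacle entirely. It reduces from emptiness of the intersection of two context-free languages (rather than PCP), simulating the two Greibach-form grammars on two stacks while pushing each produced terminal, tagged by which grammar emitted it, onto a common third stack. No symbol on that third stack has any transition---this is precisely what makes the automaton unnormed---so the target set simply asks that the first two stacks be empty and the third lie in the single-stack regular language $\{a_1 a_2 : a \in A\}^*$. The equality of the two derived words is thus enforced by a regular condition on \emph{one} stack, not by any dynamic mismatch detection across two stacks; that is the trick you are missing. For the weak normed case the paper does not build a fresh reduction at all: it observes that adding one sink state below the unique state, together with disappearing rules for every symbol in that sink, makes the automaton normed without affecting reachability into the original target (which lives in the non-sink state). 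You overlooked this two-line maneuver and instead sketched a separate, unspecified construction.
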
 
The proof is by reduction of the nonemptiness of intersection of context-free languages
and uses three stacks. The case of two stacks remains open.

Thus lack of \strnormedness\ combined with a regular target set yields undecidability in case of 
\weak\ automata.
Surprisingly, restricting additionally:
\begin{iteMize}{$\bullet$}
\item either the automaton to be \strnormed,
\item or the target set to a singleton, 
\end{iteMize}
makes a dramatical difference for complexity of the problem,
as summarized in Theorems~\ref{thm:np1}, \ref{thm:np2} and~\ref{thm:decid} below. 
In the first theorem we only assume \strnormedness:
\newcommand{\thmnpone}{The \regreg\ reachability is \NP-complete for \strnormed\ \weak\ \MPDAs.}
\begin{thm} \label{thm:np1}
\thmnpone
\end{thm}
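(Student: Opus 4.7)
The plan is to establish NP-hardness directly from Lemma~\ref{lem:nphard} and NP-membership by a two-step procedure: first reduce \regreg\ to \singreg\ reachability, then exhibit a polynomial-size witness for the \singreg\ instance.

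For the reduction to \singreg, apply Lemma~\ref{lem:reduceproblems}: if $L \reaches_\aut{A} K$ then some $\c{s} \in L$ of size polynomial in the input satisfies $\c{s} \reaches_\aut{A} K$. Nondeterministically guess such $\c{s}$ together with accepting runs of the automata representing $L$ certifying $\c{s} \in L$; it remains to decide the \singreg\ instance $\c{s} \reaches_\aut{A} K$.

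For the \singreg\ instance I would exploit weakness of $\aut{A}$: along any run the sequence of visited states is weakly decreasing in the partial order, so the run decomposes into at most $|\al{Q}|$ maximal blocks of state-preserving transitions, separated by at most $|\al{Q}|-1$ state-changing transitions. Nondeterministically guess:
\begin{iteMize}{$\bullet$}
\item the sequence of state-changing transitions $\tau_1, \ldots, \tau_n$ used, together with the induced chain of visited states $q_0 > q_1 > \cdots > q_n$;
\item the intermediate stack configurations just before and just after each $\tau_i$, and a final configuration $\c{t}$ together with an accepting tuple of states of the automata representing $K$ certifying $\c{t} \in K$.
\end{iteMize}
After this guess one is left, for each block $i$, with a reachability instance between two fixed configurations in a single state $q_i$ using only $q_i$-preserving transitions. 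This is a stateless strongly normed \MPDA\ reachability problem in its own right, and I would verify it in NP via a Parikh-image / multiset-of-transitions witness, exactly in the spirit of reachability for communication-free Petri nets~\cite{E97} and the word problem for partially-commutative context-free grammars~\cite{H83}: a polynomial witness records how many times each transition is applied together with a compact ordering certificate showing that the applications can be scheduled.

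The main obstacle is arguing that the intermediate configurations and the final $\c{t}$ can be chosen of polynomial size. This is exactly where strong normedness is essential: inside a single block any superfluous stack content can be consumed using state-preserving transitions, so a minimal counterexample / pumping-style argument on a shortest successful run shows that one may always shrink intermediate configurations down to polynomial size without losing the ability to fire the next state-changing transition. Combining the polynomial bound on the number of state-changing transitions, the polynomial-size intermediate configurations, and the polynomial Parikh witness for each stateless block yields an overall polynomial certificate checkable in polynomial time, which places the problem in NP.
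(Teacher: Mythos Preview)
Your proposal has two gaps, one minor and one central.

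\textbf{Minor.} Lemma~\ref{lem:reduceproblems} is stated for a single target configuration $\c{t}$, and the polynomial bound on the source explicitly depends on the size of $\c{t}$. It therefore does not give the reduction \regreg$\,\to\,$\singreg\ you claim: if $L\reaches_\aut{A}K$ then there is indeed some $\c{t}\in K$ with $L\reaches_\aut{A}\c{t}$, but without an a~priori bound on $\c{t}$ you get no polynomial bound on the source. The paper handles this in the opposite order (Lemma~\ref{lem:np1-proof-step2}): it first bounds \emph{both} $\c{s}\in L$ and $\c{t}\in K$ simultaneously, and only then invokes the \singsing\ procedure.

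\textbf{Central.} The real gap is your treatment of the stateless blocks. You propose a ``Parikh-image / multiset-of-transitions witness \ldots\ in the spirit of~\cite{E97,H83}''. Those techniques apply to the fully commutative (BPP) setting, where each stack alphabet is a singleton and a configuration is just a vector of counters. Here the stacks carry arbitrary alphabets and the order within a stack matters; only symbols from \emph{different} stacks commute. A Parikh vector of transition multiplicities does not determine the reachable stack contents, and the paper's example~\eqref{eq:exponen} shows that shortest paths, and hence transition multiplicities, can be exponential even on a single stack. Your ``compact ordering certificate showing that the applications can be scheduled'' is exactly the missing idea---and it is precisely what the paper identifies as the main technical difficulty. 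The paper's solution is not a counting argument at all: it introduces \emph{marked subtransitions} (Lemmas~\ref{lem:subtrans} and~\ref{lem:polynom}), a device under which configuration size is monotone, and then proves that a polynomial-\emph{length} sequence of such subtransitions exists. Strong normedness is used there to bridge back from subtransitions to genuine transitions via canceling sequences; the pumping intuition you sketch for shrinking intermediate configurations is not sufficient on its own.

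In short, the overall architecture (guess state changes, handle each block as a stateless instance) matches the paper, but you have not supplied the key ingredient---a polynomial witness for \singsing\ reachability in the stateless strongly normed case---and the Parikh-style argument you gesture at does not work for ordered stacks.
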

Theorem~\ref{thm:np1} is the main result of this paper.
It is proved by showing that reachability is always witnessed by a polynomial witness,
obtained by careful elimination of 'irrelevant' transitions.

In the following two theorems we do not assume  \strnormedness, thus according to Theorem~\ref{thm:undecid}
we have to restrict target set to singleton. 
Under such a restriction, we are able to prove \NP-completeness only in the class of \stateless\ \MPDAs,
while for all  \weak\ \MPDAs\ we merely state decidability:

\newcommand{\thmnptwo}{The \regsing\ reachability is \NP-complete for \stateless\ \unnormed\ \MPDAs. }
\begin{thm} \label{thm:np2}
\thmnptwo
\end{thm}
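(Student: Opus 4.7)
NP-hardness is immediate from Lemma~\ref{lem:nphard}, which already establishes NP-hardness of the more restricted \singsing\ problem for \strnormed\ \stateless\ \MPDAs. For NP-membership, I would combine Lemma~\ref{lem:reduceproblems} --- which lets us nondeterministically guess a polynomial-size source $s \in L$ --- with a polynomial-size certificate for the resulting \singsing\ instance $s \reaches t$. The heart of the proof is thus to show that \singsing\ reachability lies in NP for \stateless\ \unnormed\ \MPDAs.

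The certificate I propose is a pruned derivation forest. In a \stateless\ \MPDA\ any derivation $s \reaches t$ decomposes per source symbol into trees whose internal nodes correspond to fired transitions (children = symbols pushed by the transition) and whose leaves are either \emph{contributing} (the symbols occurring in $t$) or $\eps$-transitions that terminate a branch. Call a node \emph{live} if its subtree contains a contributing leaf. My idea is to keep the live part of the forest explicit and to represent each maximal dead subtree merely by the label $Y$ of its root, recording the claim $Y \reaches \eps$. Such claims are verifiable in polynomial time, since the set $\al{S}_\eps$ of stack symbols from which $\eps$ is reachable is computable by a standard fixed-point iteration over the transition rules.

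To bound the live part, I would apply a pumping argument. Along any maximal chain of live nodes each having exactly one live child, if a stack symbol repeats at ancestor $u$ and descendant $v$, then all intervening side branches are dead, so replacing the subtree at $u$ by the subtree at $v$ preserves the contribution to $t$ and shortens the forest. Iterating bounds every such chain by $|\al{S}|$ in length. After contracting unit-chains, the live tree has $|t|$ contributing leaves and every internal node has at least two live children, hence size $O(|t|)$; un-contracted, the live part has size $O(|t| \cdot |\al{S}|)$. Each of its live internal nodes attaches at most $B$ children altogether (where $B$ bounds the right-hand-side length of transitions), so the whole certificate has size polynomial in $|s|$, $|t|$, and the \MPDA.

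The main obstacle is \emph{realizability}: the certificate must correspond to a legal firing sequence respecting the LIFO discipline on each stack. Two observations resolve it. First, the pumping step preserves schedulability, because the excised chain together with its dead side branches has net identity effect on every stack, so the surrounding transitions fire in the same order with exactly the same top-of-stack symbols. Second, each implicit dead subtree for some $Y \in \al{S}_\eps$ can be scheduled as one contiguous block immediately after the firing of its parent: any derivation witnessing $Y \reaches \eps$ (which may itself be exponentially long and need not be exhibited in the certificate) only pops symbols it itself has pushed, hence does not interfere with the surrounding configuration. Together these observations show that whenever $s \reaches t$ there is a polynomial-size certificate, completing NP-membership.
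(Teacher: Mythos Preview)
Your high-level plan is sound and matches the paper's: NP-hardness from Lemma~\ref{lem:nphard}, reduction of \regsing\ to \singsing\ via Lemma~\ref{lem:reduceproblems}, and then a polynomial witness for \singsing. Your derivation-forest certificate is a genuinely different formalisation from the paper's \emph{marked subtransitions} (the paper proves the analogue of your key lemma by passing to an auxiliary transition system in which irrelevant symbols are simply dropped, and then bounding runs there via an SCC/block-moving analysis). Both rely on the same underlying insight you identify: every dead symbol is necessarily normed, so $\al{S}_\eps$ is computable and dead subtrees can be collapsed.

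There is, however, a real gap in your schedulability argument for the pumping step. Your claim that ``the surrounding transitions fire in the same order with exactly the same top-of-stack symbols'' is false in general, because the excised segment may be \emph{interleaved} with surrounding transitions in the original schedule; removing it can leave $u$ sitting on its stack and blocking a surrounding pop. Concretely, take stacks $1,2$ with symbols $\s{X},\s{Z},\s{A}$ on stack~$1$ and $\s{Y}$ on stack~$2$, rules $\s{X}\trans\eps,\s{Y}$; $\s{Y}\trans\s{X},\eps$; $\s{X}\trans\s{A},\eps$; $\s{Z}\trans\eps,\eps$; $\s{Y}\trans\eps,\eps$, source $(\s{X}\s{Z},\eps)$, target $(\s{A},\eps)$. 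The derivation $\s{X}_1\!\to\!\s{Y}_1\!\to\!\s{X}_2\!\to\!\s{Y}_2\!\to\!\s{X}_3\!\to\!\s{A}$ (with $\s{Z}$ popped between $\s{X}_2$ and $\s{Y}_2$) has a live chain in which $\s{X}$ repeats; pumping $u=\s{X}_1$, $v=\s{X}_3$ and keeping the surrounding transitions \emph{in the same order} yields $(\s{X}\s{Z},\eps)\trans(\s{A}\s{Z},\eps)$, where $\s{Z}$ is now trapped below $\s{A}$. The pruned forest \emph{is} schedulable --- fire $\s{Z}$ earlier, right after $\s{X}_1$ --- but this requires reordering the dead root relative to live transitions, which your argument does not supply.

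So the pumping does preserve reachability, but not via the schedule you describe; you would need to prove that the pruned forest always admits \emph{some} schedule (e.g.\ by first normalising the original run so that maximal dead subtrees are executed as contiguous blocks at the earliest moment their root reaches the top, and then arguing that grafting $v$ at $u$ respects this normal form). This is fixable but is the missing idea. The paper's marked-subtransition formulation sidesteps the issue entirely: by building the ``drop irrelevant symbols'' operation into the transition relation itself, validity of each step is local and no global rescheduling argument is needed.
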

\newcommand{\thmdecid}{The \regsing\ reachability is decidable for \weak\ \unnormed\ \MPDAs.}
\begin{thm} \label{thm:decid}
\thmdecid
\end{thm}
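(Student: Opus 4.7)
By Lemma~\ref{lem:reduceproblems}, \regsing\ reachability reduces to \singsing\ reachability, so it suffices to decide, given two configurations $s$ and $t$ with control states $q_s$ and $q_t$, whether $s \reaches_{\aut{A}} t$. Weakness forces every run from $s$ to $t$ to traverse a non-increasing sequence of states in the partial order $\leq$ on $\al{Q}$, hence every visited state lies in the finite interval $\{q : q_t \leq q \leq q_s\}$. The plan is to proceed by induction on the rank of $q_s$, defined as the length of the longest strictly descending chain in $\leq$ starting at $q_s$. Write $\aut{A}_q$ for the stateless sub-automaton obtained from $\aut{A}$ by keeping only the transitions $q, \s{X} \trans q, \a_1, \ldots, \a_k$ that preserve the state $q$.

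In the base case $q_s$ is $\leq$-minimal, so by weakness every transition from a configuration with state $q_s$ stays in $q_s$; therefore $s \reaches_{\aut{A}} t$ holds iff $q_s = q_t$ and the tuple of stack contents of $t$ is reachable from that of $s$ in the stateless automaton $\aut{A}_{q_s}$, a question decidable by Theorem~\ref{thm:np2}. For the inductive step, a successful run either stays in $q_s$ throughout---again reducing to stateless reachability in $\aut{A}_{q_s}$---or else it contains a first state-changing transition applying some rule $\delta : q_s, \s{X} \trans q', \a_1, \ldots, \a_k$ with $q' < q_s$. Fixing such a $\delta$ and a stack index $i$ from which $\s{X}$ is popped, the run decomposes as $s \reaches_{\aut{A}_{q_s}} c \trans c' \reaches_{\aut{A}} t$ with $\delta$ firing at $c$, and the induction hypothesis decides each individual reachability query $c' \reaches_{\aut{A}} t$, since the state of $c'$ has strictly smaller rank.

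The main obstacle is that the set of candidate pivots $c'$ is infinite, so one cannot test reachability to $t$ for each $c'$ in turn. To overcome this, the plan is to maintain, along with the inductive proof, an effective symbolic description of the set $U_{q'} = \{c' : \text{state}(c') = q',\ c' \reaches_{\aut{A}} t\}$ for every $q' < q_s$, and simultaneously to exploit an effective description of the $\aut{A}_{q_s}$-reachable successors of $s$ obtained via the techniques behind Theorem~\ref{thm:np2}. Deciding existence of a fitting pair $(c, c')$ with $c' = \delta(c)$ will then reduce to an intersection-nonemptiness question between these two symbolic sets, taken over finitely many choices of $\delta$ and $i$. Termination of the recursive construction will rest on the well-order on configurations alluded to in the introduction: by linearizing the state partial order into a well-order consistent with $\leq$, the sets $U_{q}$ will be built stratum by stratum, and weakness prevents any infinite regress between strata.
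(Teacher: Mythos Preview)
Your inductive decomposition at the first state-changing transition is natural, but the plan breaks at the step where you propose to decide whether some $c$ reachable from $s$ in the stateless automaton $\aut{A}_{q_s}$ fires $\delta$ into the set $U_{q'}$. Even granting an effective description of $U_{q'}$, this is an instance of \singreg\ reachability in a \stateless\ \unnormed\ \MPDA, which is undecidable by Theorem~\ref{thm:undecid}. The ``techniques behind Theorem~\ref{thm:np2}'' do not help here: that theorem decides \singsing\ reachability only, and the paper explicitly shows that the forward reachability set of a single configuration in a \strnormed\ \stateless\ \MPDA\ is not regular in general, so no regular symbolic description of the $\aut{A}_{q_s}$-successors of $s$ is available. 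Symmetrically, regularity of the backward set $U_{q'}$ is only established (Theorem~\ref{thm:reach-set}) under \strnormedness, which you do not have. Your closing appeal to ``the well-order alluded to in the introduction'' does not repair this: without saying which order and how it interacts with transitions, it is a promissory note rather than an argument.

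The paper's proof avoids the intersection problem entirely by a direct well-structured-transition-system argument. It enriches configurations with a coloring that marks \emph{irrelevant} symbol occurrences (in the sense of Section~\ref{sec:proofoflemma}), bounds the number of uncolored symbols by $|\al{Q}|$ plus the size of $t$, and orders colored configurations by the pointwise Higman ordering restricted to colored blocks. This ordering is a well-quasi-order and is a backward simulation for the colored transition relation, so the standard finite-reachability-tree algorithm of~\cite{FS01} applies. The key idea you are missing is precisely this coloring: it packages the ``relevance'' analysis into the state space so that Higman's lemma can be invoked globally, rather than trying to glue together per-state reachability sets whose interfaces are, as shown above, intractable.
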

%
%Concerning the proofs, 
Theorem~\ref{thm:np2} is shown similarly to Theorem~\ref{thm:np1},
while the proof of Theorem~\ref{thm:decid} is based on a well order, the point-wise extension of a variant of Higman ordering.

\smallsection{Open questions}
Except for two few entries in the summarizing table above, where we merely prove decidability, 
we know the exact complexity of the reachability problem.
The remaining open question are summarized in Section~\ref{sec:conc}.
%that remains is the actual complexity of \singsing\ reachability for (\normed\ and \unnormed) \weak\ \MPDAs.
%Another interesting question is whether undecidability carries over to automata with two stacks only.

\subsection{Reachability set}  \label{sec:summary-reachableset}

Now we consider the problem of computing the whole reachability set.
For a given automaton $\aut{A}$, and a set $L$ of configurations, we consider forward and backward reachability sets of $L$, defined as:
\begin{align*}
\{ s : L \reaches_\aut{A} s \} \qquad \text{and} \qquad \{ s : s \reaches_\aut{A} L \},
\end{align*}
respectively.
It turns out that the backward reachability set is regular, whenever $L$ is regular,
under the \strnormedness\ assumption, even for unrestricted state spaces.

\newcommand{\thmreachset}{For \strnormed\ \MPDAs, 
the backward reachability set of a regular set is regular.
%For \weak\ \strnormed\ \MPDAs, the backward reachability set is effectively computable.
}
\begin{thm} \label{thm:reach-set}
\thmreachset
\end{thm}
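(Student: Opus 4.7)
The plan is to adapt the automata-theoretic saturation technique of~\cite{BEM97} from single-pushdown to multi-pushdown systems, using strong normedness as the key structural input.

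As a first ingredient I would prove the following closure lemma. For each control state $q$ of the given MPDA, the slice $B_q = \{\vec{\alpha} : (q,\vec{\alpha}) \reaches L\}$ of the backward reachability set is closed under coordinatewise suffix extension: $\vec{\beta} \in B_q$ implies $\vec{\gamma}\vec{\beta} \in B_q$ for every $\vec{\gamma}$, with concatenation taken coordinatewise. The proof is a direct simulation. By strong normedness there is a state-preserving computation from $(q,\vec{\gamma})$ to $(q,\vec{\eps})$; since every MPDA rule inspects only the top symbol of a single stack and rewrites only a top prefix, exactly the same sequence of transitions, applied to $(q,\vec{\gamma}\vec{\beta})$, yields $(q,\vec{\beta})$ with $\vec{\beta}$ sitting undisturbed at the bottom throughout. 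From $(q,\vec{\beta})$ we continue to $L$ by hypothesis.

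As a second ingredient I would set up a saturation procedure on a tuple-of-automata representation. For each control state $q$, maintain a tuple of NFAs $(\aut{C}_1^q, \ldots, \aut{C}_k^q)$ over $\al{S}_1, \ldots, \al{S}_k$ together with a set $F_q$ of accepting state-tuples, initialised to recognise $L$ restricted to state $q$. The saturation has two rules. The first, applied once, installs at every initial state of every $\aut{C}_i^q$ a self-loop for each letter of $\al{S}_i$, implementing the suffix closure of the previous paragraph. The second fires for each MPDA transition $q, \s{X} \trans q', \a_1, \ldots, \a_k$ with $\s{X} \in \al{S}_i$: whenever, in the current representation at $q'$, there exists a tuple $(t_1,\ldots,t_k)$ of states reached from the initials of $(\aut{C}_1^{q'},\ldots,\aut{C}_k^{q'})$ by reading $(\a_1,\ldots,\a_k)$ and lying on a path into $F_{q'}$, import this tuple into the representation at $q$---prefixing the $i$-th coordinate by a fresh $\s{X}$-transition out of $\aut{C}_i^q$'s initial state, joining the other coordinates via $\eps$-transitions from the respective initials, and adjoining the imported continuation tuple to $F_q$.

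For termination, note that states ever introduced into any $\aut{C}_i^q$ come from a fixed finite pool---copies of original automaton states indexed by the finitely many MPDA rules---so that $F_q$ lives in a bounded product of state spaces and the saturation stabilises after finitely many rounds. Correctness decomposes into soundness (every accepted configuration lies in the backward reachability set, by induction on the number of saturation steps paired with either an MPDA transition or a strong-normedness computation) and completeness (every element of the backward reachability set is eventually accepted, by induction on the length of a witness computation to $L$, using the second rule for each MPDA step and the first rule via the closure lemma to absorb suffix extensions at the bottom). The main obstacle lies in the second saturation rule: unlike the single-stack setting of~\cite{BEM97}, a single MPDA transition updates all $k$ stacks simultaneously through a joint product acceptance condition, so saturation must manipulate tuples of automata states coherently rather than individual states, and one must verify that the enlarged representation still captures the product-regularity notion from Section~\ref{sec:reachability}. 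Strong normedness is exactly what tames this complexity: it forces the backward reachability set to be suffix-closed per state, confining the variation in the accepted language to a bounded family of minimal shapes per stack, which yields both the regularity of the set and the termination of the saturation.
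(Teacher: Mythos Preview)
Your closure lemma is correct: strong normedness does imply that each slice $B_q$ is closed under putting arbitrary words on top of every stack. This is essentially the same observation the paper exploits. But the ordering underlying this closure---the coordinatewise suffix order on stack tuples---is \emph{not} a well-quasi-order (for instance $\{ab^n : n\geq 0\}$ is an infinite antichain), so upward closure alone does not force finitely many minimal elements and hence does not by itself yield regularity. You therefore really need the saturation to do the work, and this is where the plan breaks.

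The second saturation rule, as you describe it, is unsound. Adding an $\eps$-transition from the initial state of $\aut{C}_j^q$ to the state $t_j$ (for $j\neq i$) lets coordinate $j$ skip the prefix $\a_j$ \emph{independently} of what happens on the other coordinates; the accepting set $F_q$ cannot re-synchronise this, because the run on coordinate $j$ may take the $\eps$-edge coming from one MPDA rule while the run on coordinate $i$ takes the $\s{X}$-edge coming from a different rule (or from no rule at all). Concretely: take a stateless two-stack automaton with $\s{X}\trans \s{A},\s{B}$ together with the disappearing rules required by strong normedness, and target $L=\{(\s{A}\s{C},\,\s{B}\s{D})\}$. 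Your saturation adds $q_0\trans[\s{X}] q_1$ on stack~1 and $p_0\trans[\eps] p_1$ on stack~2 (where $q_1,p_1$ are the states reached after $\s{A},\s{B}$). The pair $(\s{A}\s{C},\,\s{D})$ is then accepted via the ordinary $\s{A}$-edge on stack~1 and the new $\eps$-edge on stack~2, yet $(\s{A}\s{C},\,\s{D})$ does not reach $L$: without an $\s{X}$ on stack~1 no $\s{B}$ can ever be produced. Indexing the imported states by MPDA rules, as you suggest for termination, does not repair this unless the accepting tuples are restricted to rule-consistent combinations---but then iterated saturation (two or more backward steps) requires nested copies and you lose the fixed finite pool.

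The paper proceeds quite differently and, crucially, \emph{non-effectively}. It first factors every configuration into an ``active'' top part and an untouched bottom part, reducing to backward reachability along \emph{fully active} paths into a fixed monoid class. For those, it shows upward closure not under the suffix order but under the coordinatewise \emph{bottom-fixed Higman} order (same bottom letter, Higman-related prefixes), which \emph{is} a well-quasi-order; finiteness of the minimal configurations then gives regularity. The paper explicitly states that it does not claim the regular representation is computable, and in the concluding section lists effectivity of this construction as an open problem. Your plan, if it worked, would settle that open problem; the counterexample above shows why the natural BEM97-style saturation does not.
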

Roughly speaking, we show that the backward reachability set is upward closed with respect to the point-wise extension of a suitable variant of Higman ordering. Note that we do not claim however that the reachability set may be effectively computed.

On the other hand, the forward reachability set needs not be regular, even in the case of \strnormed\ \stateless\ automata,
as shown in the following example.

\begin{exa}
Consider a \strnormed\ \stateless\ automaton with two stacks, over alphabets $\{\s{A}, \s{X}\}$ and $\{\s{B}\}$, and the following 
transition rules:
\[
\s{X} \trans \s{X} \s{A}, \  \s{B} \qquad \s{X} \to \varepsilon, \ \varepsilon \qquad \s{A} \to \varepsilon,\ \varepsilon \qquad \s{B} \to \varepsilon, \ \varepsilon.
\]
The set of configurations reachable from the configuration $(\s{X}, \varepsilon)$ is not regular:
\[ 
\{ (\s{A}^i, \s{B}^j) : i, j \in \Nat \} \ \ \cup \ \  \{ (\s{X} \s{A}^k, \s{B}^l) : k \geq l \} .
\]
%which is not a regular set.
\end{exa}

\subsection{Relaxed regularity}  \label{sec:summary-relaxed}

The relaxed definition of regularity,  as discusses in Remark~\ref{rem:reg}, 
makes the reachability problem intractable in all cases. The following theorem is shown
by reduction from the Post Correspondence Problem:

\newcommand{\thmundecidrelaxed}{The \singreg\ reachability is undecidable for \stateless\ \strnormed\ \MPDAs, under the relaxed notion of regularity.}
\begin{thm} \label{thm:undecid-relaxed}
\thmundecidrelaxed
\end{thm}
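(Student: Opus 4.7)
The plan is to reduce from the Post Correspondence Problem (PCP). Given an instance with pairs $(u_1,v_1),\ldots,(u_m,v_m)$ over a finite alphabet $\Sigma$, I will construct a \stateless, \strnormed\ two-stack \MPDA\ $\aut{A}$, a singleton source configuration and a target set $K$ that is regular in the relaxed sense, with the property that reachability from the source to $K$ holds iff the PCP instance has a solution.

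The two stacks use disjoint alphabets $\al{S}_1 = \Sigma^{(1)} \cup \{\s{X}\}$ and $\al{S}_2 = \Sigma^{(2)} \cup \{\s{Y}\}$, where $\Sigma^{(1)},\Sigma^{(2)}$ are two disjoint copies of $\Sigma$ and $\s{X}, \s{Y}$ are distinguished markers. The source configuration is $\confless{\s{X}}{\s{Y}}{}$. For each PCP index $i$ I include the rule
\[
\s{X} \ \trans\ \s{X}\, u_i^{(1)},\;\; \s{Y}\, v_i^{(2)},
\]
so that applying it pops $\s{X}$ from stack $1$, puts $\s{X}$ back on top of $u_i^{(1)}$, and pushes $\s{Y} v_i^{(2)}$ on top of stack $2$; the markers are therefore preserved on the top of each stack after every PCP step. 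To achieve strong normedness, I add emptying rules $Z \trans \eps, \eps$ for every stack symbol $Z$, so that any configuration can be reduced to the empty one. The target set $K$ consists of configurations of the form $\confless{\s{X} w^{(1)}}{\s{Y}\alpha^{(2)}}{}$ in which $\s{X}$ and $\s{Y}$ sit on the tops of their stacks and, after erasing the further occurrences of $\s{Y}$ from $\alpha$, the remaining $\Sigma^{(2)}$-string equals $w$ under the canonical bijection $\Sigma^{(1)}\!\leftrightarrow\!\Sigma^{(2)}$; this set is recognized, in the product encoding, by a small NFA that reads the two stacks in lockstep and skips the $\s{Y}$-positions of stack $2$.

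The forward direction is immediate: a PCP solution $i_1,\ldots,i_n$ with $u_{i_1}\cdots u_{i_n} = v_{i_1}\cdots v_{i_n}$ gives a run applying the PCP rules in the appropriate (reversed) order, yielding a configuration in $K$. For the backward direction, the main invariant I would establish is that the markers $\s{X}$ and $\s{Y}$ must remain on the tops of their stacks along any run that reaches $K$: the emptying rules for the markers destroy them irrecoverably (no rule can re-push a marker once the corresponding top is in $\Sigma$), so only the PCP rules can act productively on a successful run. This forces the reached configuration to be $\confless{\s{X} u_{i_n}^{(1)} \cdots u_{i_1}^{(1)}}{\s{Y} v_{i_n}^{(2)} \s{Y}\cdots\s{Y} v_{i_1}^{(2)}\s{Y}}{}$ for some sequence $(i_1,\ldots,i_n)$, and the matching condition in $K$ then yields the PCP solution after reversing the sequence.

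The main obstacle is precisely the \emph{cheating} detour: after a PCP step, one may pop $\s{Y}$, then pop some $b^{(2)}$-letters, then apply a fresh PCP rule to restore $\s{Y}$ on top, effectively shortening stack $2$ without touching stack $1$. To rule this out I would need the definition of $K$ to be sensitive enough to the $\s{Y}$-delimited block structure of stack $2$ that every cheating detour produces an irrecoverable inconsistency (for instance, a stray $\s{Y}$-block whose content does not match the expected $v_i$). Verifying that all such detours are rejected by the regular target $K$, and not only the straightforward cheating attempts, is the delicate part of the proof.
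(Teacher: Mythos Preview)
Your two-stack construction does not work, and the ``block-structure'' patch you suggest does not save it. Here is a concrete counterexample. Take the PCP instance with pairs $(u_1,v_1)=(a,ab)$ and $(u_2,v_2)=(b,c)$; this instance has no solution (any use of index~$2$ introduces a $c$ on the $v$-side only, and using index~$1$ alone gives $a^n$ versus $(ab)^n$). In your automaton, run as follows from $\confless{\s X}{\s Y}{}$: apply the PCP rule for index~$2$ to get $\confless{\s X b}{\s Y c \s Y}{}$; pop $\s Y$, pop $c$, pop $\s Y$ to get $\confless{\s X b}{\eps}{}$; finally apply the PCP rule for index~$1$ to get $\confless{\s X a b}{\s Y a b}{}$. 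This configuration has $\s X$ and $\s Y$ on top, and the $\Sigma$-contents below them are both $ab$, so it lies in your target $K$. Moreover the single $\s Y$-delimited block on stack~$2$ is exactly $v_1=ab$, so even your strengthened $K$ that checks block validity accepts it. The essential problem is that your stack~$1$ cannot detect that stack~$2$ has been emptied and refilled: once the guard $\s Y$ is gone, arbitrary deletion is possible, and a later PCP step restores the guard while contributing a \emph{different} index to stack~$1$ than what remains on stack~$2$.

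The paper's proof avoids this by using redundancy and a counting invariant rather than trying to make a single comparison robust. It employs \emph{seven} stacks: three of them carry ``guard'' symbols $\s G^1,\s G^4,\s G^7$ permanently on top (so nothing below a guard can ever fire a disappearing rule), and the remaining four are unguarded duplicates whose contents must match the guarded stacks in the target language. The key lemma shows, via a weight function on symbols, that two semi-invariants of the form $\mathrm{SInv}_j(\alpha)=\size(\text{guarded stack})-\size(\text{unguarded duplicate})$ are always nonnegative along any run and must equal zero at the target; since every disappearing transition on an unguarded stack strictly increases one of these semi-invariants, no such transition can occur on a successful run. This forces the unguarded stacks to faithfully record the PCP guesses, and the relaxed-regular target then checks the required equalities. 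Your idea of a top guard is the right starting point, but one guarded stack per side is not enough; you need a guarded witness \emph{and} an unguarded copy for each quantity you want to compare, together with an invariant argument that rules out deletion on the unguarded copy.
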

Furthermore, the backward reachability set of a relaxed regular set is not necessarily regular, 
even in \stateless\ \strnormed\ \MPDAs, as illustrated by the following example.

\begin{exa}\label{example_backward_set}
The automaton uses two stacks, with alphabets $\{\s{A},\s{X}, \s{B}\}$ and $\{\s{C}\}$. 
Every symbol has a disappearing rule:
%\begin{align*}
$\s{A} \trans \varepsilon, \ \varepsilon$,
% &&
%\s{X} \trans \varepsilon, \ \varepsilon &&
%\s{B} \trans \varepsilon, \ \varepsilon &&
%\s{C} \trans \varepsilon, \ \varepsilon 
%\end{align*}
and likewise for $\s{X}$, $\s{B}$ and $\s{C}$.
Additionally there is a transition rule $\s{B} \trans \s{C}$.
Consider the relaxed regular language
\[
L = \{ (\s{X} \s{A}^n, \s{C}^n) : n \geq 0 \} 
\]
and its backward reachability set. Denote by $K$ the subset of the backward reachability set
consisting of configurations with the second stack empty.
We claim that the projection of $K$ on the alphabet of the first stack is a word language which is not regular.
Indeed, as the only non-disappearing rule is $\s{B} \trans \s{C}$, configurations from $K$
have on the first stack a word of the form
$
w \s{X} \s{A}^n,
$
with at least $n$ occurrences of $\s{B}$ in $w$.
\end{exa}

The remaining sections contain proofs of all the results announced Section~\ref{sec:summary}. 
For Reader's convenience, every result is restated before its proof.

%%% Local Variables: 
%%% TeX-master: "main"
%%% End: 

\section{Proof of the polynomial source configuration property}   % Lemma~\ref{lem:reduceproblems}}
\label{sec:proofoflemma}

\restate{Lemma~\ref{lem:reduceproblems}}{\lemreduceproblems}

\begin{proof}
Consider an \MPDA\ $\aut{A}$ and a regular set $L$ of configurations of $\aut{A}$.
Let $\c{s} \in L$ be source configuration and let $\c{t}$ be an arbitrary target configuration. 
Suppose $\c{s} \reaches_\aut{A} \c{t}$.
We will show that the size of $\c{s}$ may be reduced, while preserving membership in $L$.
The crucial but simple idea of the proof will rely on an analysis of \emph{relevance} of 
symbol occurrences, to be defined below. 

\smallsection{Symbol occurrences.}
Suppose that there is a path $\pi$ from $\c{s}$ to $\c{t}$, consisting of consecutive transitions
$
\c{s}\ \trans \ \c{s}_1 \ \trans \ \c{s}_2 \ \ldots \ \trans \ \c{s_n} = \c{t} .
$
We will consider all individual occurrences of symbols that appear in the configurations.
For instance, in the following exemplary sequence of two-stack configurations
\begin{align} \label{eq:config}
\langle q, \s{A}\s{A}, \s{C} \rangle
\ \trans \ 
\langle q, \s{B}\s{B}\s{A}, \s{D}\s{C} \rangle
\ \trans \ 
\langle q, \s{A}\s{B}\s{B}\s{A}, \s{D}\s{C} \rangle
\end{align}
there are altogether 14 \emph{symbol occurrences}: 3 in the first configuration, 5 in the second one
and 6 in the third one.

Recall that every transition 
%\begin{align} % \label{eq:trans}
$\c{s}_i \trans \c{s}_{i+1}$
%\end{align}
is induced by some transition rule $q_1,\ \s{X} \trans q_2,\ \a$ of the automaton. 
Then there is a distinguished occurrence of symbol $\s{X}$ in $\c{s}_i$ 
that is involved in the transition. In the sequel we use the term \emph{symbol occurrence involved in a transition}.
%~\eqref{eq:trans}}.

Precisely one occurrence of symbol in $\c{s}_i$ is involved in the transition $\c{s}_i \trans \c{s}_{i+1}$; 
for every other occurrence of a symbol in $\c{s}_i$ there is a \emph{corresponding} 
occurrence of the same symbol in $\c{s}_{i+1}$.
(Note that we always make a difference between corresponding symbol occurrences from different configurations.)
All remaining occurrences of symbols in $\c{s}_{i+1}$ are created by the transition; 
we call these occurrences \emph{fresh}.

We define the \emph{descendant} relation as follows.
All fresh symbol occurrences in $\c{s}_{i+1}$ are descendants of the symbol occurrence in $\c{s}_i$ involved in the transition
$\c{s}_i \trans \c{s}_{i+1}$. 
Moreover, a symbol occurrence in $\c{s}_{i+1}$ corresponding to a symbol occurrence in $\c{s}_i$ is its descendant too.
We will use term \emph{descendant} for the reflexive-transitive closure of the relation defined above
and the term \emph{ancestor} for its inverse relation.
In particular, every symbol occurrence in $\c{t}$ is descendant of a unique symbol occurrence in $\c{s}$.
The descendant relation is a forest, i.e., a disjoint union of trees.

\begin{exa}
As an example, consider again the sequence of  transitions~\eqref{eq:config}, with 
symbol occurrences identified by subscripts $1 \ldots 14$: 
\begin{align} \label{eq:configoccurr}
\langle q, \s{A}_{1}\s{A}_2, \s{C}_3 \rangle
\ \trans \ 
\langle q, \s{B}_4\s{B}_5\s{A}_6, \s{D}_7\s{C}_8 \rangle
\ \trans \ 
\langle q, \s{A}_9\s{B}_{10}\s{B}_{11}\s{A}_{12}, \s{D}_{13}\s{C}_{14} \rangle
\end{align}
Say the transitions are induced by the following two transition rules:
\begin{align*}
q, \ \s{A} \ & \trans  \ q, \ \s{B}\s{B}, \  \s{D}  &
q, \ \s{D} \ & \trans  \ q, \ \s{A}, \  \s{D}
\end{align*}
The descendant relation can be presented as the following forest:
\begin{center} \label{eq:desc-tree}
  \begin{tikzpicture}[scale=0.75]             
	    	   \node (A1) at (0,1)	{$\s{A}_1$};
		   \node (B4) at (2,0)	{$\s{B}_4$};
	    	   \node (B5) at (2, 1) {$\s{B}_5$};
	    	   \node (D7) at (2, 2) {$\s{D}_7$};
		   \node (B10) at (4, 0) {$\s{B}_{10}$};
		   \node (B11) at (4, 1) {$\s{B}_{11}$};
		   \node (A9) at (6, 1)	{$\s{A}_9$};
	    	   \node (D13) at (6, 2) {$\s{D}_{13}$};
		   
		   \node (C3) at (0+9, 2) {$\s{C}_3$};
		   \node (C8) at (2+9, 2) {$\s{C}_8$};
		   \node (C14) at (4+9, 2) {$\s{C}_{14}$}; 

		   \node (A2) at (0+9, 0) {$\s{A}_2$};
		   \node (A6) at (2+9, 0) {$\s{A}_6$};
		   \node (A12) at (4+9, 0)  {$\s{A}_{12}$};
		    
		    \path[->] (A1) edge (B4);
		    \path[->] (A1) edge (B5);
		    \path[->] (A1) edge (D7);
		    \path[->] (B4) edge (B10);
		    \path[->] (B5) edge (B11);
		    \path[->] (D7) edge (A9);
		    \path[->] (D7) edge (D13);
		
		    \path[->] (C3) edge (C8);
		    \path[->] (C8) edge (C14);

		    \path[->] (A2) edge (A6);
		    \path[->] (A6) edge (A12);

               \end{tikzpicture}
\end{center}
\end{exa}
\noindent
The symbol occurrences involved in the two transitions~\eqref{eq:configoccurr}
are $\s{A}_1$ in the first configuration and $\s{D}_7$ in the second one.
The fresh symbol occurrences are $\s{B}_4$, $\s{B}_5$ and $\s{D}_7$ in the second configuration,
and $\s{A}_{9}$ and $\s{D}_{13}$ in the third one.

\smallsection{Relevant symbol occurrences.}
% Suppose that there is a path $\pi$ from $\c{s}$ to $\c{t}$.
As the automaton $\aut{A}$ is weak, the number of transitions in $\pi$ that change state is bounded
by the number of states of $\aut{A}$. All remaining transitions in $\pi$ do not change state.

Consider all the occurrences of all symbols in all configurations along the path $\pi$, 
including configurations $\c{s}$ and $\c{t}$ themselves.
A symbol occurrence is called \emph{relevant} if some of its descendants:
\begin{iteMize}{$\bullet$}
\item belongs to the target configuration $\c{t}$; or
\item is involved in some transition in $\pi$ that changes state.
\end{iteMize}
Otherwise, a symbol occurrence is \emph{irrelevant}.
In particular, all symbol occurrences in $\c{t}$ are relevant.
Referring back to our example, all symbol occurrences appearing in~\eqref{eq:configoccurr} are relevant.

Note that if $t$ is not the empty configuration then every configuration in $\pi$ contains at least one relevant symbol occurrence.
On the other side, in every configuration, the number of relevant occurrences is always bounded
by the sum of the size of $\c{t}$ and the number of states of $\aut{A}$.

\smallsection{Small source configuration.}
So prepared, we are ready to prove that there is a configuration $\c{s}' \in L$ of polynomial size with 
$\c{s}' \reaches_\aut{A} \c{t}$. We will rely on the following lemma:
\begin{lem} \label{lem:order}
For any configuration $\c{s}'$ obtained from $\c{s}$ by removing some irrelevant symbol occurrences,
it holds $\c{s}' \reaches_\aut{A} \c{t}$.
\end{lem}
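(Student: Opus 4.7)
The plan is to carry out the natural ``simulation'' argument: follow the path $\pi$ from $\c{s}$ to $\c{t}$ step by step, but pretend that the removed irrelevant occurrences (and all of their descendants along $\pi$) were never there. Concretely, let $R$ be the set of irrelevant occurrences in $\c{s}$ we wish to remove, and let $D$ be the union of the descendant trees of the members of $R$, viewed as a set of symbol occurrences across all configurations in $\pi$. For every $i$, define $\c{s}_i' \in \confset$ by deleting from each stack of $\c{s}_i$ all symbol occurrences that lie in $D$, preserving the order of the remaining ones. By construction $\c{s}_0' = \c{s}'$, and since irrelevance of $R$ guarantees that no descendant of $R$ lies in $\c{t}$, we also have $\c{s}_n' = \c{t}$.

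The main work is to show that $\c{s}_0', \c{s}_1', \ldots, \c{s}_n'$ is a sequence of configurations such that consecutive members either coincide or are related by a single transition of $\aut{A}$. Fix a transition $\c{s}_i \trans \c{s}_{i+1}$ of $\pi$ that applies some rule $q, \s{Y} \trans q', \a_1, \ldots, \a_k$, popping an occurrence $y$ of $\s{Y}$ from the top of stack $j$ of $\c{s}_i$ and pushing fresh descendants of $y$ onto the stacks. There are two cases. If $y \in D$, then all fresh descendants of $y$ also lie in $D$, so passing to $\c{s}_i'$ and $\c{s}_{i+1}'$ deletes exactly the symbols that the rule would have added or removed; hence $\c{s}_i' = \c{s}_{i+1}'$. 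Moreover, because $y$ is a descendant of some irrelevant occurrence, this transition does not change state, so the simulation is consistent.

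If $y \notin D$, then the fresh descendants of $y$ are descendants of $y$ and therefore not in $D$ either, so they survive the pruning. It remains to check that $y$ is still on top of stack $j$ in $\c{s}_i'$. This is the one point that requires a little care, but it is almost immediate: $y$ is the topmost symbol of stack $j$ in $\c{s}_i$, so the only occurrences we could have removed from stack $j$ are below $y$, and deleting symbols below $y$ does not affect what sits on top. Hence the same rule applies to $\c{s}_i'$ and produces exactly $\c{s}_{i+1}'$, including the correct state change (or lack thereof). Collapsing the stutter steps in $\c{s}_0', \c{s}_1', \ldots, \c{s}_n'$ yields the desired path $\c{s}' \trans \ldots \trans \c{t}$.

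I expect the genuine obstacle to be conceptual rather than technical: one has to trust the right invariant, namely that the descendant forest faithfully records the dependencies between symbol occurrences along $\pi$, and that pruning a subtree rooted in $\c{s}$ amounts to removing exactly a ``self-contained'' subcomputation that can be skipped. Once this is set up, the case analysis above is routine, and the weakness assumption on $\aut{A}$ enters only implicitly, through the bounded number of state changes that the definition of relevance rests on.
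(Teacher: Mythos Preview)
Your proof is correct and follows exactly the approach the paper takes; the paper merely compresses the argument into two observations (that the transitions involving surviving occurrences and their descendants can be re-done, and that the result is precisely $\c{t}$), while you spell out the simulation with the pruned sequence $\c{s}_0',\ldots,\c{s}_n'$ and the two-case analysis on whether the popped occurrence lies in $D$. One small remark on your closing sentence: the weakness assumption is not used at all in this lemma---it is needed only later, to bound the number of relevant occurrences in each configuration.
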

The lemma follows from the following two observations: (1)
all the transitions in $\pi$ involving symbol occurrences remaining in $\c{s}'$
and their descendants may be re-done;
(2) the resulting configuration will be exactly $\c{t}$, as 
only irrelevant symbol occurrences have been removed from $\c{s}$.

Recall that the language $L$ is represented by a tuple $\aut{B}_1 \ldots \aut{B}_k$ of deterministic finite automata, 
one automaton per stack. 
Consider the content of a fixed $i$th stack in $\c{s}$, say $w \in \al{A}_i^*$. 
Let $n$ be the number of states of $\aut{B}_i$.
The run of the automaton $\aut{B}_i$ over $w$ labels every position of $w$ by some state. 
We will use a standard pumping argument to argue that every block of consecutive irrelevant symbol occurrences
in $\c{s}$ may be reduced in length to at most $n$. Indeed, upon every repetition of a state of $\aut{B}_i$,
the word $w$ may be shortened by removing the induced infix, while preserving membership in $L$.
By repeating the pumping argument for all blocks of consecutive irrelevant symbol occurrences in 
all stacks in $\c{s}$, one obtains a configuration $\c{s}'$, still belonging to $L$,
of quadratic size.
By Lemma~\ref{lem:order} we know that $\c{s}' \reaches \c{t}$, as required.
\end{proof}

\section{NP-completeness}
\label{sec:NP}

In this section we prove the following two theorems of Section
\ref{sec:summary}:\bigskip

\restate{Theorem~\ref{thm:np1}}{\thmnpone} 

\medskip

\restate{Theorem~\ref{thm:np2}}{\thmnptwo} 

\bigskip

NP-hardness in both cases follows from Lemma~\ref{lem:nphard}.
Concerning Theorem~\ref{thm:np1}, the proof of membership in \NP\ relies on the following two core lemmas:

\begin{lem} \label{lem:np1-proof-step1}
The \singsing\ reachability problem is in \NP\ for \strnormed\ \weak\ \MPDAs.
\end{lem}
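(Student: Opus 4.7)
The plan is to use weakness to decompose any witnessing run into at most $|\al{Q}|$ intra-state segments separated by at most $|\al{Q}|-1$ state-changing transitions, thereby reducing the problem to singleton-singleton reachability in a \strnormed\ \stateless\ \MPDA\ between each pair of successive hand-off configurations. Since $\aut{A}$ is \weak, the sequence of control states visited along any run from $\c{s}$ to $\c{t}$ is strictly decreasing with respect to $\leq$, hence of length at most $|\al{Q}|$. The NP algorithm I propose first guesses this sequence, the state-changing transitions $\tau_1, \ldots, \tau_m$ used, and the configurations $\c{c}_1, \ldots, \c{c}_m$ occurring immediately before each $\tau_i$ (setting $\c{c}_{m+1} := \c{t}$ and $\c{c}_0 := \c{s}$). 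Lemma~\ref{lem:reduceproblems} is the main tool for keeping the hand-offs polynomial: applied from right to left, with each intermediate backward reachability set serving as the regular source language and $\c{t}$ as the ultimate target, it supplies a polynomial choice of each $\c{c}_i$.

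It then remains to verify that each $\tau_i$ is applicable at $\c{c}_i$ and that each intra-state reachability between consecutive hand-offs holds. Each such intra-state check is a singleton-singleton reachability instance in the \strnormed\ \stateless\ \MPDA\ obtained from $\aut{A}$ by fixing the state to the corresponding $q_i$ and retaining only state-preserving transitions. Hence the lemma reduces to the analogous NP-membership statement for the \stateless\ case, which I anticipate establishing as a companion lemma later in this section. For that stateless case, I would analyse a witnessing run as a derivation forest whose roots are the symbol occurrences of the source and whose leaves are either occurrences surviving to the target or roots of ``transient'' subtrees that collapse entirely. Strong normedness guarantees that every symbol admits a collapse derivation, which I expect to encode compactly by a Parikh vector of transition counts together with an acyclic dependency ordering on stack symbols that ensures the collapses can actually be scheduled; the surviving skeleton of the forest will be made polynomial by a pumping argument across stack alphabets.

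The principal obstacle is this stateless analysis. Individual collapse derivations can be exponentially long, and because a single transition may push onto several stacks simultaneously, collapses interact across stacks through shared transition rules. I expect to handle this in the spirit of the NP upper bound for communication-free Petri nets from~\cite{E97}, where a similar Parikh-plus-dependency summary certifies reachability, but adapted to cope with the nontrivial stack alphabets of \MPDAs.
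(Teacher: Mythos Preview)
Your high-level decomposition into at most $|\al{Q}|$ intra-state segments separated by state-changing transitions is exactly how the paper organises the argument, but your tool for bounding the hand-off configurations $\c{c}_i$ does not work. Lemma~\ref{lem:reduceproblems} takes a regular \emph{source} set $L$ and a single target $\c{t}$ and produces a small $\c{s}' \in L$ with $\c{s}' \reaches \c{t}$; it says nothing about $\c{s}'$ being forward-reachable from anything. If you instantiate $L$ with a backward reachability set, the lemma hands back some small element of that set, but there is no reason this element is reachable from $\c{s}$ along the already-fixed prefix of the state sequence. Moreover, the bound in Lemma~\ref{lem:reduceproblems} is polynomial in the \emph{representation} of $L$, and the paper explicitly refrains from claiming any size bound (or even effective computability) for the backward reachability sets of Theorem~\ref{thm:reach-set}; the well-quasi-order argument there yields finiteness only. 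So even if the logical shape were right, the resulting ``polynomial'' would be polynomial in a quantity you cannot control.

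The paper sidesteps this with a device you are missing: it replaces transitions by \emph{marked subtransitions}, obtained by deleting the irrelevant right-hand-side occurrences from each rule while requiring at least one symbol to survive whenever the state is unchanged. The key point is that a marked subtransition can never shrink a configuration (Lemma~\ref{lem:nondecr}), so along the marked run the size can drop by at most one at each of the $\leq |\al{Q}|$ state changes, and every intermediate marked subconfiguration automatically has size at most $\text{size}(\c{t}) + |\al{Q}|$. The real work is then Lemma~\ref{lem:polynom}: showing that a sequence of marked subtransitions between two bounded configurations can itself be made polynomially long. The paper's argument there is combinatorial --- it isolates \emph{singleton} subtransitions, passes to their strongly connected components, and proves that shuffling symbols among stacks within a component takes polynomially many moves. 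This is rather far from the Parikh-vector/communication-free-net idea you sketch; that style of summary discards the order of symbols on a stack, which is exactly what the paper's argument manipulates, and you do not indicate how your adaptation would recover it.
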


\begin{lem} \label{lem:np1-proof-step2}
Let $\aut{A}$ be a \strnormed\ \weak\ \MPDA\ and let $L, K$ be regular sets of configurations.
If $L \reaches K$ then $\c{s} \reaches \c{t}$ for some $\c{s} \in L$ and $\c{t} \in K$ of size polynomial wrt.~the sizes of $\aut{A}$, $L$ and $K$. 
\end{lem}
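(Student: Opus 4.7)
The plan is to begin with any witness path $\pi$ of the form $\c{s}_0 \trans \ldots \trans \c{t}_0$ with $\c{s}_0 \in L$ and $\c{t}_0 \in K$, and then massage it so that both endpoints have size polynomial in $|\aut{A}|$, $|L|$, and $|K|$. I would proceed in two phases: first shrink the target $\c{t}_0$ to polynomial size by exploiting \strnormedness\ and the DFA representation of $K$, and only then invoke Lemma~\ref{lem:reduceproblems} to shrink the source. Doing the phases in this order is important, because Lemma~\ref{lem:reduceproblems} provides a source polynomial in the \emph{current} target size, so we cannot afford to call it while the target is still large.

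For the target phase, write $\c{t}_0 = \config{q^*}{w_1}{\ldots}{w_k}$ and let $\aut{B}_1, \ldots, \aut{B}_k$ with accepting set $F$ be a representation of $K$. If some $|w_i|$ exceeds the number of states of $\aut{B}_i$, a pigeonhole argument on the run of $\aut{B}_i$ over $w_i$ yields an infix that can be removed while preserving the final state of $\aut{B}_i$, giving a strictly shorter $w_i'$ so that the modified tuple $\c{t}$ still lies in $K$ (the accepting tuple in $F$ is unchanged). It remains to show that $\c{s}_0 \reaches \c{t}$, i.e., to adapt $\pi$ so that stack $i$ ends with $w_i'$ rather than $w_i$. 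The idea is to trace the symbol occurrences of the pumped-out block through the descendant forest introduced in the proof of Lemma~\ref{lem:reduceproblems}, back to the transitions that created them, and, immediately after each such creation, insert a state-preserving \emph{disposal} sub-path that pops the freshly created unwanted symbol; such disposal sub-paths exist by \strnormedness. Iterating this pump-and-dispose step across all stacks produces a modified witness $\c{s}_0 \reaches \c{t}$ with $\c{t} \in K$ of size polynomial in $|\aut{A}|$ and $|K|$.

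With the target $\c{t}$ now of polynomial size, Lemma~\ref{lem:reduceproblems} applied to $L$ and $\c{t}$ delivers $\c{s} \in L$ of size polynomial in $|\aut{A}|$, $|L|$, and $|\c{t}|$, witnessing $\c{s} \reaches \c{t}$. Combining the two bounds yields source and target both polynomial in $|\aut{A}|$, $|L|$, and $|K|$, as required.

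The main obstacle lies precisely in the \emph{locality} of the disposal step above: \strnormedness\ by definition provides a state-preserving path to the \emph{empty} configuration, whereas we need to dispose of one (or a few) freshly pushed symbols without destroying parts of the configuration that later transitions of $\pi$ will consume. I would address this by decomposing $\pi$ into its at most $|\aut{A}|$ maximal state-preserving segments, which exist because $\aut{A}$ is \weak, and working inside each segment separately. Within a single state-preserving segment the automaton acts essentially as a \stateless\ \strnormed\ \MPDA, which should afford enough flexibility to commute a disposal sub-path with the remaining transitions of that segment, so that the net effect is a clean local modification that respects the state-changing transitions at segment boundaries and leaves the global structure of $\pi$ intact.
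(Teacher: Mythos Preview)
There is a genuine gap in the target-shrinking phase. Your pump-and-dispose step removes an \emph{arbitrary} infix of $w_i$ determined by a state repetition in $\aut{B}_i$, and then proposes to cancel each removed symbol ``immediately after creation'' using \strnormedness. But a freshly pushed symbol need not be on top of its stack: the creating transition may push other symbols above it in the same step, and those symbols (or their descendants) may be exactly the ones you want to keep. A single-stack \stateless\ \strnormed\ example already exhibits the failure: with rules $\s{S} \trans \s{X}\s{A}\s{B}$, $\s{X} \trans \s{C}\s{D}$, and a disappearing rule for every symbol, the configurations reachable from $\s{S}$ are $\s{S},\ \s{X}\s{A}\s{B},\ \s{C}\s{D}\s{A}\s{B},\ \s{D}\s{A}\s{B},\ \s{A}\s{B},\ \s{B},\ \eps$. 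If $\aut{B}_i$ happens to repeat a state across the infix $\s{D}\s{A}$ of $\s{C}\s{D}\s{A}\s{B}$, your argument would conclude that $\s{C}\s{B}$ is reachable; it is not, because $\s{A}$ is never exposed on top while $\s{C}$ is still present. Decomposing $\pi$ into state-preserving segments does not help---this example is already \stateless.

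The paper avoids this by refusing to pump arbitrary infixes. It colors symbol occurrences along $\pi$ so that each stack in $\c{t}$ factorizes into contiguous \emph{lines}: a fresh color (new line) is introduced only when a transition pops from one stack and pushes onto a \emph{different} stack, whereas pushes onto the same stack inherit the popped symbol's color. The key property is that every active line is top-most on its stack at some moment of $\pi$, and after the last such moment the line is inert; hence an entire line can be cancelled at that moment via \strnormedness\ without disturbing the rest of the run. Pumping on $\aut{B}_i$ at line boundaries then bounds the \emph{number} of lines. Bounding the \emph{length} of a single line requires a second, separate ingredient that your proposal does not have: the transitions that build up (the local part of) a line act on one stack only, so they constitute an ordinary pushdown computation whose set of reachable configurations is recognised by a polynomial-size finite automaton; simultaneous pumping on this automaton and on $\aut{B}_i$ shortens the line while preserving both reachability and membership in $K$. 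Your final invocation of Lemma~\ref{lem:reduceproblems} to bound the source is correct and matches the paper.
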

Indeed, the two lemmas easily yield a decision procedure for \regreg\ reachability:
guess configurations $\c{s} \in L$ and $\c{t} \in K$ of size bounded by a polynomial deduced from
the proof of Lemma~\ref{lem:np1-proof-step2}, and then apply the procedure of Lemma~\ref{lem:np1-proof-step1}
to check if $\c{s} \reaches \c{t}$.

Concerning Theorem~\ref{thm:np2}, membership in \NP\ follows easily by Lemma~\ref{lem:reduceproblems} together
with the following:
\begin{lem} \label{lem:np2-proof}
The \singsing\ reachability problem is in \NP\ for \stateless\ \unnormed\ \MPDAs.
\end{lem}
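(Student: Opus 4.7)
The plan is to exhibit a polynomial-size certificate for $\c{s} \reaches \c{t}$ that can be verified in polynomial time, thereby placing the problem in \NP. I would build on the descendant-forest framework developed in the proof of Lemma~\ref{lem:reduceproblems}, now specialized to the stateless setting and applied to the whole derivation rather than just to its starting point.

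Since no transition changes state in a \stateless\ automaton, a symbol occurrence along a derivation is \emph{relevant} precisely when some of its descendants lies in $\c{t}$. The relevant occurrences form a forest whose roots are among the occurrences of $\c{s}$ and whose leaves are exactly the occurrences of $\c{t}$. I would then apply a pumping argument to this \emph{relevant skeleton}: whenever a non-branching path inside the skeleton visits two occurrences carrying the same stack symbol, the intervening segment may be excised, producing an equivalent derivation with a strictly smaller skeleton. Iterating this contraction and exploiting the fact that branching nodes are bounded by $|\c{t}|$, one obtains a derivation whose relevant skeleton has size polynomial in $\aut{A}$, $\c{s}$, and $\c{t}$.

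The rest of the derivation consists of \emph{irrelevant subtrees}, each rooted at some fresh symbol introduced by a skeleton transition and collapsing eventually to the empty configuration. In the \stateless\ case, such subtrees can be scheduled independently from the skeleton and from each other, so it suffices to produce, for each fresh root symbol $\s{Y}$ appearing in the skeleton, one \emph{clearing witness} establishing that $\s{Y}$ admits some derivation to the empty configuration. An analogous pumping argument, applied now to the descendant tree of a clearing derivation, shows that whenever $\s{Y}$ admits any clearing derivation, it admits one of size polynomial in $\aut{A}$. Since only polynomially many distinct symbols can appear in the skeleton, the collection of clearing witnesses is polynomial in size overall.

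The full certificate therefore comprises the relevant skeleton with its labels by transition rules, together with one clearing witness per fresh root symbol of an irrelevant subtree. Verification checks, in polynomial time, that every skeleton node is produced from its parent by a declared rule, that the LIFO order within each stack is respected along the skeleton, and that each clearing witness indeed reduces its root to the empty configuration; these checks are local. The main obstacle is the pumping step for clearing derivations: excising a segment naively may leave behind symbols that can no longer be eliminated, so the right invariant to pump on is the full multiset of currently top-most symbols across all stacks rather than any single symbol, and the combinatorial analysis must be conducted on the descendant forest of the clearing derivation rather than on the configuration sequence itself. This is exactly where \stateless ness is crucial, as it allows freely reordering excised and retained transitions without running into state incompatibilities.
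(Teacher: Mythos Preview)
Your overall plan---extract the forest of relevant occurrences (the ``skeleton''), bound it by pumping, and handle the irrelevant subtrees via separate clearing certificates---is sound in spirit and parallels the paper's marked-subtransition route. There is, however, a genuine gap in your treatment of the clearing witnesses.

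You assert that ``whenever $\s{Y}$ admits any clearing derivation, it admits one of size polynomial in~$\aut{A}$'', to be obtained by an ``analogous pumping argument'' on the descendant tree. This is false. Take the paper's own example~\eqref{eq:exponen}: $\s{X}_1 \trans \s{X}_2\s{X}_2,\ \ldots,\ \s{X}_{n-1}\trans \s{X}_n\s{X}_n,\ \s{X}_n\trans\eps$. The descendant tree of any clearing derivation for $\s{X}_1$ is a complete binary tree of depth~$n$; every root-to-leaf path already consists of pairwise distinct symbols, so your pumping rule never fires, and the shortest clearing derivation has $2^n-1$ steps. Your fallback---pump on the multiset of top-most symbols---does not help either: the number of such tuples is exponential in the number of stacks, and even when a tuple repeats the deeper stack contents differ, so excision is unsound. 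The clearing \emph{derivation} cannot be the certificate.

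The repair is to certify \emph{emptiability} rather than a derivation: include the set $E$ of emptiable symbols together with, for each $\s{Y}\in E$, one transition rule whose right-hand side symbols all lie in~$E$. This is polynomial in~$\aut{A}$, verifiable in polynomial time, and guarantees the existence of a canceling sequence for every irrelevant symbol that your (pumped) skeleton pushes. With this correction your argument goes through and becomes a forest-based rephrasing of the paper's approach, which adapts the marked-subtransition machinery of Lemma~\ref{lem:np1-proof-step1} with the observation that, even in the \unnormed\ case, the irrelevant symbols actually occurring along a path to~$\c{t}$ are emptiable (since they do disappear), so the requisite canceling sequences exist. Both arguments thus rest on the same emptiability observation; neither attempts to bound the length of the actual canceling sequences.
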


The proof of Lemma~\ref{lem:np2-proof} is similar to the proof of Lemma~\ref{lem:np1-proof-step1}, as 
the irrelevant symbol occurrences must necessarily be normed, because they do not contribute
to the target configuration. We thus skip this proof, and devote 
the rest of this section to the proofs of Lemmas~\ref{lem:np1-proof-step1} and~\ref{lem:np1-proof-step2}.

\begin{proofof}{Lemma~\ref{lem:np1-proof-step1}}
Consider an \MPDA\ $\aut{A}$ and two configurations $\c{s}$ and $\c{t}$.  %, the source and target one, respectively.
We will define a nondeterministic polynomial-time decision procedure for $\c{s} \reaches_{\aut{A}} \c{t}$.

\smallsection{Stateless assumption.}
For simplicity, we assume that both $\c{s}$ and $\c{t}$ have the same control state.
Thus we can treat transitions that lead from $\c{s}$ to $\c{t}$ as \stateless\ transitions. 
At the very end of the proof, we will discuss how to generalize it to the general case of \strnormed\ 
\weak\ \MPDAs.

\smallsection{Polynomial witness.}
Our aim is to show that if there is a path from $\c{s}$ to $\c{t}$ then there is a path of polynomial length.
So stated, the above claim may not be verbally true, even in the case of context-free graphs,
as witnessed by the following simple example.
\begin{align}  \label{eq:exponen}
\s{X}_1 & \trans \s{X}_2 \s{X}_2 & 
\s{X}_2 & \trans \s{X}_3 \s{X}_3 & 
& \ldots &
\s{X}_{n-1} & \trans \s{X}_n \s{X}_n &
\s{X}_n & \trans \varepsilon 
\end{align}
The example scales with respect to $n$, and thus the shortest path from 
the configuration $\s{X}_1$ to $\s{X}_n$ is of exponential length.
As a conclusion, one must use some subtle analysis in order to be able to reduce the length of a witness of existence of the path as required.
Note that $\s{X}_1$ is relevant 
and thus can not be simply omitted.

\smallsection{Proof idea.} 
As a first step towards a polynomial bound on the witness of the path from $\c{s}$ to $\c{t}$, we will 
modify the notion of transition. Intuitively speaking, our aim is to consider exclusively relevant symbol occurrences.

By a \emph{subword} we mean any subsequence of a given word. For instance, $aaccbc$ is a subword of $aacabbcbcbc$. 
Further, by a \emph{subtransition} of $\s{X} \ \trans \ \a_1 \ldots \a_k$ we mean any 
$\s{X} \ \trans \ \b_1 \ldots \b_k$ such that the following conditions hold:
\begin{iteMize}{$\bullet$}
\item \emph{subword}: $\b_i$ is a subword of $\a_i$, for all $i \in \{1 \ldots k\}$; and
\item \emph{nonemptiness}: $\b_1 \ldots \b_k \neq \varepsilon$, i.e., at least one of words $\b_i$ is nonempty.
\end{iteMize}
Note that relying on the notion of relevance one easily deduces that
whenever there is a sequence of transitions from $\c{s}$ to $\c{t}$, 
then there is also  sequence of subtransitions.
Indeed, it is sufficient to remove irrelevant symbol occurrences in all transitions along the path from $\c{s}$ to $\c{t}$.

Clearly, the converse implication is not true in general.
For instance, if we add symbols $\s{X}_0$, $\s{A}$ and transitions
$\s{X}_0 \trans \s{X}_1 \s{A}$, $\s{A} \trans \varepsilon$
to the example~\eqref{eq:exponen}, there is a sequence of subtransitions
from the configuration $\s{X}_0$ to $\s{X}_n$, but no sequence of transitions from $\s{X}_0$ to $\s{X}_n$.
Our aim now it to modify the notion of subtransition in such a way that the converse implication does hold as well,
i.e., that existence of a sequence of subtransitions implies existence of a sequence of transitions. 
This requires certain amount of laborious book-keeping, as defined in detail below.

\smallsection{Marked subtransitions.}
We will need an additional copy of every stack alphabet $\al{A}_i$, denoted by $\bar{\al{A}}_i$,
for $i = 1 \ldots k$.
Thus for every $a \in \al{A}_i$ there is a corresponding marked symbol $\bar{a} \in \bar{\al{A}}_i$.
Formally, let the $i$th stack alphabet be $\al{A}_i \cup \bar{\al{A}}_i$.

A \emph{marked subword} of a word $w \in \al{A}^*_i$ is any word in $(\al{A}_i \cup \bar{\al{A}}_i)^*$
that may be obtained from $w$ by the following \emph{marking procedure}:
\begin{iteMize}{$\bullet$}
\item color arbitrary occurrences in $w$ (the idea is to color irrelevant symbol occurrences);
\item choose a prefix of $w$ and mark every occurrence in this prefix; the prefix should contain all occurences that are followed by some colored occurrence;
\item and finally remove colored occurrences.
\end{iteMize}
For instance, the following four words are all the marked subwords of $\mathrm{AACABBCBCBC}$, 
with respect to the coloring $\mathrm{A\mathbf{AC}A\mathbf{BB}C\mathbf{B}CBC}$:
\begin{align}  \label{eq:markmore}
\mathrm{\bar{A}\bar{A}\bar{C}CBC} \qquad
\mathrm{\bar{A}\bar{A}\bar{C}\bar{C}BC} \qquad
\mathrm{\bar{A}\bar{A}\bar{C}\bar{C}\bar{B}C} \qquad
\mathrm{\bar{A}\bar{A}\bar{C}\bar{C}\bar{B}\bar{C}}.
\end{align}
Recall that a word $w \in \al{A}_i^*$ represents a content of the $i$th stack, with the left-most symbol being the top-most.
Intuitively, the idea behind the notion of marked subword is to keep track of removed occurrences that are
covered by other symbols on the stack. For technical reasons in the second point above we allow 
additionally marking some further symbols that are not followed by a colored occurrence.

A notion of \emph{marked subtransition} is a natural adaptation of the notion of subtransition.
Compared to subtransitions, there are two differences:
'subword' is replaced with 'marked subword'; and whenever the left-side symbol is marked,
then it may only put marked symbols on its stack.
Formally, a marked subtransition of $\s{X} \ \trans \ \a_1 \ldots \a_k$ is any $\s{X} \ \trans \ \b_1 \ldots \b_k$ such that
the following conditions hold:
\begin{iteMize}{$\bullet$}
\item \emph{subword}: $\b_i$ is a marked subword of $\a_i$, for all $i \in \{1 \ldots k\}$;
\item \emph{nonemptiness}: $\b_1 \ldots \b_k \neq \varepsilon$, i.e., at least one of words $\b_i$ is nonempty; and
\item \emph{marking inheritance}: if $X$ is marked, say $X  \in \bar{\al{A}}_i$, then all symbols in $\b_i$ are marked.
\end{iteMize}
Observe that in the \emph{marking inheritance} condition we exploit possibility to mark additional occurrences
(in fact all occurrences, cf.~\eqref{eq:markmore}).
As an example, consider a transition:
\[
\mathrm{A} \ \trans \ \mathrm{AACABBCBCBC}, \ \mathrm{DDED}
\]
for $\mathrm{\{A,B,C\}}$ and $\mathrm{\{D,E\}}$ the alphabets of the two stacks. Then according to the 
colorings 
\[
\mathrm{A\mathbf{AC}A\mathbf{BB}C\mathbf{B}CBC} \qquad \text{ and } \qquad \mathrm{D\mathbf{D}\mathbf{E}D},
\] 
there are two marked subtransitions with $\mathrm{\bar{A}}$ the left-side symbol:
\[
\mathrm{\bar{A}} \ \trans \ \mathrm{\bar{A}\bar{A}\bar{C}\bar{C}\bar{B}\bar{C}} , \ \mathrm{\bar{D}D}
\qquad \text{ and } \qquad
\mathrm{\bar{A}} \ \trans \ \mathrm{\bar{A}\bar{A}\bar{C}\bar{C}\bar{B}\bar{C}} , \ \mathrm{\bar{D}\bar{D}}.
\]
Note that there are exponentially many different marked subtransitions, but each one is of polynomial size.
Finally, note that every subtransition is obtained from some transition by the marking procedure as above,
applied to every stack separately.

By the nonemptiness assumption on marked subtransitions we obtain a simple but crucial observation:
\begin{lem} \label{lem:nondecr}
Along a sequence of marked subtransitions, the size of configuration can not decrease.
\end{lem}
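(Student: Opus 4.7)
The lemma reduces immediately to a one-step analysis, so I would first isolate the effect of a single marked subtransition $\s{X} \trans \b_1, \ldots, \b_k$ on the size of a configuration. Firing such a rule pops exactly one symbol occurrence (the $\s{X}$ on top of some stack $i$) and pushes $\sum_{j=1}^k |\b_j|$ new symbol occurrences distributed across the $k$ stacks; all other symbol occurrences in the configuration are untouched. Hence the net change in total size under a single step is $\sum_j |\b_j| - 1$.

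Next, I would invoke the \emph{nonemptiness} clause in the definition of a marked subtransition, which states precisely that $\b_1 \cdots \b_k \neq \eps$. This gives $\sum_j |\b_j| \geq 1$, so the net change is nonnegative. Note that the bar decoration of symbols plays no role here, since the definition of size counts every occurrence --- marked or not --- with weight one; likewise it does not matter on which stack the $\b_j$ are placed, nor whether any of them is $\eps$, as long as not all of them are.

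A trivial induction on the length of the sequence of marked subtransitions then yields the lemma. There is essentially no obstacle: the \emph{nonemptiness} clause was built into the definition precisely so that this monotonicity holds for free, and the remaining features of marked subtransitions --- the marking-inheritance condition and the prefix shape imposed on marked subwords --- are irrelevant for this particular claim. They will be needed only later, when one wants to lift a sequence of marked subtransitions back to a genuine run of $\aut{A}$.
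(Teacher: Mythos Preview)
Your argument is correct and is exactly the paper's approach: the paper states the lemma immediately after introducing the \emph{nonemptiness} clause with the words ``By the nonemptiness assumption on marked subtransitions we obtain a simple but crucial observation,'' and gives no further proof. You have simply spelled out the one-line count (one symbol popped, at least one pushed) that the paper leaves implicit.
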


A \emph{marked subconfiguration} of a configuration $\confless{\a_1}{\ldots}{\a_k}$ is any tuple
$\confless{\b_1}{\ldots}{\b_k}$ such that $\b_i$ is a marked subword of $\a_i$ for all $i \in \{1 \ldots k\}$.

\begin{lem} \label{lem:subtrans}
For two configurations $\c{s}$ and $\c{t}$, 
the following conditions are equivalent:
\begin{enumerate}[\em(1)]
\item there is a sequence of transitions from $\c{s}$ to $\c{t}$,
\item there is a sequence of marked subtransitions from $\c{u}$ to $\c{t}$, for some marked subconfiguration $\c{u}$ of $\c{s}$. 
\end{enumerate}
\end{lem}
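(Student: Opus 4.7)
The plan is to prove the two directions separately.

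For direction (1) $\Rightarrow$ (2), given a transition sequence $\c{s} = \c{s}_0 \trans \c{s}_1 \trans \ldots \trans \c{s}_n = \c{t}$, I would build marked subconfigurations $\c{u}_i$ of $\c{s}_i$, with $\c{u}_n = \c{t}$, so that every real transition either induces a marked subtransition $\c{u}_i \trans \c{u}_{i+1}$ or is skipped (i.e.\ $\c{u}_i = \c{u}_{i+1}$). For each $\c{s}_i$ I would color exactly the irrelevant symbol occurrences and mark the relevant ones using the minimal marking consistent with two requirements: (i) the marking-procedure prefix condition (every relevant position sitting above an irrelevant one in its stack must be marked) and (ii) marking inheritance propagated forward from preceding subtransitions (whenever a marked occurrence is popped, every relevant freshly pushed occurrence is to be marked in the next configuration). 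A transition popping an irrelevant symbol only reshuffles irrelevant content and leaves $\c{u}_i$ unchanged; a transition popping a relevant symbol must push at least one relevant fresh symbol---otherwise that symbol would have no descendant in $\c{t}$, contradicting its relevance---so the resulting subtransition is nonempty, and marking inheritance holds by construction.

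For direction (2) $\Rightarrow$ (1), I would construct a real transition sequence from $\c{s}$ to $\c{t}$ by induction on the length of the given subtransition sequence $\c{u} = \c{u}_0 \trans \c{u}_1 \trans \ldots \trans \c{u}_n = \c{t}$, maintaining the invariant that at step $j$ every stack of the current real configuration $\c{s}_j$ reads top-to-bottom as
\[
c_0 \, Y_1 \, c_1 \, Y_2 \, c_2 \, \ldots \, Y_k \, c_k \, Z_1 \, Z_2 \, \ldots \, Z_l,
\]
where $\bar Y_1, \ldots, \bar Y_k$ are the marked symbols of the corresponding stack of $\c{u}_j$ (in the same order, bars removed), $Z_1, \ldots, Z_l$ are its unmarked symbols, and $c_0, \ldots, c_k$ are arbitrary words of \emph{colored} (irrelevant) content. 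The invariant holds initially because $\c{u}_0$ is a marked subconfiguration of $\c{s}$ and the marking procedure places colored material only above or between marked symbols, never inside the unmarked suffix. To simulate a subtransition $\bar X \trans \bar\alpha$ popping a marked topmost symbol, first invoke \emph{strong normedness} to pop the block $c_0$ above $X$ by a sequence of real transitions (strong normedness empties any configuration, and since stack operations depend only on the topmost symbol, the same sequence can be replayed without ever touching $X$ or what lies below). Then apply the real transition $X \trans \alpha$. Marking inheritance forces every non-colored position of $\alpha$ to be marked, i.e.\ the marked positions of $\alpha$ form a prefix; consequently each affected stack again decomposes into the invariant form after the push, with $\bar\alpha$'s symbols providing the new $Y$'s and the colored positions of $\alpha$ feeding into the new $c_i$'s. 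A subtransition popping an unmarked top $Z_1$ is easier: by the invariant $Z_1$ is already on top of $\c{s}_j$, and the real transition applies directly. When $\c{u}_n = \c{t}$ is reached, the absence of marked symbols forces every $c_i$ block to be empty, so the final real configuration is exactly $\c{t}$.

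The main obstacle lies in direction (2) $\Rightarrow$ (1), specifically in verifying that the invariant pattern survives each real push of $\alpha$ and in justifying the local use of strong normedness to eliminate a block of colored symbols on top of a much larger stack. The former is ensured by the prefix structure of the marking procedure combined with the marking-inheritance clause, which together are precisely calibrated to produce the colored-above-marked, then unmarked-suffix interleaving pattern; the latter follows because stack transitions depend only on the topmost symbol, so any normed sequence emptying a standalone stack can be replayed above any fixed substack without affecting it.
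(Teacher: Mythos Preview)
Your direction (1)$\Rightarrow$(2) is fine and matches the paper's idea: color the irrelevant occurrences, read off the marking, and observe that a step popping a relevant symbol must create at least one relevant fresh symbol in the stateless setting.

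In direction (2)$\Rightarrow$(1) there is a genuine gap. Your invariant asserts that on every stack of $\c{u}_j$ the marked symbols form a prefix $\bar Y_1,\ldots,\bar Y_k$ and the unmarked symbols form a suffix $Z_1,\ldots,Z_l$, and your simulation step relies on this (``marking inheritance forces every non-colored position of $\alpha$ to be marked''). But the marking-inheritance clause constrains only $\beta_i$, the word pushed on the \emph{same} stack as the popped symbol $\bar X$; for $j\neq i$ the pushed $\beta_j$ is an arbitrary marked subword of $\alpha_j$ and may end in unmarked symbols. The paper's own example $\bar{\s A}\trans\bar{\s A}\bar{\s A}\bar{\s C}\bar{\s C}\bar{\s B}\bar{\s C},\ \bar{\s D}\s D$ already exhibits an unmarked tail in $\beta_2$. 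If such a $\beta_j$ is pushed while stack $j$ in $\c{u}$ currently has a marked top $\bar Y_1^j$, then $\c{u}_{j+1}$ has the pattern $(\text{marked})(\text{unmarked})(\text{marked})\ldots$ on stack $j$, so your invariant fails and the subsequent case analysis (``$Z_1$ is already on top of $\c{s}_j$'') no longer applies. Concretely, the colored block $c_0^j$ now sits \emph{below} the newly pushed unmarked symbols and cannot be cleared by your procedure.

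The repair, which is what the paper actually argues, is to keep a weaker invariant: in the real configuration, the symbol immediately above any colored occurrence is always marked or colored (never unmarked). This is preserved by every push because inside each $\alpha_j$ every position followed by a colored one lies in the marked prefix, and the old stack content beneath $\alpha_j$ is untouched. One then cancels \emph{all} colored tops (on every stack) before applying the next subtransition. At the end, $\c{u}_n=\c t$ has no marked symbols; the invariant then forces any remaining colored symbols to sit in a contiguous block at the top, which the canceling phase has already removed, so the final real configuration is exactly $\c t$.
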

\begin{proof}
The implication from (1) to (2) follows immediately. The sequence of marked subtransitions is obtained by application
of the marking procedure to all transitions. 
For every transition, color in the marking procedure precisely those symbol occurrences that are irrelevant.

Now we show the implication from (2) to (1).  The proof uses \strnormedness. 

Assume a sequence $\pi$ of marked subtransitions from $\c{u}$ to $\c{t}$, for some marked subconfiguration $\c{u}$
of $\c{s}$. 
Recall that each subtransition in $\pi$ has its original transition of $\aut{A}$.
We claim that there is a sequence of transitions from $s$
to $t$, that contains the original transitions of all the marked subtransitions appearing in $\pi$, 
and \emph{canceling sequences} 
\begin{align} \label{eq:cancel-seq}
q \ \s{X} \ \trans \ \ldots \ \trans \ \config{q}{\eps}{\ldots}{\eps}
\end{align}
for some stack symbols $X$, existing due to \strnormedness\ assumption.

The sequence of transitions from $s$ to $t$ is constructed by reversing the marking procedure.
For the ease of presentation, beside letters from $\al{A}_i$, we will also use colored letters.

Start with the configuration $\c{s}$, and choose any coloring of symbol occurrences in $\c{s}$ that 
induces $\c{u}$ as the outcome of the marking procedure. 
% i.e., those symbol occurrences are colored that are not in $\c{u}$.
Then consecutively apply the following rule:
\begin{iteMize}{$\bullet$}
\item If the top-most symbol $\s{X}$ on some stack is colored, apply a canceling sequence for $\s{X}$.
\item Otherwise, apply  the original transition of the next subtransition from $\pi$, using again some coloring 
that could have been used in the marking procedure.
\end{iteMize}
For correctness, we need to show that all colored occurrences of symbols are eventually canceled out, 
as this guarantees that the final configuration is precisely $t$.

Let's inspect $\pi$. As no symbol in $t$ is marked, every marked symbol occurrence eventually disappears
as a result of firing of some subtransition. Recall that marking of a symbol $\bar{\s{X}}$ disappears only if
the subtransition pushes nothing on the stack of $\bar{X}$. 
As a consequence, every colored symbol occurrence will eventually appear on the top of its stack.
Thus the canceling sequence for $\s{X}$ will eventually be applied.
\end{proof}

\begin{lem} \label{lem:polynom}
For two configurations $\c{u}$ and $\c{v}$, 
if there is a sequence of marked subtransitions from $\c{u}$ to $\c{v}$, 
%for some configurations $\c{u}$ and $\c{v}$, 
then there is such a sequence of polynomial length wrt.~the sizes of $\c{u}$, $\c{v}$ and $\aut{A}$.
\end{lem}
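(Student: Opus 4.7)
The plan is to analyze the descendant forest induced by the sequence of marked subtransitions from $\c{u}$ to $\c{v}$, adapting the forest construction already used in Section~\ref{sec:proofoflemma}. Each internal node of the forest corresponds to one subtransition in the sequence and carries the popped symbol occurrence, while its children are the occurrences pushed by that subtransition. Because of the nonemptiness clause in the definition of marked subtransitions, every internal node has at least one child, so the leaves of the forest are precisely the occurrences appearing in $\c{v}$. This gives exactly $|\c{v}|$ leaves, and the length of the sequence equals the number of internal nodes.

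First, I would bound the number of \emph{branch nodes} (internal nodes with at least two children). Contracting every maximal path of single-child internal nodes into a single edge yields a forest in which each internal node has at least two children; such a forest with at most $|\c{v}|$ leaves has at most $|\c{v}|-1$ internal nodes per tree, hence $O(|\c{v}|)$ branch nodes in total, and symmetrically $O(|\c{v}|)$ maximal single-child \emph{chains} in the original forest.

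Second, I would pump each chain down to polynomial length. Each chain node carries a label from $\bigcup_i (\al{A}_i \cup \bar{\al{A}}_i)$, of cardinality $O(|\aut{A}|)$. Whenever a chain is longer than this, two chain nodes $c_i$ and $c_j$ with $i<j$ share the same label. I would then delete from the sequence the subtransitions popping $c_i,c_{i+1},\ldots,c_{j-1}$: after deletion, the occurrence $c_i$ is never popped and remains in place, and the subtransition originally popping $c_j$ now pops $c_i$ instead. Since $c_i$ and $c_j$ carry identical (possibly marked) symbols and therefore reside on the same stack, this redirection is legal as a marked subtransition. Iterating brings every chain down to length $O(|\aut{A}|)$. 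Combining both bounds, the total sequence length is $O(|\c{v}|\cdot|\aut{A}|)$, polynomial in the sizes of $\aut{A}$, $\c{u}$ and $\c{v}$ as required; note that intermediate configurations are polynomially bounded by Lemma~\ref{lem:nondecr}.

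The main obstacle is to justify that the deletion step above actually yields a valid sequence of marked subtransitions. The decisive observation is that a single-child chain node contributes exactly one fresh occurrence to the execution, namely the next chain node, so no subtransition outside the chain can ever pop any of $c_{i+1},\ldots,c_{j-1}$. Consequently, the non-chain subtransitions interleaved between the chain steps act only on occurrences pushed strictly above the chain symbols, and can be replayed verbatim once the chain symbols are removed; only the single redirected subtransition changes its popped occurrence from $c_j$ to $c_i$, and the marking inheritance condition is trivially preserved since $c_i$ and $c_j$ have identical markings.
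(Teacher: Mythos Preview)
Your decomposition into branch nodes and single-child chains is sound and matches the paper's first reduction: non-singleton subtransitions (your branch nodes) number at most $|\c{v}|$, so only singleton subtransitions (your chains) remain to be bounded. The gap is in the pumping step for chains. You claim that deleting the chain steps between two equally-labelled chain nodes $c_i$ and $c_j$ leaves the interleaved non-chain subtransitions executable ``verbatim'', because those subtransitions ``act only on occurrences pushed strictly above the chain symbols''. This fails whenever the chain crosses stacks.

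Concretely, take two stacks, $\c{u}=\langle \s{X}\s{Y},\,\eps\rangle$, and the singleton subtransitions
\[
\s{X}\ \trans\ \eps,\ \s{Z}\qquad\qquad \s{Y}\ \trans\ \s{W},\ \eps\qquad\qquad \s{Z}\ \trans\ \s{X},\ \eps
\]
fired in this order, reaching $\c{v}=\langle \s{X}\s{W},\,\eps\rangle$. The $\s{X}$-rooted tree is the chain $\s{X}\to\s{Z}\to\s{X}$, with a repeated label; your pumping deletes the first and third subtransitions. But the remaining subtransition $\s{Y}\trans\s{W},\eps$ cannot fire from $\c{u}$: $\s{X}$ still sits on top of $\s{Y}$ on the first stack. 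The interleaved step acts on an occurrence \emph{below} the chain symbol, not above it---it became accessible precisely because the chain temporarily moved to the second stack. Other roots of the forest (here $\s{Y}$) are independent in the descendant relation yet interact with the chain through the stack discipline, and your argument ignores this interaction.

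This stack-crossing behaviour of singleton subtransitions is exactly the heart of the lemma, and the paper spends most of its proof on it. After peeling off the non-singleton subtransitions (your branch-node bound) and the non-inner singleton ones (those leaving a strongly connected component, bounded quadratically), it analyses the strongly connected components of the singleton-subtransition graph, normalises each component to at most one symbol per stack alphabet, and then gives an explicit polynomial-length procedure for rearranging symbols across stacks subject to the resulting ``type'' constraints. Some mechanism of this kind---exploiting that inner singleton steps can be \emph{reversed} within their component, so the chain can be moved out of the way and brought back---is unavoidable; plain label-repetition pumping on a single chain does not suffice.
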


\begin{proof}
From now on, we will write 'subtransitions' instead of 'marked subtransitions'.
As we will primarily work with subtransitions, we will use the stack alphabets $\al{A}_i \cup \al{\bar{A}}_i$ for $i \in \{1 \ldots k\}$.

The number of subtransitions that change state is bounded by the number of states of $\aut{A}$, 
as $\aut{A}$ is assumed to be \weak.
Thus it is sufficient to prove the lemma under the assumption that the subtransitions do not change state.
In other words, wlog.~we may assume $\aut{A}$ to be \stateless.
%\phcomm{Chyba nie tak prosto bo mogę mieć tranzycje które zmniejszają rozmiar konfiguracji bo zmieniaja stan, ale to nie wpływa}

The size of the right-hand side of a marked subtransition is at least $1$. Distinguish subtransitions
with the size of the right-hand side equal $1$, and call them
\emph{singleton subtransitions}. Clearly, the number of non-singleton subtransitions appearing in the sequence
in the above claim is at most equal to the size of $\c{v}$, thus it is sufficient to concentrate on the following claim:

\begin{clm} 
If there is a sequence of \emph{singleton} subtransitions from a configuration $\c{u}$ to $\c{v}$ then there is such a sequence of polynomial length.
\end{clm}

Note that the sizes of $\c{u}$ and $\c{v}$ in the above claim are necessarily the same.

Now we analyze in more detail the singleton subtransitions. Note that they have the form 
\begin{equation} \label{eq:singletonsubtrans}
\s{X} \trans \s{Y}
\end{equation}
as the right-hand sides contain precisely one occurrence of a symbol.
Consider the strongly connected components in the induced graph, with symbols as vertices, and singleton 
subtransitions~\eqref{eq:singletonsubtrans} as edges.

Distinguish those singleton subtransitions~\eqref{eq:singletonsubtrans} that stay inside one strongly connected component
(in other words, such that there is a sequence of subtransitions from $\s{Y}$ back to $\s{X}$)
and call them \emph{inner singleton subtransitions}.
Note that the number of non-inner subtransitions that appear in the sequence of the last claim is polynomial
(at most quadratic), thus the last claim is equivalent to the following one:

\begin{clm} 
If there is a sequence of \emph{inner singleton} subtransitions from a configuration $\c{u}$ to $\c{v}$ then there is such a sequence of polynomial length.
\end{clm}

The rest of the proof is devoted to showing the last claim.

We start by doing a sequence of simplifying assumption without losing generality.

First, wlog. we may assume that the relation~\eqref{eq:singletonsubtrans} is transitive, as we only care about the length
of the sequence of subtransitions up to a polynomial. 
Thus every strongly connected component is a directed clique.
%\phcomm{undirected clique, czemu directed??}

By the \emph{type} of a clique we mean the set of stacks that are represented in the clique, i.e.,
the stacks that have at least one symbol in the alphabet that belongs to the clique.
We may assume that there is no clique of singleton type. Indeed, otherwise the stack is essentially
inactive along the path, except for the top-most symbol, and thus may be ignored in our analysis.

Further, wlog. we may also assume that every clique has \emph{at most} one symbol belonging to every stack alphabet.
Indeed, two different symbols from the same clique and the same stack alphabet can easily mutate from one into the other,
when being the top-most symbol of the stack. And every symbol $\s{X}$ may be easily made top-most by popping all symbols
above $\s{X}$ to other stacks (this can be done due to the assumption that type of cliques are not singletons).

The simplifications lead us to the following reformulation of the last claim.
Let $k \geq 1$ be an integer.
Assume a finite set of symbols $\al{A}$, each symbol $\s{X} \in \al{A}$ coming with its type $\text{type}(\s{X}) \subseteq \{ 1 \ldots k \}$ of cardinality at least $2$. 
Consider the set of $k$-tuples of stacks $(\al{A}^*)^k$ 
satisfying the following consistency condition: if $\s{X}$ appears in the $i$th sequence then $i \in \text{type}(\s{X})$.
Consider the following transition rules: the top-most letter of some stack may be moved to the top of some other
stack, as long as the consistency is preserved.

\begin{clm}
If there is a sequence of transitions from some configuration $\c{u} \in (\al{A}^*)^k$ to some
configuration $\c{v} \in (\al{A}^*)^k$, then there is such a sequence of polynomial length.
\end{clm}
So formulated, the claim is not so hard.

We will show a polynomial sequence of transitions that starts in $\c{u}$ and ends in a configuration
$\c{u}'$ that has the same bottom-most symbol as $\c{v}$ on some stack. 
This is sufficient, as the same thing may be done for all other occurrences of symbols in $\c{v}$
(formally, an induction over stack depth is involved here).

Note that we do not assume that different symbols have different types.
Two symbols we call \emph{siblings} if they have the same type and this type has two elements
(thus the symbols may be placed only on two stacks).

Choose an arbitrary stack that is nonempty in $\c{v}$, say the $i$th stack, with the bottom-most symbol $\s{X}$.
We may assume wlog. that $\s{X}$ does not appear in $\c{u}$ on the $i$th stack 
(otherwise, i.e., if some occurrences of $\s{X}$ in $\c{u}$ are on the $i$th stack,
move all the occurrences of $\s{X}$, together with all other symbols above them, to arbitrary other stacks).

Let the $j$th stack in $\c{u}$ contain an occurrence of symbol $\s{X}$, for some $j \neq i$.

The sequence of steps from $\c{u}$ to $\c{u}'$ is the following:
\begin{enumerate}[(1)]
\item Move all symbols above the chosen occurrence of $\s{X}$ from the $j$th stack to other stacks.  % different than the $i$th one. 
\item Move all symbols from the $i$th stack to other stacks such that $\s{X}$ is still on the top of the $j$th stack.
%\item Move all symbols from the $i$th stack to other stacks different than the $j$th one.
\item Move the chosen occurrence of symbol $\s{X}$ to the destination $i$th stack.
\end{enumerate}
Clearly step 1.~ may be always done.
We will show that step 2.~may be always done as well. We distinguish two cases.

If the symbol $\s{X}$ is not a sibling, every other symbol may be moved, from the $i$th stack,
to a stack different than the $i$th one, in such a way that after this operation $X$ will be still on the top of the $j$th stack.
Indeed, assume that a symbol $\s{Y}$ is on the top of the $i$th stack. If 
$\s{Y}$ can be moved to a stack different than the $j$th one, we are done.
Otherwise, $\s{Y}$ can only occur on the $i$th and $j$th stacks. 
According to the assumption, $\s{X}$ and $\s{Y}$ are not siblings, thus there is another $k$th stack  
to which $\s{X}$ can be moved. Then we proceed as follows: $\s{X}$ is moved from $j$th to the $k$th stack,
next $\s{Y}$ is moved from the $i$th stack to the $j$th stack, and finally $\s{X}$ is moved back from the $k$th stack to the
$j$th stack.

As the second case, assume that $\s{X}$ is a sibling. Assume that the top-most occurrence of $\s{X}$ 
on the $j$th stack has been chosen. As $j \neq i$, and there is a sequence of steps
from $\c{u}$ to $\c{v}$, one easily observes that no sibling of $\s{X}$ may occur in $\c{u}$ either 
on the $i$th stack, or above $\s{X}$ on the $j$th stack. Thus step 2.~clearly can be done.

This completes the proof of Lemma~\ref{lem:polynom} and thus also the proof of Lemma~\ref{lem:np1-proof-step1},
under the stateless assumption.
%\qed
\end{proof}

\smallsection{Decision procedure.}
Now we drop the stateless assumption.
Note that the notion of marked subconfiguration and marked subtransition may be easily adapted to transitions that change state.
We do not impose however the nonemptiness condition on transitions that change state, which is in accordance with the intuition that
irrelevant symbol occurrences are removed in the marking procedure. 
Using Lemmas~\ref{lem:nondecr}, ~\ref{lem:subtrans} and~\ref{lem:polynom} we will define 
the nondeterministic decision procedure for \strnormed\ \weak\ \MPDAs.

Let the two given configurations $\c{s}$ and $\c{t}$ have control states $q$ and $p$, respectively.
In the first step, the algorithm guesses a number of marked subconfigurations $\c{t}_1 \ldots \c{t}_{n-1}$,
where $n$ is not greater than the number of states of $\aut{A}$, and
marked subtransitions that change state:
\begin{align*}
\c{t_1} & \trans \c{s_1} & \c{t_2} & \trans \c{s_2} & & \ldots & \c{t_{n-1}} \trans \c{s}_{n-1}
\end{align*}
such that $\c{s}_i$ and $\c{t}_{i+1}$ have the same  control states for $i \in \{0 \ldots n-1\}$.
Note that size of all configurations $\c{s}_i$ and $\c{t}_i$ is necessarily bounded by
$\text{size}(\c{t}) + |\aut{A}|$, as the size of a configuration along the run may be decreased
only during the change of state and even then only by one.
For convenience, we write $\c{s}_0$ instead of $\c{s}$ and $\c{t}_n$ instead of $\c{t}$.
In particular, we assume that the control state of $\c{t_1}$ is $q$, and the control state of $\c{s}_{n-1}$ is $p$.
Relying on Lemma~\ref{lem:nondecr}, it is sufficient to consider configurations of sizes satisfying the following inequalities:
\begin{align}\label{eq:rel}
\text{size}(\c{s}_i) \leq \text{size}(\c{t}_{i+1})   \qquad \text{ for } i \in \{1 \ldots n-1\} .
\end{align}

In the second phase, the algorithm guesses, for $i \in \{0 \ldots n-1\}$,
% a subconfiguration $\bar{\c{s}}_i$ of $\c{s}_i$, and 
a sequence of subtransitions from $\c{s}_i$ to $\c{t}_{i+1}$
of length bounded by polynomial derived from the proof of Lemma~\ref{lem:polynom}; and
checks that the respective sequences of subtransitions lead from $\c{s}_i$ to $\c{t}_{i+1}$, 
as required by Lemma~\ref{lem:subtrans}.
\end{proofof}

\begin{proofof}{Lemma~\ref{lem:np1-proof-step2}}
Suppose $\aut{A}$ is a \strnormed\ \weak\ \MPDA. Let $L$, $K$ be regular sets of configurations of $\aut{A}$
and let $\pi$ be a sequence of transitions
from some configuration $\c{s} \in L$ to some configuration $\c{t} \in K$.
We will demonstrate existence of configurations $\c{s}' \in L$ and $\c{t}' \in K$ such that 
$\c{t}'$ is of polynomial size
%\phcomm{ja bym precyzyjnie napisał wilomian od czego,bo dopiero w następnym zadaniu piszemy, że nie od $s$ i to nie wprost} 
and $\c{s}' \reaches \c{t}'$.
Importantly, we do not have to provide any bound on the size of $\c{s}'$, as the polynomial bound  
follows by Lemma~\ref{lem:reduceproblems}.

Recall the coloring discipline used in the proof of Lemma~\ref{lem:np1-proof-step1}.
There we used just one color; here we will use an unbounded number of different colors,
as described below.

The coloring discipline will apply to all configurations appearing in $\pi$.
We start by coloring all symbol occurrences in the first configuration $\c{s}$ with different colors.
% The colors appearing in $\c{s}$ we call \emph{root colors}.
Then, for every transition $\c{s}_1 \trans \c{s}_2$ in $\pi$, assumed that $\c{s}_1$ has already been colored,
we stipulate the following coloring rule for $\c{s}_2$ (recall that symbol occurrences in $\c{s}_2$ are divided
into those corresponding to symbol occurrences in $\c{s}_1$, and fresh ones):
\begin{iteMize}{$\bullet$}
\item If a symbol occurrence corresponds to a symbol occurrence in $\c{s}_1$, its color is the same as the color of corresponding symbol occurrence.
\item Let $c$ be the color of the unique occurrence of symbol in $\c{s}_1$, say symbol $\s{X}$, that is involved in the transition.
All fresh symbol occurrences in $\c{s}_2$ that appear on the same stack as $\s{X}$ are colored with $c$; 
we say that they \emph{inherit} the color from $\s{X}$.
All other fresh symbol occurrences in $\c{s}_2$ are colored by new fresh colors: one for each other stack.
% , with the proviso that two occurrences have the same color if and only if they occur on the same stack. 
Thus at most $k-1$ new fresh colors is needed for coloring fresh occurrences on other stacks, where $k$ is the number of stacks.
\end{iteMize}
For any color used, and for any fixed configuration, the set of all symbol occurrences colored with that color we call \emph{line}.
Note that a line is always a subset of symbol occurrences on a single stack.
However, there are typically many lines sharing their color, but this happens only to lines from different configurations. 
Further, note that the cardinality of a line is not bounded in principle, due to the inheritance of color.

We now aim at reducing the size of the destination configuration $\c{t}'$.
Roughly speaking, we will prove that $\c{s}' \reaches \c{t}'$, for some $\c{s}' \in L$ and $\c{t}' \in K$ such that both the number of
different lines in $\c{t}'$, and the cardinality of all lines in $\c{t}'$, are polynomially bounded.

For convenience, we split colors into two disjoint subsets.
A color $c$ is called \emph{active} if some symbol occurrence labeled by $c$:
\begin{iteMize}{$\bullet$}
\item either is involved in some transition in $\pi$,
\item or is a fresh symbol occurrence created by some transition in $\pi$.
\end{iteMize}
Otherwise, a color is called \emph{inactive}, i.e., occurrences of this color are present in $\c{s}$
and are not involved in any transition during the run.
Likewise, a line is also called active or inactive, according to its color.
Note that  inactive colors label suffixes of stacks in every configuration in $\pi$, and these suffixes do not change along $\pi$. Inactive lines are clearly singletons.

\smallsection{Bounding the number of active lines.}
Consider content of some stack, say the $i$th stack, in the destination configuration $\c{t} \in K$.
Denote by $w \in \al{A}_i^*$ its prefix colored by active colors.
Every active line on the $i$th stack corresponds to an infix of $w$, and thus the coloring induces a factorization 
\[
w = w_1 \cdot w_2 \cdot \ldots \cdot w_m 
\]
determined by some $m-1$ positions in $w$.

Recall that the language $K$ is represented by a tuple $\aut{B}_1 \ldots \aut{B}_k$ of finite automata, 
one automaton per stack. 
A run of the automaton $\aut{B}_i$ over the $i$th stack of $\c{t}$ labels each of the $m-1$ distinguished 
positions in $w$ by a state. 
By a standard pumping argument, there is a subword $w'$ of $w$, obtained by removing a number of lines from $w$,
that contains at most as many lines 
as the number of states of $\aut{B}_i$, and such that $\aut{B}_i$ reaches the same state after reading $w$ and $w'$.
By repeating the pumping argument for all stacks, one obtains a configuration $\c{t}'$ still belonging to $K$,
that contains only a polynomial (in fact, at most quadratic) number of active lines, as required.

We only need to show that $\c{s} \reaches \c{t}'$.
In this part of the proof we will use the canceling sequences~\eqref{eq:cancel-seq}, available due to \strnormedness.
Observe that  every active line that appears in $\pi$ appears as the top-most one on its stack at some configuration in $\pi$.
We apply the canceling sequence for all symbol occurrences in every active line not appearing in $\c{t}'$.
In order to keep the reachability $\c{s} \reaches \c{t}'$, we apply the canceling sequence
in the last configuration in $\pi$ where this line is the top-most one.
Thus the disappearance of a line has no effect for the remaining lines.

\smallsection{Bounding the number of inactive lines.}
We repeat a pumping argument similar to the above one.
Let $L$ and $K$ be represented by $\aut{A}_1 \ldots \aut{A}_k$ and  $\aut{B}_1 \ldots \aut{B}_k$, respectively. 
Consider some $i \leq k$ and runs of automata $\aut{A}_i$ and $\aut{B}_i$ over the inactive suffix of the $i$th stack of $\c{t}'$
(or $\c{t}$). The runs label every position by a pair of states of $\aut{A}_i$ and $\aut{B}_i$, respectively.
Upon a repetition of the same pair of states, a standard pumping applies.
Thus the length of every inactive suffix in $\c{v}$ may be reduced to at most quadratic.

\smallsection{Bounding the cardinalities of active lines.}
Consider the configuration $\c{t}'$ obtained by now, and
a single active line on some $i$th stack in this configuration. Let $v \in \al{A}_i^*$ be the word representing the line.
Thus the $i$th stack in $\c{t}'$ is of the form:
\[
w_1 v \, w_2
\]
for some words $w_1, w_2$.
Similarly as before, we aim at applying pumping inside $v$, to reduce its length.

Let's focus on symbol occurrences in $v$ in configuration $\c{t}'$ 
and on the corresponding symbol occurrences in other configurations in $\pi$.
Observe that all symbol occurrences in $v$ satisfy the following condition:
\begin{quote}
\emph{the corresponding symbol occurrence in some previous configuration was freshly created in some transition}.
\end{quote}
%Distinguish those symbol occurrences in $v$ that satisfy the following condition:
%\begin{quote}
%\emph{the corresponding symbol occurrence in some previous configuration was involved in a transition}.
%\end{quote}
Some of the above-mentioned transitions have created new lines, and some not.
%Among those distinguished
Among symbols in $v$, distinguish a subset containing only those occurrences that satisfy the 
following strengthened condition:
\begin{quote}
\emph{the corresponding symbol occurrence in some previous configuration was freshly created in some transition
that created a new line that is represented in $\c{t}'$}.
\end{quote}
The distinguished symbol occurrences call \emph{non-local}, the others call \emph{local}.

The overall number of lines in $\c{t}'$ is polynomially bounded, thus the same bound applies to the number of
non-local symbol occurrences in $\c{v}$. We thus obtain:
\begin{clm}
There is only polynomially many non-local symbol occurrences in $v$.
\end{clm}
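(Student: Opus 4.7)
The plan is to exhibit a bounded-to-one charging from non-local occurrences in $v$ to the collection of lines represented in $\c{t}'$. Since the previous step of the proof already bounds the latter collection by a polynomial in the input size, this will immediately yield the desired polynomial bound on non-local occurrences in $v$.

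Fix a non-local $o \in v$. Tracing $o$ backwards along the correspondence relation (which preserves both the stack and the color of the occurrence) leads to a uniquely determined transition $T(o)$ at which the correspondent of $o$ was freshly pushed. By the non-local assumption, $T(o)$ creates at least one new line represented in $\c{t}'$; we pick any such line and call it $\phi(o)$. This defines a charging map $\phi$ from non-local occurrences in $v$ to lines represented in $\c{t}'$.

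To bound $|\phi^{-1}(L)|$ for a fixed line $L$ in $\c{t}'$, we use the fact that every color is introduced at exactly one transition, so the creation transition $T_L$ of $L$ is uniquely determined; in particular, every $o$ with $\phi(o) = L$ satisfies $T(o) = T_L$. Moreover, since correspondence preserves the stack, the correspondent of any such $o$ at the moment just after $T_L$ must be one of the fresh occurrences that $T_L$ pushed onto stack $i$, and $T_L$ pushes at most $|\aut{A}|$ symbols on each stack. As the correspondence relation is injective wherever defined, distinct $o$'s yield distinct fresh pushes on stack $i$, and hence $|\phi^{-1}(L)| \leq |\aut{A}|$.

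Summing $|\phi^{-1}(L)|$ over the polynomially many lines $L$ represented in $\c{t}'$ gives a polynomial bound on the number of non-local occurrences in $v$, as required. The only point that needs care is ensuring that $T(o)$ is well defined and unique; this is essentially the observation that along a correspondence chain an occurrence persists unchanged until it is involved in some transition, which therefore constitutes its unique fresh-creation event when tracing $o$ backwards.
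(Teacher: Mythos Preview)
Your argument is correct and is essentially the same approach as the paper's: the paper compresses the whole thing into one sentence (``the overall number of lines in $\c{t}'$ is polynomially bounded, thus the same bound applies''), and your charging map $\phi$ together with the $|\aut{A}|$-to-one bound is exactly the rigorous unpacking of that sentence. The extra factor $|\aut{A}|$ you pick up is harmless for the polynomial claim and is implicitly absorbed in the paper's phrasing.
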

Thus, it is sufficient to reduce the length of any block of local occurrences in $v$.
From now on we focus on a single maximal infix $u$ of $v$ that contains only local symbol occurrences.

Those transitions in $\pi$ that involve a symbol occurrence corresponding to a symbol occurrence in $u$
use only the $i$th stack. Thus this set of transitions is essentially a \stateless\ pushdown automaton.
We will use a well-known fact (see for instance~\cite{BEM97}):
\begin{clm}%[\cite{C92}]
For a pushdown automaton one can construct a finite automaton of polynomial size that recognizes
the language of all reachable configurations of the pushdown automaton.
\end{clm}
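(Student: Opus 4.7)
The plan is to invoke the saturation construction due to Bouajjani, Esparza and Maler~\cite{BEM97}, adapted to forward reachability. Given the pushdown automaton $\aut{P}$ with states $\al{Q}$ and stack alphabet $\Gamma$, I would represent sets of configurations by so-called $P$-automata: nondeterministic finite automata over $\Gamma$ whose state set contains $\al{Q}$, and which accept a configuration $\langle q, w \rangle$ whenever there is a run from $q$ reading $w$ to an accepting state. The singleton or regular set of initial configurations is represented by a $P$-automaton of linear size.

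The core of the plan is to pre-allocate, once and for all, one fresh auxiliary state per push rule $p, \s{X} \trans q, \s{Y}\s{Z}$ of $\aut{P}$; the total state set therefore stays polynomial in $|\aut{P}|$ from the very beginning. The saturation step then iteratively processes pushdown rules, adding transitions to the current $P$-automaton: for a pop rule $p, \s{X} \trans q, \eps$ and any transition $p \xrightarrow{X} s$ currently present, one adds $q \xrightarrow{\eps} s$; for a swap rule $p, \s{X} \trans q, \s{Y}$, one adds the corresponding $q \xrightarrow{Y} s$; and for a push rule $p, \s{X} \trans q, \s{Y}\s{Z}$, one routes through the pre-allocated auxiliary state for that rule. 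Since the state set and alphabet are fixed in advance, the number of potential transitions is polynomial, and because saturation only ever adds transitions, the procedure stabilises after polynomially many steps, each implementable in polynomial time.

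Correctness is to be established by a two-directional induction, showing the invariant that a configuration $\langle q, w \rangle$ is accepted by the saturated $P$-automaton if and only if it is reachable in $\aut{P}$ from the initial set: one direction is an induction on the saturation step at which the accepting path first appeared, the other on the length of the witnessing pushdown computation. The step I expect to be the main obstacle is exactly this polynomial pre-allocation of auxiliary states for push rules. A naive approach that introduced a new intermediate state each time a push rule fires during saturation would cause super-polynomial blow-up, and so would not yield the size bound required here. Allocating one auxiliary state per rule rather than per firing is the key design choice that keeps the $P$-automaton polynomially sized, after which both directions of correctness and the complexity bound follow by routine arguments.
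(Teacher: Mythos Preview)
Your proposal is correct and is precisely the construction the paper has in mind: the paper does not give its own proof of this claim but simply invokes it as a well-known fact with a citation to~\cite{BEM97}. What you have sketched is the standard $\mathrm{post}^*$ saturation procedure (the forward variant of the Bouajjani--Esparza--Maler construction, as refined by Esparza--Hansel--Rossmanith--Schwoon), including the crucial point that one pre-allocates a single auxiliary state per push rule so that the state set is fixed in advance and only transitions are added during saturation. There is nothing to compare here beyond noting that you have spelled out what the paper leaves to the reference.
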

We will now use a standard pumping to reduce the length of $u$.
As said above, a run of $\aut{B}_i$ over the $i$th stack of $\c{v}$ labels each position of $u$ with a state.
Likewise for the automaton of the above claim.
Upon a repetition of a pair of states, a standard pumping applies, as usual.
This completes the proof of Lemma~\ref{lem:np1-proof-step2}.
\end{proofof}

%\subsection{Proof of Theorem~\ref{thm:np2}}
%A straightforward adaptation of the proof of Lemma~\ref{lem:np1-proof-step1} (combined with Lemma~\ref{lem:reduceproblems}).
%Observe that irrelevant symbol occurrences must necessarily be normed, as they do not contribute
%to the target configuration.

%%% Local Variables: 
%%% TeX-master: "main"
%%% End: 

\section{Decidability}
\label{sec:decid}

In this section we provide proofs of Theorems~\ref{thm:decid}, \ref{thm:reach-set} and~\ref{thm:decid_mpda}.

\bigskip

\restate{Theorem~\ref{thm:decid}}{\thmdecid}

\begin{proof}
By virtue of Lemma~\ref{lem:reduceproblems} we may focus on the \singsing\ reachability only.
Fix an \MPDA\ $\aut{A}$ and two configurations $\c{s}$ and $\c{t}$. We will describe
an algorithm to decide whether $\c{s} \leadsto_\aut{A} \c{t}$.
Roughly speaking, our approach is to define a suitable well order compatible with transitions, and then apply
a standard algorithm for reachability of a downward-closure of a configuration\footnote{The algorithm works actually in 
any well-structured transition system under suitable assumptions, see~\cite{FS01}. Theorem~\ref{thm:decid} is in fact a special
case of Theorem 5.5 in~\cite{FS01}.}.
However, to apply the standard framework we need to introduce some additional structure in configurations.
This additional structure will be intuitively described as coloring of symbols, similarly as marking in the proof of 
Lemma~\ref{lem:np1-proof-step1}.

Recall the notion of relevant symbol occurrences, introduced in Section~\ref{sec:proofoflemma}.
The idea of the proof will be based on the observation that removing some irrelevant
symbol occurrences has no impact on reachability of a fixed target configuration
(cf. Lemma~\ref{lem:order} from Section~\ref{sec:proofoflemma}).

Fix the target configuration $\c{t}$. We will define \emph{colored configurations} and modified transitions 
between colored configurations.
% , in order to take into account reachability of the configuration $\c{t}$. 
The basic intuition is that irrelevant symbol occurrences will be colored.
Note however that we don't know in advance which symbol occurrences in a given configuration $\c{s}$
are relevant and which are not, as we do not even know if $\c{s} \leadsto \c{t}$.
Thus a coloring will have to be guessed.

Let $n$ be the number of states of $\aut{A}$ and let $m$ be the size of $\c{t}$.
By a \emph{colored configuration} we mean a configuration with some symbol occurrences colored,
such that the number of uncolored symbol occurrences is smaller than $n + m$.
Formally, coloring is implemented by extending the alphabet of every stack with its colored copy.
We define an ordering on colored configurations:
$\c{r}' \preceq \c{r}$ if 
$\c{r}'$ is obtained from $\c{r}$ by removing some colored symbol occurrences.
(In particular, if $\c{r}' \preceq \c{r}$ then both configurations are identical, when restricted to uncolored symbols.
Thus an uncolored configuration corresponds to a downward-closed set.) 
As the number of uncolored occurrences is bounded, the number of blocks of colored occurrences is bounded likewise.
The ordering $\preceq$ is like a Higman ordering on words (i.e., a word $v$ is smaller than a word $w$ if
$v$ can be obtained from $w$ by removing some letters), extended in the point-wise manner to blocks of colored occurrences. 
Thus one easily shows, using Higman's lemma:
\begin{clm}
The ordering $\preceq$ is a well order on colored configurations, i.e. for any infinite sequence of configurations
$\c{s}_1 \c{s}_2 \ldots$ there are indices $i < j$ such that $\c{s}_i \preceq \c{s}_j$.
\end{clm}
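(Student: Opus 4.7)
The plan is to combine Higman's lemma with Dickson's lemma, exploiting the fact that every colored configuration carries only a bounded amount of uncolored information.

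First, I would formalise the \emph{shape} of a colored configuration $\c{r}$: this consists of its control state together with, for each stack $i \in \{1,\ldots,k\}$, the word of uncolored symbol occurrences appearing on stack $i$, read top to bottom. Since the total number of uncolored occurrences across all stacks is strictly less than $n+m$, and the uncolored alphabets are finite, there are only finitely many possible shapes. Consequently, given any infinite sequence $\c{s}_1, \c{s}_2, \ldots$ of colored configurations, an application of the pigeonhole principle yields an infinite subsequence sharing one common shape; it suffices to find $i<j$ within such a subsequence.

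Next I would describe configurations of this fixed shape as tuples of \emph{slot contents}. A colored configuration of a given shape is completely determined by specifying, for each stack $i$ and each of the at most $n+m$ positions around the uncolored occurrences of stack $i$ (before the first, between consecutive ones, and after the last), a finite word over the colored copy of the stack alphabet. The ordering $\preceq$ restricted to configurations of that shape is precisely the pointwise extension of the scattered-subword (Higman) ordering on each slot: $\c{r}' \preceq \c{r}$ holds iff, in every slot, the colored word of $\c{r}'$ is a scattered subword of the colored word of $\c{r}$, while the uncolored skeleton is unchanged.

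Now I invoke the standard WQO machinery. By Higman's lemma, the scattered-subword ordering on words over each finite colored alphabet is a well-quasi-order. By Dickson's lemma (closure of WQOs under finite products), the pointwise extension over the finitely many slots is also a WQO. Hence in the chosen infinite subsequence of common shape there must exist $i<j$ with $\c{s}_i \preceq \c{s}_j$, proving the claim.

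The only delicate point, and the step I expect to require the most care in a fully formal write-up, is the bookkeeping that guarantees that the number of slots (hence the arity of the product to which Dickson's lemma is applied) is bounded uniformly by a constant depending only on $\aut{A}$ and $\c{t}$. This is ensured by the built-in cap $n+m$ on uncolored occurrences in the definition of colored configuration, which is precisely why that cap was imposed in the first place.
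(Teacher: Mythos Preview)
Your proposal is correct and follows exactly the approach the paper sketches: the paper remarks that $\preceq$ is ``like a Higman ordering on words \ldots\ extended in the point-wise manner to blocks of colored occurrences,'' and your argument (finite pigeonhole on the uncolored skeleton, then Higman on each colored block, then closure of WQOs under finite products) is precisely the intended unpacking of that sentence. The paper gives no further detail, so your write-up is already more complete than the original.
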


Now we define the transition rules for colored configurations.
Consider any original transition rule $\delta$ of $\aut{A}$. This transition rule will give rise to a number of 
new transition rules that will be applicable to colored configurations.
One new transition is obtained by coloring all symbols in $\delta$, i.e., both the left-hand side symbol 
and all the right-hand side symbol occurrences.
In all other new transitions arising from $\delta$, the left-hand side symbol is kept uncolored.
On the other hand, an arbitrary subset of the right-hand side symbol occurrences may be colored,
under the following restriction:
\begin{quote}
if the transition $\delta$ does not change state then at least one of right-hand side symbols must be kept uncolored.
\end{quote}
This corresponds to the intuition that uncolored symbol occurrences correspond to relevant ones.

We have thus now two transition systems: the original transition system and the colored one.
The relationship between reachability in these two systems is stated in the following claim 
(recall that the configuration $\c{t}$ is fixed and contains no colored symbols):
\begin{clm}
For any configuration $\c{s}$, $\c{s} \reaches \c{t}$ if and only if
there is some coloring $\c{s}'$ of $\c{s}$ such that $\c{s}' \reaches \c{t}$.
\end{clm}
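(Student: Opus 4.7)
The plan is to prove both implications by leveraging the notion of \emph{relevant} symbol occurrences from Section~\ref{sec:proofoflemma}. The ``if'' direction should be essentially free: given a colored run from $\c{s}'$ to $\c{t}$, I would simply forget all colors. Every colored transition rule is, by construction, obtained from some original rule of $\aut{A}$ (either by coloring everything or by coloring the right-hand side selectively), and since $\c{t}$ has no colored symbols, forgetting colors yields a genuine path witnessing $\c{s} \reaches \c{t}$.

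For the converse, given a path $\pi \colon \c{s} = \c{s}_0 \trans \c{s}_1 \trans \ldots \trans \c{s}_N = \c{t}$ in the original system, the plan is to color, in every $\c{s}_i$, precisely those symbol occurrences that are irrelevant with respect to $\pi$ and $\c{t}$, and let $\c{s}'$ be the resulting coloring of $\c{s}_0$. First I would verify that each $\c{s}_i$ is then a valid colored configuration: its uncolored occurrences are exactly the relevant ones, whose number is bounded by $m$ (each occurrence in $\c{t}$ contributes at most one ancestry chain) plus the number of state-changing transitions along $\pi$, which by weakness is at most $n-1$; together this gives at most $n+m-1 < n+m$. Next I would show that every step $\c{s}_i \trans \c{s}_{i+1}$, induced by a rule $\delta$ with popped occurrence $\s{X}$, projects to a legal colored transition. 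If $\s{X}$ is irrelevant, then by definition all descendants of $\s{X}$ are irrelevant; in particular the fresh right-hand side occurrences created in $\c{s}_{i+1}$ are all colored, and the all-colored version of $\delta$ applies. If $\s{X}$ is relevant, then the left-hand side stays uncolored, and exactly the irrelevant fresh occurrences are colored on the right.

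The main obstacle I expect is verifying the side condition attached to the second kind of colored rule: whenever $\delta$ does not change state and the left-hand side is uncolored, at least one right-hand side occurrence must also be uncolored. This is precisely the point at which the definition of the colored transition system is calibrated to the notion of relevance, and it is where the argument is not purely routine. The intended reasoning is as follows: because $\delta$ does not change state, the relevance of $\s{X}$ cannot be witnessed by $\s{X}$ itself (it is popped and no state change occurs), so $\s{X}$ must have a \emph{strict} relevant descendant; every strict descendant of $\s{X}$ lies below one of the fresh right-hand side occurrences in $\c{s}_{i+1}$, and such a fresh occurrence therefore inherits relevance and remains uncolored. Once this calibration step is established, the remaining verifications—bounds on uncolored occurrences, correctness of the initial coloring, and the fact that the final configuration is literally $\c{t}$—are straightforward bookkeeping.
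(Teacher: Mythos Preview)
Your proposal is correct and follows exactly the approach the paper takes: the ``if'' direction by erasing colors to recover the underlying original transitions, and the ``only if'' direction by coloring precisely the irrelevant symbol occurrences along the given path. Your careful verification of the side condition (that a state-preserving transition with relevant left-hand side must leave some right-hand side occurrence uncolored) and of the bound $n+m-1 < n+m$ on uncolored occurrences spells out details the paper leaves implicit, but the underlying argument is identical.
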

Indeed, the only if direction is obtained by coloring precisely irrelevant symbol occurrences in $\c{s}$.
The if direction also follows immediately, by replacing the colored transitions with their
uncolored original transitions.

Basing on the above claim, the algorithm for $\c{s} \reaches \c{t}$ simply guesses a coloring
$\c{s}'$ of $\c{s}$ and then checks if $\c{s}' \reaches \c{t}$ in the colored transition system.
It thus only remains to show that the latter problem is decidable.
For this we will need a compatibility property of the colored transitions with 
respect to the well order: 
\begin{clm}
For every colored configurations $\c{r}', \c{r}$ and $\c{u}$,
if $\c{r}' \preceq \c{r} \trans \c{u}$ then 
\begin{iteMize}{$\bullet$}
\item either there is a colored configuration $\c{u}'$ with $\c{r}' \trans \c{u}' \preceq \c{u}$,
\item or $\c{r}' \preceq \c{u}$.
\end{iteMize}
\end{clm}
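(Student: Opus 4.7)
The plan is to prove the compatibility claim by case analysis on the fate of the symbol occurrence consumed by the transition $\c{r} \trans \c{u}$. Let $\delta$ be the colored transition rule used, popping some $\s{X}$ from the top of stack $i$ in $\c{r}$ and pushing some (possibly colored) sequences on the stacks. Because $\c{r}' \preceq \c{r}$ is obtained by deleting only colored occurrences and $\s{X}$ is the top of stack $i$, the occurrence of $\s{X}$ in $\c{r}'$ is either still present as the top of stack $i$, or was among the removed colored occurrences. I will treat these two cases separately and expect Case B to be the delicate one.

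In Case A (the popped $\s{X}$ is preserved in $\c{r}'$), I would fire the exact same colored transition $\delta$ from $\c{r}'$, obtaining some $\c{u}'$. The key observation is that the symbols freshly pushed in $\c{r}' \trans \c{u}'$ are identical, colors included, to those pushed in $\c{r} \trans \c{u}$; hence the set of colored occurrences that must be deleted to pass from $\c{u}$ to $\c{u}'$ is precisely the one that passes from $\c{r}$ to $\c{r}'$ (restricted to occurrences other than the popped $\s{X}$, which is absent from both $\c{u}$ and $\c{u}'$). This gives $\c{r}' \trans \c{u}' \preceq \c{u}$. I would briefly check that $\c{u}'$ is a legal colored configuration, which is automatic because $\c{u}' \preceq \c{u}$ preserves the set of uncolored occurrences, and $\c{u}$ already satisfies the bound $n+m$.

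In Case B (the popped $\s{X}$ was removed in $\c{r}'$), the symbol $\s{X}$ must have been colored in $\c{r}$. By the definition of colored transition rules, when the left-hand side of $\delta$ is colored the rule is the ``all colored'' variant, so every symbol occurrence pushed by $\c{r} \trans \c{u}$ is colored. I would then verify $\c{r}' \preceq \c{u}$ directly by exhibiting the deletion: starting from $\c{u}$, first delete all the freshly pushed (colored) occurrences, which recovers $\c{r}$ with the $\s{X}$ removed; then delete the remaining colored occurrences that were originally deleted to pass from $\c{r}$ to $\c{r}'$. The outcome is exactly $\c{r}'$.

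The main obstacle, and the only place the design of the colored transition system really pays off, is Case B: if the colored transition rule had been allowed to push uncolored occurrences despite the left-hand side being colored, the freshly created uncolored symbols in $\c{u}$ would block the desired inclusion $\c{r}' \preceq \c{u}$. Everything else is bookkeeping, so the work really reduces to matching the two cases above against the definition of colored transitions.
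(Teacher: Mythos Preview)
Your proposal is correct and essentially matches the paper's argument. The paper organizes the case split as ``popped symbol uncolored'' versus ``popped symbol colored (then present in $\c{r}'$ or not)'', while you split first on presence in $\c{r}'$ and infer coloredness in Case~B; the content is identical, and your explicit check that $\c{u}'$ still satisfies the bound on uncolored occurrences is a detail the paper leaves implicit.
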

In other words, $\preceq$ is a variant of backward simulation with respect to $\trans$.
Indeed, if the symbol occurrence involved in $\c{r} \trans \c{u}$ is uncolored,
the transition may be also fired from $\c{r}'$.
Otherwise, suppose that the symbol occurrence involved in $\c{r} \trans \c{u}$ is colored
(recall that in this case all fresh symbol occurrences are colored). 
If this occurrence appears also in $\c{r}'$, it may be fired similarly as above.
On the other hand, if this occurrence does not appear in $\c{r}'$, we have $\c{r'} \preceq \c{u}$,
as required.

Using the last claim we easily show decidability.
The algorithm explores exhaustively a portion of the tree of colored configurations reachable from $\c{s}'$,
with the following termination condition.
As the ordering $\preceq$ is a well order, we know that on every path eventually
two colored configurations appear, say $\c{u}'$ and $\c{u}$, such that $\c{u}'$ precedes $\c{u}$ and
$\c{u}' \preceq \c{u}$. Such a pair we call \emph{domination pair}.
Whenever a domination pair is found on some path, the algorithm stops lengthening this path.
The well order guarantees thus that the algorithm terminates, after computing a finite tree of colored configurations
(finiteness of the tree follows by K{\"o}nig's lemma).
The algorithm answers 'yes' if the configuration $\c{t}$ appears in the tree.

Now we prove correctness of the algorithm.
Towards contradiction, suppose $\c{t}$ is reachable from $\c{s}'$ but $\c{t}$ is not found in the tree.
Consider the shortest path $\pi$ from $\c{s}'$ to $\c{t}$, and the domination pair $\c{u}' \preceq \c{u}$ on that path.
Thus $\c{u} \reaches \c{t}$.
Using the compatibility condition, we deduce that $\c{u} \reaches \c{t}$ implies 
$\c{u}' \reaches \c{t}'$ for some $\c{t}' \preceq \c{t}$, along a path not longer that the path from $\c{u}$ to $\c{t}$. 
By the definition of $\preceq$ we obtain $\c{t}' = \c{t}$.
Thus the fragment of path $\pi$ from $\c{s}'$ to $\c{u}'$, composed with the path from $\c{u}'$ to $\c{t}$ yields
a path strictly shorter than $\pi$, a contradiction.
\end{proof}

\bigskip

\restate{Theorem~\ref{thm:reach-set}}{\thmreachset}

\begin{proof}
Consider a \strnormed\ \MPDA, and a regular set $L$ of configurations.
Denote by $B$ the backward reachability set of $L$. We aim at proving that $B$ is regular.
We do not claim however that a representation of this regular set may be effectively computed in general.

The proof is in two steps. In the first one we introduce and justify a simplification of the problem.
As the second step, we prove regularity of the backward reachability in the simplified case.

\smallsection{Restriction to fully active paths.}
%For control states $q$, 
Let $L_q$ (resp.~$B_q$) denote restrictions of $L$ (resp.~$B$) 
to configurations with control state $q$.
It is enough to prove regularity of the set
\[
B_{p,q} \ = \ \{ \c{s} \in B_p : \c{s} \reaches  L_q \} 
\]
for every pair of states $p, q$, due to the following equalities:
\[
B_p = \bigcup_q B_{p,q} \qquad \qquad B = \bigcup_p B_p 
\]
and closure of regular languages under finite unions.
Fix two control states $p$ and $q$ from now on. 
Denote by $\al{S}_q$ the set of all configurations with the control state $q$.

%As all the configurations from $B_p$ (resp.~$L_q$) have the same fixed control state, we will treat them as stateless configurations whenever convenient.
% That is, we may think of $B_p$ (resp.~$L_q$) as a  subset of
%$\al{S} = \al{S}_1^* \times \ldots \times \al{S}_k^*$, the set of all tuples over respective stack alphabets,
%$k$ being the number of stacks.

%\begin{align*} \label{eq:split}
%L_q & = \{ \confless{\a_1}{\ldots}{\a_k} : \config{q}{\a_1}{\ldots}{\a_k} \in L \} \\
%B_q & = \{ \confless{\a_1}{\ldots}{\a_k} : \config{q}{\a_1}{\ldots}{\a_k} \in B \},
%\end{align*}

Recall that the set $L_q$ is recognized by some monoid homomorphism, i.e.,
there is $h : \al{S}_q \to M$ for some finite monoid $M$ and for every $q$, $L_q = h^{-1}(N_q)$ for some subset $N_q \subseteq M$. Note that, formally speaking, the state $q$ is a part of every configuration of $L_q$, hence
the domain of the homomorphism are all configurations with state $q$.
Thus formally speaking the homomorphism respects the multiplication of configurations defined as:
\[
\config{q}{\a_1}{\ldots}{\a_k} \cdot \config{q}{\b_1}{\ldots}{\b_k} \ = \
\config{q}{\a_1 \b_1}{\ldots}{\a_k \b_k},
\]
for $k$ the number of stacks.
As $L_q$ is recognized by $h$, we have
\begin{align} \label{eq:rownanie}
L_q = \bigcup_{m, n} h^{-1}(m) \,  h^{-1}(n),
\end{align}
where $m$ and $n$ range over pairs of elements of $M$ satisfying $m \cdot n \in N_q$.

A path from a configuration $\c{s}$ to some target configuration is \emph{fully active} if some descendant of every symbol occurrence
in $\c{s}$ is involved in some transition.
For $m \in M$, denote by $F_{m}$ the set of all configurations $\c{s}$ with control state $p$ such that
there is a fully active path from $\c{s}$ to some configuration $\c{t}$ in $h^{-1}(m)$
(the control state of $\c{t}$ is necessarily $q$, by the definition of $h$).
We claim that the set $B_{p,q}$ is equal to the following union:
\begin{align*} \label{eq:split}
B_{p,q} \ = \ \bigcup_{m, n} F_{m} \, \, h^{-1}(n),
\end{align*}
with $m, n$ ranging over all pairs $m, n \in M$ such that $m \cdot n \in N_q$, similarly as in~\eqref{eq:rownanie}.

Indeed, if a configuration $\c{s}$ belongs to $ F_{m} \, \, h^{-1}(n)$ for some $m \cdot n \in N_q$, then 
%by a fully active path 
it can reach a configuration in $ h^{-1}(m) \,  h^{-1}(n) \subseteq L_q$.
On the other side, if some configuration $\c{s}$ with control state $p$ 
can reach a configuration $\c{t} \in L_q$, we
can split $\c{s}$ and $\c{t}$ into two parts 
\begin{align*}
\c{s} = \c{s}_1 \cdot \c{u} \qquad \c{t} = \c{t}_1 \cdot \c{u}
\end{align*}
such that
\begin{iteMize}{$\bullet$}
  \item considering the transitions on the path $\c{s} \reaches \c{t}$, they form a fully active path
from $\c{s}_1$ to $\c{t}_1$,
  \item the symbol occurrences appearing in $\c{u}$ do not make any transition.
\end{iteMize}
Therefore, for $m = h(\c{t}_1)$ we obtain $\c{s_1} \in F_m$.
In consequence $\c{s} \in F_m \, \, h^{-1}(n)$ for some $m, n \in M$ such that $m \cdot n \in N_q$, as required.

We have thus demonstrated that we can safely restrict to fully active paths.

\smallsection{The proof for fully active paths.}
Under the restriction to fully active paths, the proof is fairly easy.
Let $\aut{A}$ be an \MPDA\ and let $L$ be a regular set of configurations.
% Recall that we may assume an \MPDA\ to be \stateless\ and \strnormed.

The \emph{bottom-fixed Higman ordering} over words relates $w'$ and $w$
if the words have the same last letter (recall that the last letter corresponds to the bottom element of a stack), and the remaining prefixes of $w'$ and $w$ are related by the ordinary Higman ordering.
Order the configurations as follows: the order, denoted by $\preceq$, relates two configurations if
\begin{iteMize}{$\bullet$}
\item the control states are the same, and
\item the stack contents are related by the point-wise extension of the bottom-fixed Higman ordering.
\end{iteMize}
Observe that this order  is a well order.

\begin{clm}\label{cl:fully_active_closed}
Under the restriction to fully active paths, the backward reachability set of a regular set $L$
is upward closed with respect to $\preceq$.
\end{clm}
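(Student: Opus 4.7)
The plan is to lift the given fully active witness from $\c{s}$ to $\c{s}'$, eliminating the extra symbols as they surface at the top of some stack, by means of the canceling sequences furnished by \strnormedness. More precisely, assume $\c{s}\preceq \c{s}'$ and that $\c{s}$ admits a fully active path $\pi$ to some target $\c{t}$ with $h(\c{t})=m$. I will construct a path from $\c{s}'$ to exactly the same $\c{t}$, and verify that it is still fully active, which suffices.

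The first ingredient is a basic consequence of full activeness: every symbol occurrence of $\c{s}$ is itself eventually popped along $\pi$. Indeed, a descendant of a symbol occurrence is created only when the occurrence is popped, so a never-popped occurrence would have only itself as a descendant, and the fully active condition would then force that occurrence to be involved in some transition, a contradiction. The bottom-fixed clause of $\preceq$ tells us that $\c{s}'$ arises from $\c{s}$ by inserting finitely many \emph{extra} occurrences strictly above the bottom of each stack, while the bottom elements coincide. Since $\pi$ pops every original occurrence of $\c{s}$, during a faithful replay from $\c{s}'$ every such extra will eventually be uncovered at the top of its stack.

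The technical heart of the argument is a cancellation lemma that I would isolate first. For every stack symbol $X\in\al{S}_i$ and every state $q$, \strnormedness\ provides a state-preserving transition sequence $\pi_{q,X}$ leading from the configuration with $X$ alone on stack $i$ and all other stacks empty to the fully empty configuration. Since each rule only inspects the top of a single stack, the very same $\pi_{q,X}$ fires verbatim from any configuration in which $X$ sits on top of stack $i$, leaving the material strictly below $X$ and on the other stacks untouched; the key point is that $\pi_{q,X}$ is self-contained, in the sense that it pops everything it pushes. Armed with this, I simulate $\pi$ from $\c{s}'$ one transition at a time: whenever the next rule of $\pi$ demands popping from some stack but the top of that stack in the current replay configuration is an extra $E$, I first splice in $\pi_{q,E}$ (for $q$ the current state) to erase $E$ together with its descendants; this is legal by the cancellation lemma and restores the correspondence with the original run. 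Processing all of $\pi$ in this way, every extra is eventually canceled and the simulation lands exactly on $\c{t}\in h^{-1}(m)$.

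Full activeness of the constructed path is then immediate: each original occurrence of $\c{s}$ still plays its role from $\pi$, and each extra occurrence of $\c{s}'$ is directly involved in the very transition used to cancel it. The main obstacle is the cancellation lemma, which requires verifying that $\pi_{q,X}$ synthesized from the pure configuration remains a valid run when live material produced by earlier steps of the simulation sits below $X$ on stack $i$ or on any other stack; this reduces to locality of individual transitions together with the self-containedness of $\pi_{q,X}$, after which the rest of the argument is pure book-keeping.
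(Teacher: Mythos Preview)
Your proposal is correct and follows essentially the same approach as the paper: lift the fully active path from the smaller configuration to the larger one by splicing in canceling sequences (available by \strnormedness) for the extra symbols as they surface, using the bottom-fixed condition to guarantee that every extra sits above some original symbol and hence eventually becomes top-most. Your write-up is considerably more detailed than the paper's two-line sketch---in particular you make explicit that full activeness forces every original symbol to be popped, you spell out the locality argument for why canceling sequences transplant to arbitrary contexts, and you verify that the resulting path is again fully active---but the underlying idea is identical.
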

Indeed, assuming $\c{s'} \preceq \c{s}$ and $\c{s'} \leadsto \c{t} \in L$, one deduces
$\c{s} \leadsto \c{t}$ by applying the canceling sequences.
These sequences are applicable to some descendant of every symbol occurrence in $\c{s}$, as the path is fully active. (It is here where  the \strnormedness\ assumption is needed.)
Note the reason why we have chosen the bottom-fixed variant of Higman ordering:
this prevents $\c{s}$ to have some additional symbols below some bottom symbol of $\c{s'}$.

By the above claim, the backward reachability set is determined by the minimal configurations with respect to $\preceq$.
As $\preceq$ is a well order % By the fact that order is $WQO$\todo{Not because of Higman lemma but because it is WQO}
 there is only finitely many minimal configurations, and thus the backward reachability set is regular.
%
% \smallsection{Effectively computable backward reachability sets}
% Assume additionally an \MPDA\ to be \weak.
% For effectivity, we inspect the proofs of Lemmas~\ref{lem:reduceproblems} and~\ref{lem:np1-proof-step2} and conclude
% that all the minimal elements are of polynomial size with respect to the size of $L$:
% indeed, if $\c{s} \leadsto L$ then $\c{s}' \leadsto L$ for some $\c{s}' \preceq \c{s}$ of polynomial size. It is important to notice that
% the path remains fully active while decreasing the size of the source configuration.
% %, in the proof of Lemma~\ref{lem:reduceproblems}.
% With this, the algorithm determines the minimal elements by inspecting exhaustively all configurations 
% $\c{s}$ of polynomially bounded size, and checks
% for every of them if $\c{s} \leadsto L$. The restriction to only fully active paths
% may be imposed by a simple encoding.
%
% The procedure may be implemented in exponential time.
\end{proof}

\bigskip

\restate{Theorem~\ref{thm:decid_mpda}}{\thmdecidmpda}

\begin{proof}
We are going to use Theorem~\ref{thm:reach-set}.
Consider two regular sets $L, K$ of configurations. The decision procedure for $L \reaches K$ runs two
partial procedures. The positive semi-procedure searches for a witness for reachability, by an exhaustive enumeration of all finite paths starting in some configuration of $L$ and ending in some configuration of $K$.
The core of the difficulty lies in the negative semi-procedure, to be described below.

The negative semi-procedure searches for a \emph{separator}.
Knowing that the backward reachability set of $K$ is regular, we define separator
as any regular set $M$ of configurations satisfying the following conditions:
\begin{iteMize}{$\bullet$}
\item $K \subseteq M$,
\item $L \cap M = \emptyset$,
\item $M$ is backward closed: whenever $\c{s} \trans M$ then $\c{s} \in M$.
\end{iteMize}
Clearly, if $L \not\reaches K$ then the backward reachability set of $K$ satisfies all the three conditions above.
By Theorem~\ref{thm:reach-set} we know that the backward reachability set is regular, and thus 
\[
L \not\reaches K \qquad \text{ iff } \qquad \text{there is some separator } M.
\]
The negative semi-procedure enumerates all regular sets $M$, and for every of them checks the three conditions above. The first two are easily shown decidable as boolean operations on regular languages yield regular languages and are effectively computable.
Decidability of the third condition follows from an observation that if $M$ is regular than
the predecessor set $\{ \c{s} : \c{s} \trans M \}$ is also regular, and may be effectively constructed.
\end{proof}

%%% Local Variables: 
%%% TeX-master: "main"
%%% End: 

\section{Undecidability}
\label{sec:undecid}

Here we provide proofs of Theorems~\ref{thm:undecid} and~\ref{thm:undecid-relaxed}.

\bigskip 

\restate{Theorem~\ref{thm:undecid}}{\thmundecid}

\begin{proof}
We start by considering \stateless\ \unnormed\ MPDAs.
We reduce the problem of checking if the intersection of two context-free languages is empty.

Assume two context-free grammars in Greibach Normal Form over an input alphabet $A$. 
We will construct an \MPDA\ with three stacks.   % \slcomm{a dla dwoch stosow?} 
Two stacks will be used to simulate derivations of the two grammars, 
and the other third stack will be used for storage of the input word. Formally,
the alphabet of the first and second stack are the nonterminals of the two grammars, and
the alphabet of the third stack contains two symbols $a_1$ and $a_2$ for every terminal symbol $a$ of the grammars.
For every production 
\begin{equation} \label{eq:prod}
\s{X} \to a \, \alpha
\end{equation}
of the first grammar, there will be a transition
\[
\s{X} \ \trans \ \a, \, \eps, \, a_1
\]
that drops $\a$ on the first stack and $a$ on the third one. Likewise, for every production~\eqref{eq:prod}
of the second grammar, there is a transition
\[
\s{X} \ \trans \ \eps, \, \a, \, a_2 .
\]
The initial configuration is $\confless{\s{X}_1}{\s{X}_2}{\eps}$, where $\s{X}_i$ is the initial symbol of the
$i$th grammar.
Finally, the regular language $L$ of target configurations constraints 
the first two stacks to be empty, and the third one to:
\[
\{ a_1 a_2 : a \in A \}^* .
\]
One easily verifies that the intersection of the two grammars is nonempty if and only if
some configuration from $L$ is reachable from the initial configuration.

Now we turn to \weak\ \normed\ \MPDAs.
It turns out that \normedness\ assumption does not make reachability problem easier, in case of \weak\ automata.
Indeed, the case of  \stateless\ \unnormed\ \MPDAs\ easily reduces to the case of \weak\ \normed\ \MPDAs.
It is sufficient to add an additional sink state, and for every symbol $\s{X}$ two additional transitions, to enforce \normedness.
The first one allows $\s{X}$ to change state to the sink state. The other one allows $\s{X}$ to disappear
in the sink state.
(This is in fact a reduction of the whole case of \weak\ \unnormed\ \MPDAs.)
\end{proof}

\bigskip

\restate{Theorem~\ref{thm:undecid-relaxed}}{\thmundecidrelaxed}

\begin{proof}
The proof is    % We show Theorem~\ref{thm:undecid-relaxed} 
by reduction from the Post Correspondence Problem (PCP).
For a given instance of PCP, consisting of a finite set of pairs $(s_i, t_i)$ of words,
$i \in \{1 \ldots n\}$, we construct a \stateless\ \strnormed\ \MPDA\ $\aut{A}$
and a relaxed-regular set $L$ such that the PCP instance has a solution 
\[
s_{i_1} \, s_{i_2} \ldots s_{i_k} \ \ = \ \  t_{i_1} \, t_{i_2} \ldots t_{i_k} \qquad
(i_j \in \{1 \ldots n\} \text{ for } j \in \{1 \ldots k\})
\]
if and only if there exists a path from the initial configuration of $\aut{A}$ to $L$.
Roughly speaking, a run of $\aut{A}$ will simply guess a PCP solution, and the target language $L$ will be used
to check its correctness. 

The main difficulty to overcome is the \strnormedness\ requirement, which implies that every symbol
may always disappear and not contribute to the target configuration.

\smallsection{Half-solution.}
We start by restricting to only the left-hand side words $s_i$ of the PCP instance.
We will construct an \MPDA\ $\aut{A}_1$,  and a relaxed-regular language $L_1$ of configurations,
so that the reachable configurations of $\aut{A}_1$ belonging to $L_1$ are essentially of the form (two stacks):
\begin{align} \label{eq:idea}
(i_1 \, i_2 \ldots i_k, \ \ s_{i_1} \, s_{i_{2}} \ldots s_{i_k}).
\end{align}
In other words, one of stacks contains the sequence of indexes, and the other one contains the concatenation of the corresponding
words $s_i$.

For technical reasons we will however need four stacks and few auxiliary nonterminal symbols.
%Later we combine two such half-solutions.
The nonterminals of $\aut{A}_1$ are following (superscripts indicate the stack number of every nonterminal):
\begin{iteMize}{$\bullet$}
  \item $\s{G}^1$  %, \s{G'}^1$ 
  and $\s{G}^4$,  %, \s{G'}^4$,    
  used for 'guarding' symbols on their stacks, as described below;
  \item $i^1$ and $i^2$, for $i \in\{1 \ldots n\}$,   representing the $i$th word $s_i$; 
  \item $a^3$ and $a^4$, for $a \in \Sigma$, representing alphabet letters of the PCP instance.
\end{iteMize}

The initial configuration is $(\s{G}^1, \ \eps, \ \eps, \ \s{G}^4)$. 

For a word $w = a_1 a_2 \ldots a_{m} \in \Sigma^*$, we write
$w^3$ to mean the word $a_1^3 a_2^3 \ldots a_{m}^3$.
Likewise for $w^4$.
The transition rules of $\aut{A}_1$ are the following. For $i \in \{1 \ldots n\}$, there are rules:
\begin{align*}
\s{G}^1 \ & \trans \ \s{G}^1 \, i^1, \ \eps, \ s_i^3, \  \eps  &
\s{G}^4 \ & \trans \ \eps, \ i^2, \ \eps, \ \ \s{G}^4 \, s_i^4 .
\end{align*}
Additionally, to fulfill the \strnormedness\ restriction we add \emph{disappearing transition rules} of the form 
$
X \ \ \trans \ \ \eps, \ \eps, \ \eps, \ \eps
$
for all nonterminal symbols. 

The target set $L_1$ is defined to contain all configurations of the form
\[
(\s{G}^1 \, \alpha_1, \ \alpha_2, \ \alpha_3, \ \s{G}^4 \, \alpha_4),
\]
with $\alpha_1$ almost equal to $\alpha_2$ and $\alpha_3$ almost equal to $\alpha_4$.
By 'almost equal' we mean equality modulo (ignoring) the superscripts.
The set $L_1$ is clearly relaxed-regular.

Let's analyze possible ways of reaching a configuration from $L_1$.
Surely $\s{G}^1$ and $\s{G}^4$ cannot fire the disappearing transitions, because their presence is required by $L_1$.
As $\s{G}^1$ and $\s{G}^4$ are always top-most on their stacks, all other symbols on these stacks
are 'guarded' -- they can not fire a disappearing transition neither.
A key observation is that no symbol from other two stacks could fire a disappearing transition:
%Intuitively, a disappearing transition would result in either $\alpha_2$ smaller than $\alpha_1$, or
%$\alpha_3$ smaller than $\alpha_4$, according to the subword ordering.
\begin{lem}
Every path from the initial configuration to $L$ contains no disappearing transitions.
\end{lem}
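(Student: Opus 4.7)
The plan is to combine a ``guard'' argument for stacks~$1$ and~$4$ with a counting argument for stacks~$2$ and~$3$. First I would establish that $\s{G}^1$ (resp.~$\s{G}^4$) stays on top of stack~$1$ (resp.~stack~$4$) throughout any path reaching $L_1$. Both guards are required on top in the target. Other than their disappearing transitions, the only rule that pops $\s{G}^1$ from stack~$1$ is rule~$1$, which immediately pushes $\s{G}^1$ back as the new top; symmetrically for $\s{G}^4$ and rule~$2$. A disappearing transition on a guard, on the other hand, makes it vanish permanently, as no rule can reintroduce it. Since the target demands the guard to be present, no such disappearing transition may fire along an accepting path, and consequently no symbol below a guard on stacks~$1$ or~$4$ ever becomes top-most. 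Thus no disappearing transition fires on stacks~$1$ or~$4$.

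Next I would count. Let $r_1, r_2$ be the numbers of applications of rule~$1$ and rule~$2$, with index sequences $(i_\ell)_{\ell \le r_1}$ and $(j_k)_{k \le r_2}$, and let $d_2, d_3$ be the numbers of disappearing transitions on stacks~$2$ and~$3$. The guard argument gives $\alpha_1 = i_{r_1}^1 \cdots i_1^1$ and $\alpha_4 = s_{j_{r_2}}^4 \cdots s_{j_1}^4$, so $|\alpha_1| = r_1$ and $|\alpha_4| = \sum_k |s_{j_k}|$. On stack~$2$, each rule~$2$ pushes a single symbol $j^2$ and disappearing transitions are the only pops, so $|\alpha_2| = r_2 - d_2$ and $\alpha_2$ is an order-preserving subsequence of $j_{r_2}^2 \cdots j_1^2$. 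On stack~$3$, analogously, $|\alpha_3| = \sum_\ell |s_{i_\ell}| - d_3$.

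Closing the argument, the constraint $\alpha_1 \approx \alpha_2$ (equality modulo superscripts) identifies $(i_1, \ldots, i_{r_1})$ with an order-preserving subsequence $(j_{a_1}, \ldots, j_{a_{r_1}})$ of $(j_1, \ldots, j_{r_2})$, and in particular $|s_{i_\ell}| = |s_{j_{a_\ell}}|$ for each $\ell$. Combined with the length equality $|\alpha_3| = |\alpha_4|$ forced by $\alpha_3 \approx \alpha_4$, this yields
\[
d_3 \;+\; \sum_{k \notin \{a_1, \ldots, a_{r_1}\}} |s_{j_k}| \;=\; 0 .
\]
Under the standard WLOG assumption that each $s_j$ is non-empty (prepend a fresh marker to every $s_j$ in the PCP instance), the only non-negative solution forces $d_3 = 0$ and $\{a_1, \ldots, a_{r_1}\} = \{1, \ldots, r_2\}$, whence $r_1 = r_2$ and $d_2 = r_2 - r_1 = 0$. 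The main obstacle I anticipate is to bridge the index-level constraint $\alpha_1 \approx \alpha_2$ with the letter-level constraint $\alpha_3 \approx \alpha_4$; the bridge is precisely that rule~$1$ with index $i$ pushes $|s_i|$ symbols on stack~$3$, matching what rule~$2$ with the same index pushes on stack~$4$, so the index-level matching controls the length-level matching exactly.
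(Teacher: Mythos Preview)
Your argument is correct. The guard argument for stacks~$1$ and~$4$ is exactly what the paper does (it is stated in the text just before the lemma), so the novelty is in how you handle stacks~$2$ and~$3$. There your route differs from the paper's. The paper assigns weights to symbols---$\size(i^1)=\size(i^2)=|s_i|$ and $\size(a^3)=\size(a^4)=1$---and tracks two semi-invariants $\mbox{SInv}_1=\size(\alpha_1)-\size(\alpha_3)$ and $\mbox{SInv}_2=\size(\alpha_4)-\size(\alpha_2)$. These start at~$0$, are preserved by rules~$1$ and~$2$, are (weakly) increased by disappearing transitions on stacks~$2$ and~$3$, and must sum to~$0$ in~$L_1$ because $\alpha_1\approx\alpha_2$ and $\alpha_3\approx\alpha_4$ force equal weights. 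This yields $d_3 + \sum_{\text{popped }j^2}|s_j|=0$ in one stroke, without analysing the subsequence structure of~$\alpha_2$. Your explicit counting reaches the identical equation via the matching $i_\ell=j_{a_\ell}$ extracted from $\alpha_1\approx\alpha_2$; this is a legitimate but longer detour, though it has the side benefit of already exposing the index correspondence needed for the subsequent corollary. Note also that both arguments tacitly require $|s_i|>0$ to conclude $d_2=0$ (the paper's ``increases'' is only strict under that hypothesis); you make this explicit, which is cleaner.
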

\begin{proof}
The precise proof of this fact needs a certain effort. 
Let us define the \emph{weight} of a nonterminal.
The intuition behind this notion is that it counts for how many letters in words $s_i$ the particular nonterminal is responsible.
The definition is the following:
\begin{iteMize}{$\bullet$}
  \item $\size(\s{G}^1) = \size(\s{G}^4) = 0$
  \item $\size(i^1) = \size(i^2) = \mbox{length}(s_i)$
  \item $\size(a^3) = \size(a^4) = 1$
\end{iteMize}
Weight of a word is defined as the sum of weights of its letters.
Note now that any configuration $\alpha = (\s{G}_1 \, \alpha_1, \ \alpha_2, \ \alpha_3, \ \s{G}_4 \, \alpha_4)$
reachable from $(\s{G}_1, \ \eps, \ \eps, \ \s{G}_4)$ satisfies the following inequalities:
\begin{align*}
\mbox{SInv}_1(\alpha) & \ \ = \ \ \size(\alpha_1) - \size(\alpha_3)   \ \ \geq \ \ 0 \\
\mbox{SInv}_2(\alpha) & \ \ = \ \ \size(\alpha_4) - \size(\alpha_2)   \ \ \geq \ \ 0.
\end{align*}
To see this it is enough to observe this both semi-invariants $\mbox{SInv}_1$ and $\mbox{SInv}_2$ equal $0$ in the initial configuration and 
that they never decrease due to performing a transition.
In particular, every disappearing transition on the second or third stack increases one of the semi-invariants.
Finally, every configuration $\alpha \in L$ satisfies the equality:
\[
\mbox{SInv}_1(\alpha) + \mbox{SInv}_2(\alpha) = 0,
\]
as $\size(\alpha_1) = \size(\alpha_2)$ and $\size(\alpha_3) = \size(\alpha_4)$. 
Therefore both semi-invariants are necessary equal to $0$, and thus there is no possibility
for disappearing transitions to be fired.
\end{proof}
As a conclusion we obtain:
\begin{cor}
Consider any configuration in $L_1$ that is reachable from the initial configuration,
and suppose that its first and forth stacks have the form:
\begin{align*}
\s{G}^1 \, i_1^1 \ldots i_k^1 &&
\s{G}^4 \, a_1^4 \ldots a_m^4 .
\end{align*}
Then it holds:
\begin{align*}
s_{i_1} \, \ldots \, s_{i_k} \ \  = \ \  a_1 \ldots a_m .
\end{align*}
\end{cor}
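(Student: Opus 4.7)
The statement is a bookkeeping consequence of the preceding lemma, which already guarantees that no disappearing transition is fired along any path from the initial configuration to a configuration in $L_1$. Hence every step of such a path uses either the \emph{type-1} rule $\s{G}^1 \trans \s{G}^1 \, i^1, \, \eps, \, s_i^3, \, \eps$ or the \emph{type-2} rule $\s{G}^4 \trans \eps, \, i^2, \, \eps, \, \s{G}^4 \, s_i^4$ for some index $i \in \{1, \ldots, n\}$.

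A straightforward induction on the length of the run then yields the following synchronous invariant. If the type-1 rule has been fired in chronological order with indices $c_1, c_2, \ldots, c_r$, then stack 1 reads $\s{G}^1 c_r^1 c_{r-1}^1 \ldots c_1^1$ from top to bottom, while stack 3 reads $s_{c_r}^3 s_{c_{r-1}}^3 \ldots s_{c_1}^3$ from top to bottom. Analogously, if type-2 has been fired with chronological indices $d_1, \ldots, d_t$, then stack 2 reads $d_t^2 \ldots d_1^2$ and stack 4 reads $\s{G}^4 s_{d_t}^4 \ldots s_{d_1}^4$. Since the two kinds of rules update disjoint pairs of stacks, their firings may freely interleave without disturbing this invariant.

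Now I would apply membership in $L_1$ to a configuration in which stack 1 equals $\s{G}^1 i_1^1 \ldots i_k^1$ and stack 4 equals $\s{G}^4 a_1^4 \ldots a_m^4$. The condition that $\alpha_1$ is almost equal to $\alpha_2$ (i.e.\ equal modulo superscripts) forces $r = t = k$ together with $c_{k-p+1} = d_{k-p+1} = i_p$ for every $p \in \{1, \ldots, k\}$. Substituting into the invariant for stack 4 yields $\alpha_4 = s_{i_1}^4 s_{i_2}^4 \ldots s_{i_k}^4$. Comparing with the hypothesized form $a_1^4 \ldots a_m^4$ and stripping superscripts gives $a_1 a_2 \ldots a_m = s_{i_1} s_{i_2} \ldots s_{i_k}$, which is the claimed equality. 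The remaining condition that $\alpha_3$ is almost equal to $\alpha_4$ is then automatic and contributes nothing further.

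There is no genuine obstacle here: once the preceding lemma has ruled out disappearing transitions, the corollary reduces to an unpacking of the two productive rules. The only point requiring attention is to keep the top-to-bottom stack convention consistent, so that the LIFO reversal inherent in ``chronological order versus stack order'' is applied in the same way to both members of each paired pair of stacks, and the index sequences line up correctly.
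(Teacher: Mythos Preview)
Your proposal is correct and matches the paper's intended approach: the paper states the corollary without proof, presenting it as an immediate consequence of the preceding lemma (no disappearing transitions along any accepting path), and your argument simply spells out the routine bookkeeping that the paper leaves implicit. Your handling of the stack-ordering conventions and the matching of the index sequences via the $L_1$ constraint $\alpha_1 \approx \alpha_2$ is accurate.
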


\smallsection{Complete solution.}
Similarly as above, one may construct an \MPDA\ $\aut{A}_2$ and a language $L_2$ for the right-hand side words $t_i$ 
from the PCP instance. Essentially (i.e., ignoring the technical details) 
the reachable configurations of $\aut{A}_2$ intersected with $L_2$ are
(cf.~\eqref{eq:idea}):
\begin{align*}
(i_1 \, i_2 \ldots i_k, \ \ t_{i_1} \, t_{i_2} \ldots t_{i_k}).
\end{align*}
% The words $t_i$ are represented by nonterminals $\s{T}_i^1$ and $\s{T}_i^2$, similarly as the words $s_i$ were represented above.
Our final solution is to appropriately combine both \MPDAs\ and both languages.

The \MPDA\ $\aut{A}$ is obtained by merging $\aut{A}_1$ and $\aut{A}_2$, but the first stacks are identified.
Thus $\aut{A}$ will have seven stacks altogether.
In particular, symbols $i^1$ and $i^2$ represent now the $i$th pair $(s_i, t_i)$.
All transitions are exactly as described above, however with a different numbering of stacks.
The language $L$ imposes the requirements of $L_1$ and $L_2$, and additionally requires that 
the fourth stack of $\aut{A}_1$  is almost equal to the fourth stack of $\aut{A}_2$.

For describing the missing details we have to fix a new numbering of stacks. Let the first four stacks
correspond to the stacks of $\aut{A}_1$, and the remaining three stacks correspond to the stacks of $\aut{A}_2$
different than the first one.
The initial configuration of $\aut{A}$ is
\[
(\s{G}^1, \ \eps, \ \eps, \ \s{G}^4, \ \eps, \ \eps, \ \s{G}^7) .
\]
Except for the disappearing transitions, $\aut{A}$ has  the following transition rules:
\begin{align*}
\s{G}^1 \ & \trans \ \s{G}^1 \, i^1, \ \eps, \ s_i^3, \  \eps, \ \eps, \ t_i^6, \  \eps   \\
\s{G}^4 \ & \trans \ \eps, \ i^2, \ \eps, \ \ \s{G}^4 \, s_i^4, \ \eps, \ \eps, \  \eps \\
\s{G}^7 \ & \trans \ \eps, \ \eps, \ \eps , \ \eps, \ i^5,  \  \eps, \ \ \s{G}^7 \, t_i^7 .
\end{align*}
The language $L$ contains configurations of the form:
\[
(\s{G}^1 \, \alpha_1, \ \alpha_2, \ \alpha_3, \ \s{G}^4 \, \alpha_4, \ \alpha_5, \ \alpha_6, \ \s{G}^7 \, \alpha_7)
\]
satisfying the following almost equalities:
\[
\alpha_1 \ = \ \alpha_2 = \alpha_5 \qquad
\alpha_3 \ =  \ \alpha_4 = \alpha_6 \ = \ \alpha_7.
\]
One can easily observe that $L$ is reachable from the initial configuration
if and only if the PCP instance has a solution, using exactly the same techniques as before.
\end{proof}

\section{Conclusions} \label{sec:conc}

In this paper we have investigated decidability and complexity of the reachability problem for multi-stack pushdown automata.
As the model is undecidable in general, we have mostly focused on the case of weak control states.
We have provided an almost complete map of complexity results for the reachability problem, depending on
the natural restrictions of source and target sets of configurations, and on \normedness\ assumption.
Few remaining open cases are listed below:
\begin{iteMize}{$\bullet$}
\item The most interesting open question is the exact complexity of \singsing\ reachability problem for 
(\normed\ and \unnormed) \weak\ \MPDAs. We only know that the problem is NP-hard. Note that the \regsing\ reachability has the
same complexity.
\item Another interesting question is whether three stacks are necessary for undecidability of \singreg\ reachability problem
for \normed\ \weak\ \MPDA. In other words, we do not know the status of the problem when there are just two stacks.
\item A similar question is still open for the cases when reachability is in \NP: is the problem still \NP-complete if one restricts the number of stacks to two?
\item We do not know exact complexity of reachability for \strnormed\ \MPDA.
\item Finally, we do not know whether the construction of backward reachability set of a regular set for strongly normed \MPDAs,
even in the weak case, could be performed effectively.
\end{iteMize}

%%% Local Variables: 
%%% TeX-master: "main"
%%% End: 

\smallsection{Acknowledgements.}
We are grateful to anonymous reviewers for careful reading and many valuable comments.

\bibliographystyle{plain}
\bibliography{citat}

\end{document}